\definecolor{mygreen}{RGB}{28,172,0}
\definecolor{mylilas}{RGB}{170,55,241}
\def\BibTeX{{\rm B\kern-.05em{\sc i\kern-.025em b}\kern-.08em
		T\kern-.1667em\lower.7ex\hbox{E}\kern-.125emX}}
\newtheorem{corollary}{Corollary}
\newtheorem{proposition}{Proposition}
\newtheorem{remark}{Remark}
\newtheorem{keywords}{Keywords}
\def\@degree{\@latex@error{No \noexpand\degree given}\@ehc}
\begin{document}
	\title{Performance Analysis of I/Q Imbalance with Hardware Impairments over
		Fox's H-Fading Channels}
	\author{Yassine Mouchtak and Faissal El Bouanani, \textit{Senior Member, IEEE} \thanks{%
			Y. Mouchtak and F. El Bouanani are with ENSIAS, Mohammed V University in
			Rabat, Morocco,
			\textcolor{black}{e-mails:
				\{yassine.mouchtak,f.elbouanani\}@um5s.net.ma.}}}
	\maketitle
	\vspace{-2cm}
	\begin{abstract}
		Impairments baseband model in-phase and quadrature-phase Imbalance (IQI) and
		Residual hardware impairments (RHI) are two key factors degrading the
		performance of wireless communication systems (WCSs), particularly when
		high-frequency bands are employed, as in 5G systems and beyond. The impact
		of either IQI or RHI on the performance of various WCSs have been
		investigated exclusively in a separate way. To fill this gap, in this paper,
		the joint effect of both IQI and RHI on the performance of a WCS subject to sum of
		Fox's H-function fading model (SFHF)\footnote{SFHF denotes the fading
			channels whose \textcolor{black}{PDF} can be expressed as a summation of Fox's
			H-functions.} is investigated. Such a fading model generalizes most, if not
		all, of well-known fading and turbulence models. To this end, closed-form and
		asymptotic expressions for the outage probability (OP),
		\textcolor{black}{channel capacity (CC) under constant power with
			optimum rate adaptation (ORA) policy}, and average symbol error probability
		(ASEP) for both coherent and non-coherent modulation schemes. Specifically,
		all the analytical expressions are derived for three different
		scenarios: (i) ideal Tx and Rx impaired, (ii) Tx impaired and ideal Rx, and
		(iii) both Tx and Rx are impaired. Further, asymptotic expressions for OP, %
		\textcolor{black}{CC under ORA policy}, and ASEP are obtained, based on
		which, insightful discussions on the IQI and RHI impacts are made. $%
		\alpha-\mu $ and M\'alaga $\mathcal{M}$ turbulence with pointing error
		distribution models have been considered particular SFHF distribution cases. The analytical derivations, revealed by simulation results,
		demonstrate that the RF impairments' effects should be seriously taken into
		account in the design of next-generation wireless technologies.
	\end{abstract}
	
	\begin{keywords}
		{\small Channel capacity, Fox's H-fading, in-phase and quadrature-phase
			imbalance, M\'alaga $\mathcal{M}$ turbulence, $\alpha -\mu $ fading, Gray
			coded differential quadrature phase-shift keying, outage probability,
			pointing error, residual hardware impairments, symbol error rate, Average
			symbol error rate.}
	\end{keywords}
	
	\section{Introduction}
	
	The increase of mobile data traffic and the internet's demand impose a high
	spectral efficiency, low latency, and massive connectivity requirements
	towards fifth-generation (5G) wireless networks and beyond. In this context,
	direct conversion transceivers employing quadrature up/down conversion of
	radio-frequency (RF) signals have attracted considerable attention owing to
	their potential in reducing power consumption and cost. In addition, they
	request neither external intermediate frequency filters nor image rejection
	filters \cite{trans1} and \cite{trans2}. Nevertheless, such transceivers
	suffer from inevitable RF front-end imperfections because of component
	mismatches and manufacturing defects that degrade the performance of
	wireless communication systems (WCS) \cite{imperf1}. Furthermore, the
	amplitude and/or phase mismatch between the in-phase (I) and
	quadrature-phase (Q) at the transmitter (Tx) or/and the receiver (Rx)
	branches, so-called I/Q imbalance (IQI) and known also as zero intermediate
	frequency, is one of the major performance-limiting impairments causing
	self-interference and degrading the reliability, particularly in high-rate
	WCS \cite{imperf2}. To this end, there have been several works modeling IQI,
	studying its impact on system performance, and tackling the problem. For
	instance, in \cite{work1}-\cite{workfin} and the citations referred therein,
	the IQI influence on the system performance has been extensively assessed.
	Moreover, in \cite{algo1}-\cite{algo2} and \cite{calib1}-\cite{calib3}, as
	well as the references therein, various algorithms for IQI parameters'
	estimation and calibration have been proposed, respectively. Even though,
	such IQI mitigating mechanisms fail to completely eliminate hardware
	impairments. As a consequence, residual hardware impairments (RHI), which
	are added to the transmitted or/and the received signals, degrade further
	the performance of WCS \cite{RHI1}-\cite{RHIfin}. On the other hand, several
	fading models have been proposed over years to model either the fading or
	the joint shadowing/fading phenomena. Therefore, dealing with a generalized
	distribution modeling such behaviors is of paramount importance in providing
	unified expressions for WCS performance criteria. Towards this end, the
	Fox's H-function (FHF) random variable (RV), has been proposed as a unified
	distribution subsuming most fading environment models. Precisely, the
	probability density function (PDF) of various models can be expressed in
	terms of a (i) single FHF (e.g., $\alpha$-$\mu$, Gamma-Gamma (GG),
	Generalized Bessel-K, Fisher-Snedecor $\mathcal{F}$), (ii) finite sum of
	FHFs (e.g., M\'alaga $\mathcal{M}$), and (iii) infinite summation of FHFs
	(e.g., shadowed $\kappa$-$\mu$) \cite{AccessOurs}. Moreover, FHF has various
	powerful properties, namely the (i) product, quotient, and power of FHF RVs
	are FHF RV \cite{HFOX}, (ii) sum of FHF RVs can be tightly approximated by
	an FHF \cite{WCL}-\cite{IRS}, and (iii) numerous complex transforms of FHFs,
	e.g. Laplace and Mellin transforms, can be written as FHFs \cite{kilbas}. By
	leveraging such properties, the performance analysis, and physical layer
	security of intelligent reflecting surface aided cooperative WCS with
	spatial diversity over FHF fading channels in presence of co-channel
	interference (CCI), jamming signals, or/and eavesdroppers can be
	straightforwardly investigated \cite{IRS}, \cite{Hwork3}.\newline
	On the other hand, the instantaneous symbol error probability (SEP) is a
	crucial performance metric quantifying the instantaneous communication
	reliability. Importantly, such a metric is often provided in complicated
	integral form depending on the employed modulation technique \cite{gqfunc}-%
	\cite{talambura}. Therefore, the closed-form of the average SEP (ASEP)
	remains a big challenge for numerous communication systems due to the
	mathematical intractability of both the end-to-end (e2e) fading model and
	the SEP expression. Obviously, obtaining accurate and simpler bounds or
	approximate expressions for the ASEP is strongly depending on the SEP's
	ones. To this end, it has been shown in \cite{yassineaccess} that the use of
	the Trapezoidal rule technique leads to a simple and tight approximate
	expression for the SEP's integral form. In fact, the mathematical
	tractability of such an approximation lies in its exponential form, making
	from the ASEP approximate expression a summation of Laplace transforms of
	the SNR's PDF.
	
	\subsection{Related work}
	
	Recently, the impact of RF impairments on the performance of a WCS has been
	the focus of various works. In \cite{weib}, assuming IQI at the receiver
	only, the effects of the IQI on a generalized frequency division
	multiplexing system under Weibull fading channels was investigated by
	evaluating the ASER for $M$-ary quadrature amplitude modulation ($M$-QAM),
	while in \cite{iqidqpsk}, the IQI effects on the bit error rate of Gray
	coded differential quadrature phase-shift keying (GC-DQPSK) over Rayleigh
	fading channel ware evaluated. The IQI impact on the outage probability of
	both single- and multi- carrier systems over cascaded Nakagami-$m$ fading
	channels have been quantified in \cite{multicarier}, whereas in \cite{gama},
	a free-space optical system using subcarrier quadrature phase-shift keying
	(QPSK) is assessed for Gamma-Gamma fading channels with IQI at the receiver.
	On the other hand, the performance analysis of systems suffering from RHI
	has attracted considerable attention \cite{RHI1}-\cite{RHIfin}.
	Specifically, tight closed-form and asymptotic outage probability (OP)
	expressions for an impaired RHI cognitive amplify-and-forward (AF)
	multi-relay cooperative network undergoing Rayleigh fading model ware
	derived in \cite{RHI1}. Likewise, OP and \textcolor{black}{CC} for an impaired
	RHI WCS with CCI and imperfect channel state information (CSI) were
	investigated in \cite{RHI2}. Very recently, OP of an impaired RHI hybrid
	satellite-terrestrial relay network was analyzed in \cite{RHI3}, while the
	performance of non-orthogonal multiple access-based AF relaying network
	subject to RHI and Nakagami-$m$ fading model was investigated in \cite%
	{RHIfin}.
	
	\subsection{Motivation and contributions}
	
	Capitalizing on the above, the RF impairments have a paramount impact on the
	performance of wireless communication systems. Although the performance of
	WCSs, under the assumption of perfect CSI at the receiver, is treated in
	most of the existing works. Nonetheless, such a hypothesis is not practical
	as the receiver requires often complex channel estimation algorithms
	providing, in some cases, uncertain values, which would result in
	performance degradation. To overcome this limitation, differential
	modulation, among other solutions, can be employed particularly for
	low-power wireless systems \cite{abouie}. In \cite{owc}, it is shown that
	employing GC-DQPSK in the inter-satellite optical wireless system enhances
	the performance compared to other modulation techniques. Furthermore, such a
	scheme simplifies the detection as the channel estimation and tracking
	becomes needless, which reduces the cost and complexity of the receiver.
	
	To the best of the authors' knowledge, the joint effect of both IQI and RHI
	on the performance of a WCS undergoing FHF fading channel has not been
	addressed in the open technical literature. Motivated by this, in this
	paper, we analyze the performance of an impaired IQI/RHI WCS experiencing
	FHF fading model. Specifically, by deriving the cumulative distribution
	function (CDF) expression of the signal-to-interference-plus-noise-ratio
	(SINR), closed-form and asymptotic expressions for the OP,
	\textcolor{black}{channel capacity (CC) under constant power with optimum rate
		adaptation (ORA) policy}, and ASEP for different modulation schemes are
	presented for the proposed impaired IQI/RHI WCS. Pointedly, the key
	contributions of this paper can be summarized as follows:
	
	\begin{itemize}
		\item Expressions of the SINR per symbol at the receiver's input impaired by
		both IQI and RHI are presented for different cases: (i) Tx Impaired by both
		IQI and RHI, (ii) Rx Impaired by both IQI and RHI, and (iii) both Tx and Rx
		are impaired.
		
		\item Based on the statistical properties of the SINR, novel closed-form and
		asymptotic expressions for the OP as well as for CC are derived.
		
		\item A simple exponential approximation for the SEP for various coherent
		and non-coherent used modulation techniques is proposed, based on which,
		closed-form and asymptotic expressions for the ASEP are derived.
		
		\item Finally, to gain further insights into the system's performance, the
		derived expressions are evaluated for both $\alpha-\mu$ fading and M\'alaga $%
		\mathcal{M}$ turbulence channels with pointing errors (TCPE).
	\end{itemize}
	
	\subsection{Organization}
	
	The rest of this paper is structured as follows. Section II is dedicated to
	the statistical properties of the considered WCS by retrieving the PDF and
	CDF of SINR, while closed-form and asymptotic expressions for the OP, CC,
	and ASEP are derived in Section III. All the analytical results are
	retrieved for two particular cases of FHF RV, namely $\alpha-\mu$ and
	M\'alaga $\mathcal{M}$ distributions in Section IV. In Section V, Numerical
	results and insightful discussions are presented. Finally, some conclusions are drawn in Section VI.
	
	\subsection{Notations}
	
	For the sake of clarity, the notations used throughout the paper are
	summarized in Table \ref{notations}.
	\begin{table}[tbh]
		\captionsetup{font=small}
		\caption{Symbols and notations.}
		\label{notations}\centering
		\begin{tabular}{p{1.5cm}|p{6cm}|p{3.5cm}|p{5.5cm}}
			\hline
			\centering \textit{Symbol} & \centering \textit{Meaning} & \centering \textit{Symbol} & \qquad \qquad
			\qquad \qquad \textit{Meaning} \\ \hline\hline
			\centering$f_{X}(\cdot )$ & Probability density function of the random
			variable $X$ & \centering$\gamma _{id}$ & SNR for ideal RF end-to-end \\
			\hline
			\centering$H_{p,q}^{m,n}(\cdot )$ & Univariate Fox's H-function & \centering$%
			\gamma _{\varkappa }$ & SINR of impaired RF end-to-end \\ \hline
			\centering$L^{\left( \ell \right) }$ & Complex contour ensuring the
			convergence of the $\ell $th Mellin-Barnes integral & \centering$F_{X}(\cdot
			)$ & Cumulative distribution function of a random variable $X$ \\ \hline
			\centering$s$ & Ideal transmitter signal & \centering$P_{\text{out}}(\cdot )$
			& Outage probability $\left( \text{OP}\right) $ \\ \hline
			\centering$\varkappa $ & Transmitter (t) or receiver (r) & \centering$%
			C^{\varkappa }$ & Channel capacity (CC) \\ \hline
			\centering$s_{\varkappa }$ & Transmitted or received signal & \centering$%
			H_{p_{1},q_{1}:p_{2},q_{2}:p_{3},q_{3}}^{m_{1},n_{1}:m_{2},n_{2}:m_{3},n_{3}}(\cdot )
			$ & Bivariate Fox's H-function \\ \hline
			\centering$G_{\varkappa }$ & IQI gain & \centering$H\left( \gamma \right) $
			& Symbol error probability $\left( \text{SEP}\right) $ \\ \hline
			\centering IRR$_{\varkappa }$ & Image rejection ratio & \centering$P_{s}$ &
			Average symbol error probability $\left( \text{ASEP}\right) $ \\ \hline\hline
		\end{tabular}%
	\end{table}
	
	\section{System and Signal model}
	
	Consider a WCS, which consists of a single transmitter and receiver and
	subject to IQI and RHI impairments. In our setup, both Tx and Rx are
	equipped with a single antenna and the flat fading channel is modeled by a
	generalized SFHF RV. The SNR's PDF can be expressed as
	\begin{equation}
	f_{\gamma }\left( \gamma \right) =\sum_{\ell =1}^{L}\psi _{\ell }{\small H}%
	_{p{\small ,q}}^{m{\small ,n}}\left( \phi _{\ell }\gamma \left\vert
	\begin{array}{c}
	\left( a_{i}^{(\ell )},A_{i}^{(\ell )}\right) _{i=1:p} \\
	\left( b_{i}^{(\ell )},B_{i}^{(\ell )}\right) _{i=1:q}%
	\end{array}%
	\right. \right) ,  \label{pdf}
	\end{equation}%
	where $\psi _{\ell }$ and $\phi _{\ell }$ are any two positive constants
	satisfying $\int_{0}^{\infty }f_{\gamma }\left( \gamma \right) d\gamma =1$, $%
	\text{i.e., }$%
	\begin{equation}
	\sum_{\ell =1}^{L}E_{\ell }\frac{\psi _{\ell }}{\phi _{\ell }}=1,
	\label{Condition}
	\end{equation}%
	with
	\begin{gather}
	E_{\ell }=\frac{\prod\limits_{i=1}^{m}\Gamma \left( \mathcal{B}_{i}^{(\ell
			)}\right) \prod\limits_{i=1}^{n}\Gamma \left( 1-\mathcal{A}_{i}^{(\ell
			)}\right) }{\prod\limits_{i=n+1}^{p}\Gamma \left( \mathcal{A}_{i}^{(\ell
			)}\right) \prod\limits_{i=m+1}^{q}\Gamma \left( 1-\mathcal{B}_{i}^{(\ell
			)}\right) },\ell =1..L,  \label{Ell} \\
	\mathcal{A}_{i}^{(\ell )}=a_{i}^{(\ell )}+A_{i}^{(\ell )},  \label{mathA} \\
	\mathcal{B}_{i}^{(\ell )}=b_{i}^{(\ell )}+B_{i}^{(\ell )},  \label{mathB}
	\end{gather}%
	and
	\begin{equation}
	{\small H}_{p{\small ,q}}^{m{\small ,n}}\left( \phi _{\ell }x\left\vert
	\begin{array}{c}
	\left( a_{i}^{(\ell )},A_{i}^{(\ell )}\right) _{i=1:p} \\
	\left( b_{i}^{(\ell )},B_{i}^{(\ell )}\right) _{i=1:q}%
	\end{array}%
	\right. \right) =\frac{1}{2\pi j}\int_{\mathcal{L}_{v}^{\left( _{\ell
			}\right) }}\frac{\prod\limits_{i=1}^{m}\Gamma \left( b_{i}^{(\ell
			)}+B_{i}^{(\ell )}v\right) \prod\limits_{i=1}^{n}\Gamma \left(
		1-a_{i}^{(\ell )}-A_{i}^{(\ell )}v\right) }{\prod\limits_{i=n+1}^{p}\Gamma
		\left( a_{i}^{(\ell )}+A_{i}^{(\ell )}v\right)
		\prod\limits_{i=m+1}^{q}\Gamma \left( 1-b_{i}^{(\ell )}-B_{i}^{(\ell
			)}v\right) }x^{-v}dv.  \label{Hffox}
	\end{equation}%
	denotes the univariate FHF, $j=\sqrt{-1}$, $\mathcal{L}_{v}^{\left( _{\ell
		}\right) }$ is an appropriately chosen complex contour, ensuring the
	convergence of the $\ell $th above Mellin-Barnes integral, $\Gamma \left(
	.\right) $ denotes the Euler Gamma function \cite{integraltable}, $%
	a_{i}^{(\ell )}$ and $b_{i}^{(\ell )}$ are real-valued numbers, whereas $%
	A_{i}^{(\ell )}$ and $B_{i}^{(\ell )}$ are two real positive numbers.
	
	\subsection{RF impairments baseband model}
	
	\subsubsection{IQI impairment model}
	
	The time-domain baseband representation of the IQI-impaired signal, at
	either the transmitter or the receiver, can be expressed as \cite{imperf1}
	\begin{equation}
	s_{\varkappa ,IQI}=G_{\varkappa ,1}s+G_{\varkappa ,2}s^{\ast },
	\end{equation}%
	where $\varkappa $ denotes $t$ or $r$ for the transmitter (Tx) or the
	receiver (Rx) node, respectively, $s$ represents the ideal baseband signal
	under perfect in-phase and quadrature-phase (I/Q) matching. Further, the IQI
	parameters $G_{\varkappa ,1}$ and $G_{\varkappa ,2}$ are expressed,
	respectively, as
	\begin{equation}
	G_{\varkappa ,1}=\frac{1+g_{\varkappa }e^{\varepsilon _{\varkappa }\varphi
			_{\varkappa }}}{2},  \label{G1X}
	\end{equation}%
	\begin{equation}
	G_{\varkappa ,2}=\frac{1-g_{\varkappa }e^{-\varepsilon _{\varkappa }\varphi
			_{\varkappa }}}{2},  \label{G2X}
	\end{equation}%
	where $\varepsilon _{t}=+j,$ and $\varepsilon _{r}=-j$ with $\ j=\sqrt{-1}$,
	while $g_{\varkappa }$ and $\varphi _{\varkappa }$ represent the gain and
	phase mismatch for the node $\varkappa $, respectively. Particularly, for
	perfect I/Q matching, $g_{\varkappa }=1$ and $\varphi _{\varkappa }=0$,
	leading to $G_{\varkappa ,1}=1$ and $G_{\varkappa ,2}=0.$ On the other hand,
	the corresponding image rejection ratio $\left( \text{IRR}\right) $, which
	determines the amount of image frequency attenuation, is linked to the IQI
	parameters as follow
	\begin{equation}
	IRR_{\varkappa }=\frac{\left\vert G_{\varkappa ,1}\right\vert ^{2}}{%
		\left\vert G_{\varkappa ,2}\right\vert ^{2}}.  \label{IRRdef}
	\end{equation}%
	Substituting (\ref{G1X}) and (\ref{G2X}) into (\ref{IRRdef}), yields
	\begin{equation}
	IRR_{\varkappa }=\frac{g_{\varkappa }^{2}+2g_{\varkappa }\cos \left( \varphi
		_{\varkappa }\right) +1}{g_{\varkappa }^{2}-2g_{\varkappa }\cos \left(
		\varphi _{\varkappa }\right) +1}.  \label{IRRdef2}
	\end{equation}%
	For practical analog RF front-end electronics, the IRR's value is typically
	in the range $\left[ 30\text{ dB},40\text{ dB}\right] $ indicating a
	possible combination of 0.2 to 0.6 dB of gain mismatch and $1^{\circ}$ to $5^{\circ}$ of
	phase imbalance \cite{Rfmicro}.
	
	\subsubsection{RHI model}
	In \cite{imperf1} and \cite{studer}, it has been shown that the complex
	normal distribution fits accurately the residual distortion noise behavior,
	with zero mean and average power proportional that of the signal of
	interest, i.e.,
	\begin{equation}
	\eta_{\varkappa}\sim\mathcal{CN}\left(0,\kappa_{\varkappa}^{2}P_{\varkappa}%
	\right),
	\end{equation}
	where $\kappa_{\varkappa}^{2}$ is a proportionality parameter describing the
	residual additive impairments severity at node $\varkappa,$ while $P_{t}$
	and $P_{r}$ refer to the transmit and receive power, respectively. Note that
	$\kappa_{t}=\kappa_{r}=0$ is corresponding to ideal hardware on both sides
	case.
	
	\subsection{Received Signal Model}
	
	Under the Assumption of single antennas at both Tx and Rx, we provide in
	this subsection the expression of the received signal in various scenarios
	of Tx and/or Rx impairments.
	
	\subsubsection{Ideal scenario}
	
	The received signal with ideal Tx and Rx RF front-end is represented as
	\begin{equation}
	r_{id}=h\mathrm{s}+\mathrm{n},  \label{idealreceived signal}
	\end{equation}%
	where $h$ and $n$ denote the channel coefficient and the additive complex
	Gaussian receiver's noise, respectively. Moreover, in this case, the
	instantaneous signal-to-noise ratio (SNR) per symbol at the receiver input
	is expressed as
	\begin{equation}
	\gamma _{id}=\frac{P_{s}}{N_{0}}\left\vert h\right\vert ^{2},
	\label{idealSNR}
	\end{equation}%
	where $P_{s}$ is the energy per transmitted symbol and $N_{0}$ represents
	the single-sided power spectral density noise. In what follows, we will focus
	on deriving the expression of the signal-to-interference-plus-noise-ratio
	(SINR) in which Tx and/or Rx suffers from both IQI and RHI impairments.
	
	\subsubsection{Tx impaired by both IQI and RHI}
	
	In this scenario, it is assumed that Tx experiences IQI and RHI, while the
	Rx RF front-end is ideal (i.e., without RHI nor IQI). Thus, the transmitted
	signal can be represented as 
	\begin{eqnarray}
	\mathrm{s}_{t,imp} &=&\mathrm{s}_{t,IQI}+\eta _{t},\notag \\
	&=&G_{t,1}\mathrm{s}+G_{t,2}\mathrm{s}^{\ast }+\eta _{t}.
	\end{eqnarray}
	Consequently, the received signal can be expressed as
	\begin{eqnarray}
	y &=&h\mathrm{s}_{t,imp}+\mathrm{n},\notag \\
	&=&hG_{t,1}\mathrm{s}+hG_{t,2}\mathrm{s}^{\ast }+h\eta _{t}+\mathrm{n}.
	\end{eqnarray}
	It follows that the instantaneous SINR can be evaluated using (\ref{idealSNR}%
	), as
	\begin{equation}
	\gamma _{t}=\frac{\left\vert G_{t,1}\right\vert ^{2}}{\left\vert
		G_{t,2}\right\vert ^{2}+\kappa _{t}^{2}+\frac{1}{\gamma _{id}}}.
	\label{snrTxonly}
	\end{equation}
	
	\subsubsection{Rx impaired by both IQI and RHI}
	
	In a similar manner, the Rx experiences here both IQI and RHI, while the Tx
	RF front-end is supposed ideal. To this end, the received signal is given as
	\begin{eqnarray}
	y &=&G_{r,1}\left( h\mathrm{s}+\mathrm{n}\right) +G_{r,2}\left( h^{\ast }\mathrm{s}^{\ast }+\mathrm{n}^{\ast
	}\right) +\eta _{r} \notag \\
	&=&G_{r,1}h\mathrm{s}+G_{r,2}h^{\ast }\mathrm{s}^{\ast }+G_{r,1}\mathrm{n}+G_{r,2}\mathrm{n}^{\ast }+\mathrm{\eta} _{r}.
	\label{yRXonly}
	\end{eqnarray}
	Thus, the corresponding instantaneous SINR can be simplified using (\ref%
	{idealSNR}) to
	\begin{equation}
	\gamma _{r}=\frac{\left\vert G_{r,1}\right\vert ^{2}}{\left\vert
		G_{r,2}\right\vert ^{2}+\kappa _{r}^{2}+\frac{\Lambda }{\gamma _{id}}},
	\label{snrRxonly}
	\end{equation}
	with%
	\begin{equation}
	\Lambda =\left\vert G_{r,1}\right\vert ^{2}+\left\vert G_{r,2}\right\vert
	^{2}.
	\end{equation}
	\subsubsection{Both Tx and Rx are impaired}
	
	In this case, both Tx and Rx experience IQI and RHI. As a result, the
	equivalent baseband received signal can be written as
	\begin{eqnarray}
	y &=&G_{r,1}\left[ h\left( G_{t,1}\mathrm{s}+G_{t,2}\mathrm{s}^{\ast }+\eta _{t}\right)+\mathrm{n}%
	\right]+G_{r,2}\left[ h\left( G_{t,1}\mathrm{s}+G_{t,2}\mathrm{s}^{\ast }+\eta _{t}\right)+\mathrm{n}\right]
	^{\ast }+\eta _{r}\notag \\
	&=&\left[ \zeta _{11}h+\zeta _{22}h^{\ast }\right]\mathrm{s}+\left[ \zeta
	_{12}h+\zeta _{21}h^{\ast }\right]\mathrm{s}^{\ast }+G_{r,1}h\eta _{t}+G_{r,2}h^{\ast }\eta _{t}^{\ast }+G_{r,1}\mathrm{n}+G_{r,2}\mathrm{n}^{\ast }+\eta _{r},
	\label{yTXRX}
	\end{eqnarray}
	where $\zeta _{ij}$ are the elements of the following matrix%
	\begin{equation}
	\mathbf{\zeta }=\left(
	\begin{array}{cc}
	G_{r,1}G_{t,1} & G_{r,1}G_{t,2} \\
	G_{r,2}G_{t,1}^{\ast } & G_{r,2}G_{t,2}^{\ast }%
	\end{array}%
	\right) .  \label{matrix}
	\end{equation}%
	The corresponding instantaneous SINR is then expressed as
	\begin{equation}
	\gamma _{t/r}=\frac{\left\vert \zeta _{11}h+\zeta _{22}h^{\ast }\right\vert
		^{2}P_{s}}{\left\vert \zeta _{12}h+\zeta _{21}h^{\ast }\right\vert
		^{2}P_{s}+\Lambda \left\vert h\right\vert ^{2}\kappa _{t}^{2}P_{s}+\Lambda
		\sigma ^{2}+\kappa _{r}^{2}P_{r}}.  \label{snrTXRX}
	\end{equation}%
	On the other hand, one can check from $\left( \ref{IRRdef}\right) $
	and $\left( \ref{matrix}\right) $ that
	\begin{equation}
	\left\vert \zeta _{11}\right\vert \left\vert \zeta _{22}\right\vert
	=\left\vert \zeta _{12}\right\vert \left\vert \zeta _{21}\right\vert =\frac{%
		\left\vert G_{r,1}\right\vert ^{2}\left\vert G_{t,1}\right\vert ^{2}}{\sqrt{%
			IRR_{r}IRR_{t}}}.
	\end{equation}%
	As mentioned above, IRR$_{\varkappa }$ practically lies in the range $\left[
	30\text{ dB}-40\text{ dB}\right] $ and the gains $\left\vert G_{\varkappa
		,1}\right\vert \leq 1,$ it follows that $\left\vert \zeta _{11}\right\vert
	\left\vert \zeta _{22}\right\vert $ as well as $\left\vert \zeta
	_{12}\right\vert \left\vert \zeta _{21}\right\vert $ are relatively smaller$.
	$ As a result, $\gamma _{t/r}$ can be accurately approximated by
	\begin{equation}
	\gamma _{t/r}\approx \frac{\left\vert \zeta _{11}\right\vert ^{2}+\left\vert
		\zeta _{22}\right\vert ^{2}}{\left\vert \zeta _{12}\right\vert
		^{2}+\left\vert \zeta _{21}\right\vert ^{2}+\Lambda \kappa _{t}^{2}+\kappa
		_{r}^{2}+\frac{\Lambda }{\gamma _{id}}}.
	\end{equation}%
	Subsequently, using $\left( \ref{snrTxonly}\right)$, $\left( \ref{snrRxonly}\right) $, and
	$\left( \ref{snrTXRX}\right) $, the instantaneous SINR can be expressed for
	all cases in a unified form as
	\begin{equation}
	\gamma _{\varkappa }=\frac{\omega _{\varkappa }}{\varrho _{\varkappa }+\frac{%
			\tau _{\varkappa }}{\gamma _{id}}},  \label{SINR}
	\end{equation}%
	where $\omega _{\varkappa }$, $\varrho _{\varkappa }$ and $\tau _{\varkappa }
	$ are summarized in Table \ref{parametresIQI}.
	\begin{table}[tbh]
		\centering
		\captionsetup{font=small}
		\caption{$\protect\omega _{\varkappa }$, $\protect\varrho _{\varkappa }$ and
			$\protect\tau _{\varkappa }$ per impairment.}$\centering%
		\begin{tabular}{c|c|c|c}
		\hline\hline
		\backslashbox{$\varkappa$}{\textit{Parameters}} & $\omega _{\varkappa }$ & $%
		\varrho _{\varkappa }$ & $\tau _{\varkappa }$ \\ \hline
		$t$ & $\left\vert G_{t,1}\right\vert ^{2}$ & $\left\vert G_{t,2}\right\vert
		^{2}+\kappa _{t}^{2}$ & $1$ \\ \hline
		$r$ & $\left\vert G_{r,1}\right\vert ^{2}$ & $\left\vert G_{r,2}\right\vert
		^{2}+\kappa _{r}^{2}$ & $\Lambda $ \\ \hline
		$t/r$ & $\left\vert \zeta _{11}\right\vert ^{2}+\left\vert \zeta
		_{22}\right\vert ^{2}$ & $\left\vert \zeta _{12}\right\vert ^{2}+\left\vert
		\zeta _{21}\right\vert ^{2}+\kappa _{r}^{2}+\Lambda \kappa _{t}^{2}$ & $%
		\Lambda $ \\ \hline\hline
		\end{tabular}%
		$%
		\label{parametresIQI}
	\end{table}
	\begin{remark}
		One can notice from (\ref{SINR}), that the SINR is upper bounded by $\frac{%
			\omega _{\varkappa }}{\varrho _{\varkappa }}$. On the other hand, for ideal
		RF front e2e, namely (i) ideal RHI $($i.e., $\kappa _{\varkappa }=0),$ and
		(ii) ideal IQI $($i.e., $G_{\varkappa ,1}=1$, $G_{\varkappa ,2}=0)$, the
		aforesaid parameters can be simplified to $\omega _{\varkappa }=1,$ $\varrho
		_{\varkappa }=0,$ and $\tau _{\varkappa }$ $=1.$ \label{remark1}
	\end{remark}
	\section{Performance Analysis}
	
	\subsection{Outage Probability}
	
	\subsubsection{Exact analysis}
	
	For a fixed impairment $\varkappa $, and SINR threshold $\gamma _{\text{th}%
	}^{(\varkappa )}$, the OP can be defined as the probability the SINR fall
	below $\gamma _{\text{th}}^{(\varkappa )}.$ Leveraging Remark \ref{remark1},
	such a metric can be defined as
	\begin{equation}
	P_{\text{out}}^{\left( \varkappa \right) }=F_{\gamma _{\varkappa }}\left(
	\gamma _{\text{th}}^{(\varkappa )}\right) ,\gamma _{\text{th}}^{(\varkappa
		)}\leq \frac{\omega _{\varkappa }}{\varrho _{\varkappa }},
	\label{Poutgeneral}
	\end{equation}%
	with
	\begin{equation}
	F_{\gamma _{\varkappa }}\left( \gamma \right) =F_{\gamma _{id}}\left( \frac{%
		\tau _{\varkappa }\gamma }{\omega _{\varkappa }-\varrho _{\varkappa }\gamma }%
	\right) ,
	\end{equation}%
	and $F_{\gamma _{id}}\left( .\right) $ denotes the CDF of ideal e2e SNR and
	which can be evaluated relying to (\ref{pdf}) as
	\begin{eqnarray}
	F_{\gamma _{id}}\left( \gamma \right)  &=&1-\int_{\gamma }^{\infty
	}f_{\gamma _{id}}\left( t\right) dt  \notag \\
	&=&1-\sum_{\ell =1}^{L}\int_{\gamma }^{\infty }\psi _{\ell }{\small H}_{p%
		{\small ,q}}^{m{\small ,n}}\left( \phi _{\ell }t\left\vert
	\begin{array}{c}
	\left( a_{i}^{\left( \ell \right) },A_{i}^{\left( \ell \right) }\right) _{i=%
		{\small 1}:p} \\
	\left( b_{i}^{\left( \ell \right) },B_{i}^{\left( \ell \right) }\right) _{i=%
		{\small 1}:q}%
	\end{array}%
	\right. \right) dt  \notag \\
	&=&1-\sum_{\ell =1}^{L}\frac{\psi _{\ell }}{\phi _{\ell }}{\small H}_{p+1%
		{\small ,q+1}}^{m+1{\small ,n}}\left( \phi _{\ell }\gamma \left\vert
	\begin{array}{c}
	\Psi ^{\left( \ell \right) },\left( 1,1\right)  \\
	\left( 0,1\right) ,\Upsilon ^{\left( \ell \right) }%
	\end{array}%
	\right. \right) ,  \label{CDFideal}
	\end{eqnarray}%
	with
	\begin{gather}
	\Psi ^{\left( \ell \right) }=\left\{ \left( \mathcal{A}_{i}^{\left( \ell
		\right) },A_{i}^{\left( \ell \right) }\right) _{i=1:n};\left( \mathcal{A}%
	_{i}^{\left( \ell \right) },A_{i}^{\left( \ell \right) }\right)
	_{i=n+1:p}\right\} , \\
	\text{and}  \notag \\
	\Upsilon ^{\left( \ell \right) }=\left\{ \left( \mathcal{B}_{i}^{\left( \ell
		\right) },B_{i}^{\left( \ell \right) }\right) _{i=1:m_{{}}};\left( \mathcal{B%
	}_{i}^{\left( \ell \right) },B_{i}^{\left( \ell \right) }\right)
	_{i=m+1:q_{{}}}\right\} ,
	\end{gather}%
	and $\mathcal{A}_{i}^{\left( \ell \right) }$, and $\mathcal{B}_{i}^{\left(
		\ell \right) }$ are defined in (\ref{mathA}) and (\ref{mathB}),
	respectively. Hence for $\gamma \geq 0$,
	\begin{equation}
	F_{\gamma _{\varkappa }}\left( \gamma \right) =1-\sum_{\ell =1}^{L}\frac{%
		\psi _{\ell }}{\phi _{\ell }}{\small H}_{p+1{\small ,q+1}}^{m+1{\small ,n}%
	}\left( \frac{\phi _{\ell }\tau _{\varkappa }\gamma }{\omega _{\varkappa
		}-\varrho _{\varkappa }\gamma }\left\vert
	\begin{array}{c}
	\Psi ^{\left( \ell \right) },\left( 1,1\right)  \\
	\left( 0,1\right) ,\Upsilon ^{\left( \ell \right) }%
	\end{array}%
	\right. \right) .  \label{CDFIQI}
	\end{equation}
	\subsubsection{Asymptotic analysis}
	
	It is worthy to mention that the scale $\phi _{\ell }$ is usually inversely
	proportional to the average SNR $\overline{\gamma }_{id}$ \cite[Tables II-V]%
	{tables}$.$ Hence, at high SNR values, $\phi _{\ell }$ tends to $0$, and
	then the FHF given in $\left( \ref{CDFIQI}\right) $, can be asymptotically
	approximated in the high SNR regime as
	\begin{equation}
	{\small H}_{p+1{\small ,q+1}}^{m+1{\small ,n}}\left( \frac{\phi _{\ell }\tau
		_{\varkappa }\gamma }{\omega _{\varkappa }-\varrho _{\varkappa }\gamma }%
	\left\vert
	\begin{array}{c}
	\Psi ^{\left( \ell \right) },\left( 1,1\right) \\
	\left( 0,1\right) ,\Upsilon ^{\left( \ell \right) }%
	\end{array}%
	\right. \right) \sim E_{\ell }-\sum_{i=1}^{m}\mathcal{F}_{\ell ,i}\left(
	\frac{\phi _{\ell }\tau _{\varkappa }\gamma }{\omega _{\varkappa }-\varrho
		_{\varkappa }\gamma }\right) ^{\frac{\mathcal{B}_{i}^{\left( \ell \right) }}{%
			B_{i}^{\left( \ell \right) }}}.  \label{ApproFHF}
	\end{equation}%
	where $E_{\ell }$ is being defined in (\ref{Ell}) and
	\begin{equation}
	\mathcal{F}_{\ell ,i}=\frac{\prod\limits_{k=1,k\neq i}^{m}\Gamma \left(
		\mathcal{B}_{k}^{\left( \ell \right) }-\mathcal{B}_{i}^{\left( \ell \right) }%
		\frac{B_{k}^{\left( \ell \right) }}{B_{i}^{\left( \ell \right) }}\right)
		\prod\limits_{k=1}^{n}\Gamma \left( 1-\mathcal{A}_{k}^{\left( \ell \right) }+%
		\mathcal{B}_{i}^{\left( \ell \right) }\frac{A_{k}^{\left( \ell \right) }}{%
			B_{i}^{\left( \ell \right) }}\right) }{\mathcal{B}_{i}^{\left( \ell \right)
		}\prod\limits_{k=n+1}^{p}\Gamma \left( \mathcal{A}_{k}^{\left( \ell \right)
		}-\mathcal{B}_{i}^{\left( \ell \right) }\frac{A_{k}^{\left( \ell \right) }}{%
			B_{i}^{\left( \ell \right) }}\right) \prod\limits_{k=m+1}^{q}\Gamma \left( 1-%
		\mathcal{B}_{k}^{\left( \ell \right) }+\mathcal{B}_{i}^{\left( \ell \right) }%
		\frac{B_{k}^{\left( \ell \right) }}{B_{i}^{\left( \ell \right) }}\right) }.
	\label{Fli}
	\end{equation}%
	, such that $\mathcal{B}_{k}^{\left( \ell \right) }-\mathcal{B}_{i}^{\left( \ell
		\right) }\frac{B_{k}^{\left( \ell \right) }}{B_{i}^{\left( \ell \right) }}$
	and $\mathcal{A}_{k}^{\left( \ell \right) }-\mathcal{B}_{i}^{\left( \ell
		\right) }\frac{A_{k}^{\left( \ell \right) }}{B_{i}^{\left( \ell \right) }}$
	are non-negative integers.
	
	Now, by substituting $\left( \ref{ApproFHF}%
	\right) $ into $\left( \ref{CDFIQI}\right) $ and using (\ref{Condition}),
	the OP can be asymptotically approximated for high SNR values as
	\begin{eqnarray}
	P_{\text{out}}^{\left( \varkappa \right) } &\sim &1-\sum_{\ell =1}^{L}\frac{%
		\psi _{\ell }}{\phi _{\ell }}\left[ E_{\ell }-\sum_{i=1}^{m}\mathcal{F}%
	_{\ell ,i}\left( \frac{\phi _{\ell }\tau _{\varkappa }\gamma _{\text{th}%
		}^{\left( \varkappa \right) }}{\omega _{\varkappa }-\varrho _{\varkappa
		}\gamma _{\text{th}}^{\left( \varkappa \right) }}\right) ^{\frac{\mathcal{B}%
			_{i}^{\left( \ell \right) }}{B_{i}^{\left( \ell \right) }}}\right] ,  \notag
	\\
	&\sim &\sum_{\ell =1}^{L}\frac{\psi _{\ell }}{\phi _{\ell }}\sum_{i=1}^{m}%
	\mathcal{F}_{\ell ,i}\left( \frac{\phi _{\ell }\tau _{\varkappa }\gamma _{%
			\text{th}}^{\left( \varkappa \right) }}{\omega _{\varkappa }-\varrho
		_{\varkappa }\gamma _{\text{th}}^{\left( \varkappa \right) }}\right) ^{\frac{%
			\mathcal{B}_{i}^{\left( \ell \right) }}{B_{i}^{\left( \ell \right) }}}.
	\label{asymptopoutnonideal}
	\end{eqnarray}
	
	\begin{remark}
		For ideal RF front e2e ( i.e., $\omega _{\varkappa }=\tau
		_{\varkappa }=1$ and $\varrho _{\varkappa }=0)$, the asymptotic
		expression of OP, for any arbitrary values of $\gamma _{\text{th}}$, can be
		reexpressed as
		\begin{equation}
		P_{\text{out}}^{\left( id\right) }\sim \sum_{\ell =1}^{L}\frac{\psi _{\ell }%
		}{\phi _{\ell }}\sum_{i=1}^{m}\mathcal{F}_{\ell ,i}\left( \phi _{\ell
		}\gamma _{\text{th}}\right) ^{\frac{\mathcal{B}_{i}^{\left( \ell \right) }}{%
				B_{i}^{\left( \ell \right) }}}.  \label{asymptopoutideal}
		\end{equation}
	\end{remark}
	\subsection{Channel Capacity}
	
	\subsubsection{Exact analysis}
	
	The ergodic channel capacity, for a given impairment $\varkappa $ in bits/s,
	under constant power with ORA policy over any composite fading or shadowing
	channel is expressed as
	
	\begin{equation}
	\mathcal{C}^{\left( \varkappa \right) }=\frac{1}{\log (2)}\int_{0}^{\frac{%
			\omega _{\varkappa }}{\varrho _{\varkappa }}}\log (1+\gamma )\frac{\partial
		F_{\gamma _{\varkappa }}\left( \gamma \right) }{\partial \gamma }d\gamma .
	\label{ORAexpression}
	\end{equation}
	
	\begin{proposition}
		\label{ORAexact} The
		closed-form expression of the CC under ORA policy can be
		written as
		\begin{IEEEeqnarray}{rCl}
			\mathcal{C}^{\left( \varkappa \right) }=\frac{1}{\log \left( 2\right) }%
			\sum_{\ell =1}^{L}\frac{\psi _{\ell }}{\phi _{\ell }}{\small H}_{1,0:p%
				{\small ,q+1:1,2}}^{0,1:m+1{\small ,n:1,1}}\left( \frac{\phi _{\ell }\tau
				_{\varkappa }}{\varrho _{\varkappa }},\frac{\omega _{\varkappa }}{\varrho
				_{\varkappa }}\left\vert
			\begin{array}{c}
				\left( {\small 1};{\small 1},{\small 1}\right) \\
				-%
			\end{array}%
			\right\vert
			\begin{array}{c}
				\Psi ^{\left( \ell \right) } \\
				\left( 0,1\right) ,\Upsilon ^{\left( \ell \right) }%
			\end{array}%
			\left\vert
			\begin{array}{c}
				\left( 1,1\right) ;- \\
				\left( 1,1\right) ;\left( 0,1\right)%
			\end{array}%
			\right. \right).  \label{ORAcapa}
		\end{IEEEeqnarray}
		where ${\small H}_{p_{1},q_{1}:p_{2},q_{2}:p_{3},q_{3}}^{m_{1}{\small ,n}_{1}%
			{\small :m}_{2}{\small ,n}_{2}{\small :m}_{3},n_{3}}\left( .\right) $
		denotes the Bivariate FHF \cite[Eq. (2.57)]{mathai}.
	\end{proposition}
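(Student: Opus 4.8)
\noindent\textit{Proof strategy.} The plan is to reduce the capacity integral to a double Mellin--Barnes integral through four elementary operations --- an integration by parts, a change of variable normalising the integration range to $\left[0,1\right]$, the Mellin--Barnes representation of $\left(1+t\right)^{-1}$, and an Euler Beta integral --- and then to identify the outcome with the bivariate Fox H-function of \cite[Eq. (2.57)]{mathai}. Concretely, I would first integrate (\ref{ORAexpression}) by parts: since $\log\left(1+\gamma\right)$ vanishes at $\gamma=0$ and, by Remark \ref{remark1}, $\gamma_{\varkappa}$ is supported on $\left[0,\omega_{\varkappa}/\varrho_{\varkappa}\right)$ so that $F_{\gamma_{\varkappa}}\!\left(\omega_{\varkappa}/\varrho_{\varkappa}\right)=1$, both boundary terms cancel and one is left with $\mathcal{C}^{\left(\varkappa\right)}=\frac{1}{\log\left(2\right)}\int_{0}^{\omega_{\varkappa}/\varrho_{\varkappa}}\left(1-F_{\gamma_{\varkappa}}\!\left(\gamma\right)\right)\left(1+\gamma\right)^{-1}d\gamma$. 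Substituting the closed form (\ref{CDFIQI}) for $1-F_{\gamma_{\varkappa}}$ reduces $\mathcal{C}^{\left(\varkappa\right)}$ to a finite sum over $\ell$ of integrals of the form $\int_{0}^{\omega_{\varkappa}/\varrho_{\varkappa}}\left(1+\gamma\right)^{-1}H_{p+1,q+1}^{m+1,n}\!\left(\phi_{\ell}\tau_{\varkappa}\gamma/\left(\omega_{\varkappa}-\varrho_{\varkappa}\gamma\right)\mid\cdots\right)d\gamma$.

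Next, I would write the univariate FHF as its defining Mellin--Barnes integral (\ref{Hffox}) and interchange it with the $\gamma$-integration; this is legitimate by Fubini once the contour is taken with $0<\mathrm{Re}\left(v\right)<1$, which is precisely what makes $\gamma^{-v}$ integrable at both endpoints of $\left[0,\omega_{\varkappa}/\varrho_{\varkappa}\right]$. The substitution $\gamma=\left(\omega_{\varkappa}/\varrho_{\varkappa}\right)x$ turns the inner integral into $\left(\phi_{\ell}\tau_{\varkappa}/\varrho_{\varkappa}\right)^{-v}\left(\omega_{\varkappa}/\varrho_{\varkappa}\right)\int_{0}^{1}x^{-v}\left(1-x\right)^{v}\left(1+\left(\omega_{\varkappa}/\varrho_{\varkappa}\right)x\right)^{-1}dx$, which already produces the first variable $\phi_{\ell}\tau_{\varkappa}/\varrho_{\varkappa}$ of (\ref{ORAcapa}). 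For the remaining $x$-integral I would insert $\left(1+\left(\omega_{\varkappa}/\varrho_{\varkappa}\right)x\right)^{-1}=\frac{1}{2\pi j}\int_{\mathcal{L}_{s}}\Gamma\left(s\right)\Gamma\left(1-s\right)\left(\omega_{\varkappa}/\varrho_{\varkappa}\right)^{-s}x^{-s}ds$, interchange once more, and evaluate the elementary Beta integral $\int_{0}^{1}x^{-v-s}\left(1-x\right)^{v}dx=\Gamma\left(1-v-s\right)\Gamma\left(1+v\right)/\Gamma\left(2-s\right)$ on contours satisfying $\mathrm{Re}\left(v\right)>-1$ and $\mathrm{Re}\left(v+s\right)<1$. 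The factor $\Gamma\left(1+v\right)$ produced here cancels the $\Gamma\left(1+v\right)$ carried by the parameter pair $\left(1,1\right)$ that was appended in (\ref{CDFIQI}), which explains why the middle block of (\ref{ORAcapa}) has $p$ rather than $p+1$ upper parameters.

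Finally, collecting the Gamma factors into three groups --- those depending on $v$ alone, which are exactly the set $\Psi^{\left(\ell\right)}$ over $\left(0,1\right),\Upsilon^{\left(\ell\right)}$; those depending on $s$ alone, namely $\Gamma\left(s\right)\Gamma\left(1-s\right)/\Gamma\left(2-s\right)$; and the linking factor $\Gamma\left(1-v-s\right)$ --- yields a double Mellin--Barnes integral in the two variables $\phi_{\ell}\tau_{\varkappa}/\varrho_{\varkappa}$ and $\omega_{\varkappa}/\varrho_{\varkappa}$. The routine shift $s\mapsto s+1$, which at once normalises the $s$-block to upper $\left(1,1\right)$ over lower $\left(1,1\right),\left(0,1\right)$ and absorbs the stray prefactor $\omega_{\varkappa}/\varrho_{\varkappa}$, turns this double integral into precisely the bivariate FHF of \cite[Eq. (2.57)]{mathai} with the parameters displayed in (\ref{ORAcapa}). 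I expect the main obstacle to be this last identification: matching the signs, the $\left(1,1\right)$/$\left(0,1\right)$ entries and the contour shift to the rather idiosyncratic notation of the two-variable H-function, together with verifying that a single pair of contours can be chosen rendering all the Gamma arguments' real parts admissible and both applications of Fubini valid.
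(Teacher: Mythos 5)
Your proposal is correct and follows essentially the same route as the paper's Appendix A: integration by parts reducing (\ref{ORAexpression}) to $\int_0^{\omega_\varkappa/\varrho_\varkappa}\left(1-F_{\gamma_\varkappa}(\gamma)\right)(1+\gamma)^{-1}\,d\gamma$, substitution of (\ref{CDFIQI}), Mellin--Barnes expansion of the univariate FHF, evaluation of the inner $\gamma$-integral, and identification of the resulting double contour integral with the bivariate FHF of \cite[Eq. (2.57)]{mathai}. The only (immaterial) difference is in the ordering of the last two operations: the paper evaluates the inner integral in one shot via \cite[Eq. (3.197.8)]{integraltable} as $\frac{\omega_\varkappa}{\varrho_\varkappa}\Gamma(1+v)\Gamma(1-v)\,{}_2F_1\left(1,1-v;2;-\frac{\omega_\varkappa}{\varrho_\varkappa}\right)$ and then re-expands the Gauss hypergeometric function as a Mellin--Barnes integral, whereas you insert the Mellin--Barnes kernel of $(1+\gamma)^{-1}$ first and finish with the Beta integral --- the two orderings yield the same double integral, including the cancellation of $\Gamma(1+v)$ that you correctly identify.
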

	
	\begin{proof}
		The proof is provided in Appendix A.
	\end{proof}
	
	\begin{corollary}
		\textcolor{black}{For ideal RF e2e, the expression of the CC under
			ORA policy can be reduced to}
		\begin{equation}
		\mathcal{C}^{\left( id\right) }=\frac{1}{\log \left( 2\right) }\sum_{\ell
			=1}^{L}\frac{\psi _{\ell }}{\phi _{\ell }}{\small H}_{p+2{\small ,q+2}}^{m+2%
			{\small ,n+1}}\left( \phi _{\ell }\left\vert
		\begin{array}{c}
		\left( 0,1\right) ,\Psi ^{\left( \ell \right) },\left( 1,1\right) \\
		\left( 0,1\right) ,\left( 0,1\right) ,\Upsilon ^{\left( \ell \right) }%
		\end{array}%
		\right. \right).  \label{idealcapa}
		\end{equation}%
		\label{crolaryidealcapa}
	\end{corollary}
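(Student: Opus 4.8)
The plan is to prove (\ref{idealcapa}) not by sending $\varrho_\varkappa\to 0$ in Proposition \ref{ORAexact} --- which is impossible, since both arguments $\phi_\ell\tau_\varkappa/\varrho_\varkappa$ and $\omega_\varkappa/\varrho_\varkappa$ of the bivariate FHF in (\ref{ORAcapa}) then diverge --- but by re-evaluating the integral (\ref{ORAexpression}) directly in the ideal regime. By Remark \ref{remark1}, ideal RF e2e amounts to $\omega_\varkappa=\tau_\varkappa=1$ and $\varrho_\varkappa=0$, so that the upper limit $\omega_\varkappa/\varrho_\varkappa$ becomes $+\infty$ and $F_{\gamma_\varkappa}\equiv F_{\gamma_{id}}$; in this degenerate case the bivariate structure collapses to a univariate one.

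First I would integrate (\ref{ORAexpression}) by parts, writing $\mathcal{C}^{(id)}=\frac{1}{\log(2)}\int_0^\infty\log(1+\gamma)\,dF_{\gamma_{id}}(\gamma)$. The boundary term $\big[\log(1+\gamma)\,(F_{\gamma_{id}}(\gamma)-1)\big]_0^\infty$ vanishes --- at $\gamma=0$ because $\log 1=0$, and at $\gamma\to\infty$ because $1-F_{\gamma_{id}}(\gamma)=\int_\gamma^\infty f_{\gamma_{id}}(t)\,dt$ decays algebraically (the FHF tail in (\ref{pdf}) is at most polynomial) while $\log(1+\gamma)$ grows only logarithmically --- leaving $\mathcal{C}^{(id)}=\frac{1}{\log(2)}\int_0^\infty\frac{1-F_{\gamma_{id}}(\gamma)}{1+\gamma}\,d\gamma$. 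I would then substitute the closed form (\ref{CDFideal}) of $1-F_{\gamma_{id}}$ together with the elementary Mellin--Barnes identity $\frac{1}{1+\gamma}=H_{1,1}^{1,1}\!\left[\gamma\,\big|\,(0,1);(0,1)\right]$, so that each $\ell$-summand becomes the Mellin-convolution integral $\int_0^\infty H_{1,1}^{1,1}\!\left[\gamma\,\big|\,(0,1);(0,1)\right]H_{p+1,q+1}^{m+1,n}\!\left[\phi_\ell\gamma\,\big|\,\Psi^{(\ell)},(1,1);(0,1),\Upsilon^{(\ell)}\right]d\gamma$.

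Next I would invoke the classical formula for the integral over $(0,\infty)$ of a product of two Fox's H-functions \cite{mathai} (equivalently, the product Mellin-transform identity used in \cite{kilbas}), which evaluates such an integral as a single univariate FHF whose order indices are the cross-sums of those of the two factors, i.e. $m'=m+2$, $n'=n+1$, $p'=p+2$, $q'=q+2$ here. For $\rho=1$ the factor $\frac{1}{1+\gamma}$ contributes exactly one extra numerator pair $(0,1)$ and one extra denominator pair $(0,1)$ --- coming, respectively, from its lower and its upper parameter --- while the argument stays equal to $\phi_\ell$ with unit prefactor; together with the $(1,1)$/$(0,1)$ pair already carried by (\ref{CDFideal}) this yields precisely $H_{p+2,q+2}^{m+2,n+1}\!\left[\phi_\ell\,\big|\,(0,1),\Psi^{(\ell)},(1,1);(0,1),(0,1),\Upsilon^{(\ell)}\right]$. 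Restoring the factor $\psi_\ell/\phi_\ell$, summing over $\ell$ and dividing by $\log(2)$ then gives (\ref{idealcapa}).

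The computation is essentially bookkeeping, and that is where the only real obstacle lies: one must assign the $\Gamma$-factors of the merged Mellin--Barnes integrand to the $m$-, $n$-, $(p-n)$- and $(q-m)$-blocks of the output H-function in the correct order, and check that the poles of $\Gamma(s)$ introduced by $\frac{1}{1+\gamma}$ stay separated from the channel gammas, so that the contour can be chosen and the convergence hypotheses of the product formula are met --- a requirement that the slow (logarithmic) growth of $\log(1+\gamma)$ secures uniformly over every fading model covered by (\ref{pdf}).
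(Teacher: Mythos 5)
Your proposal is correct, and it reaches \eqref{idealcapa} by a genuinely different route from the paper. The paper's Appendix B does \emph{not} re-derive the ideal case from scratch: it takes the double Mellin--Barnes representation \eqref{ORRA} obtained in Appendix A for the impaired case, sets $\omega_\varkappa=\tau_\varkappa=1$, performs the substitution $s=-w-v$ to isolate an inner contour integral $\mathcal{K}_\varkappa$ carrying the factor $\varrho_\varkappa^{-s}$, expands $\mathcal{K}_\varkappa$ as a residue series in powers of $\varrho_\varkappa$ via \cite[Theorem 1.3]{kilbas}, and observes that at $\varrho_\varkappa=0$ only the leading residue $\Gamma(v)\Gamma(1-v)/\Gamma(1+v)$ survives, which collapses the double integral to the single Mellin--Barnes integral defining $H^{m+2,n+1}_{p+2,q+2}(\phi_\ell\,|\,\cdot)$. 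You instead specialize \emph{before} any contour manipulation: upper limit $\infty$, $F_{\gamma_\varkappa}=F_{\gamma_{id}}$, integration by parts (the same step the paper performs in Appendix A), and then the classical $\int_0^\infty$ product formula for two Fox's H-functions applied to $\tfrac{1}{1+\gamma}=H^{1,1}_{1,1}[\gamma\,|\,(0,1);(0,1)]$ against the CCDF \eqref{CDFideal}. Your index bookkeeping is right (cross-sums give $m+2,n+1,p+2,q+2$, argument $\phi_\ell$, and the merged integrand $\Gamma(v)^2\Gamma(1-v)/\Gamma(1+v)$ times the channel gammas is exactly the kernel the paper ends with), and your justification of the vanishing boundary term is adequate since any FHF tail decay beats $\log(1+\gamma)$. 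What each approach buys: yours is shorter and self-contained, needing only the standard H-function convolution identity rather than the bivariate machinery of Appendix A; the paper's degeneration argument has the side benefit that the residue series \eqref{K-residu} exhibits the ideal result as the zeroth-order term of an expansion in $\varrho_\varkappa$, i.e.\ it also yields correction terms for small but nonzero impairments, and it explains why the bivariate FHF of Proposition~\ref{ORAexact} cannot simply be evaluated at $\varrho_\varkappa=0$ (the point you correctly flag at the outset).
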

	
	\begin{proof}
		The proof is provided in Appendix B.
	\end{proof}
	
	\subsubsection{Asymptotic analysis}
	
	By plugging $\left( \ref{ApproFHF}\right) $ in $\left( \ref{EQORAappedix}%
	\right) $, the asymptotic expression of the \textcolor{black}{CC} in high SNR
	regime can be written as
	\begin{eqnarray}
	\mathcal{C}^{\left( \varkappa \right) } &\sim &\log _{2}\left( 1+\frac{%
		\omega _{\varkappa }}{\varrho _{\varkappa }}\right) \sum_{\ell =1}^{L}\frac{%
		\psi _{\ell }}{\phi _{\ell }}E_{\ell }-\sum_{\ell =1}^{L}\frac{\psi _{\ell }%
	}{\log \left( 2\right) \phi _{\ell }}\sum_{i=1}^{m}\mathcal{F}_{\ell ,i}\int_{0}^{\frac{\omega _{\varkappa }}{\varrho _{\varkappa }}}\frac{1%
	}{1+\gamma }\left( \frac{\phi _{\ell }\tau _{\varkappa }\gamma }{\omega
		_{\varkappa }-\varrho _{\varkappa }\gamma }\right) ^{\frac{\mathcal{B}%
			_{i}^{\left( \ell \right) }}{B_{i}^{\left( \ell \right) }}}d\gamma ,  \notag
	\\
	&\overset{(a)}{\sim }&\log _{2}\left( 1+\frac{\omega _{\varkappa }}{\varrho
		_{\varkappa }}\right) -\sum_{\ell =1}^{L}\frac{\psi _{\ell }}{\log \left(
		2\right) \phi _{\ell }}\sum_{i=1}^{m}\mathcal{F}_{\ell ,i}\int_{0}^{\frac{\omega _{\varkappa }}{\varrho _{\varkappa }}}\frac{1%
	}{1+\gamma }\left( \frac{\phi _{\ell }\tau _{\varkappa }\gamma }{\omega
		_{\varkappa }-\varrho _{\varkappa }\gamma }\right) ^{\frac{\mathcal{B}%
			_{i}^{\left( \ell \right) }}{B_{i}^{\left( \ell \right) }}}d\gamma .
	\end{eqnarray}%
	where step $\left( a\right) $ holds using $\left( \ref{Condition}\right) $.
	By using \cite[Eqs. (2.9.5),(2.9.15)]{kilbas} and \cite[3.194 Eq. (1)]%
	{integraltable}, one obtains
	\begin{eqnarray}
	\mathcal{C}^{\left( \varkappa \right) } &\sim &\log _{2}\left( 1+\frac{%
		\omega _{\varkappa }}{\varrho _{\varkappa }}\right) -\frac{\omega
		_{\varkappa }}{\log (2)\varrho _{\varkappa }}\sum_{\ell =1}^{L}\frac{\psi
		_{\ell }}{\phi _{\ell }}\sum_{i=1}^{m}\frac{\mathcal{F}_{\ell ,i}}{\Gamma
		\left( \frac{\mathcal{B}_{i}}{Bi}\right) }\left( \frac{\phi _{\ell }\tau
		_{\varkappa }}{\varrho _{\varkappa }}\right) ^{\frac{\mathcal{B}_{i}^{\left(
				\ell \right) }}{B_{i}^{\left( \ell \right) }}}\left( \frac{1}{2\pi j}\right)
	^{2}\int_{\mathcal{L}_{s}^{\left( \ell \right) }}\int_{\mathcal{L}%
		_{v}^{\left( \ell \right) }}\Gamma \left( s\right)  \notag \\
	&&\times \frac{\Gamma \left( 1-s\right) \Gamma \left( v\right) \Gamma \left(
		\frac{\mathcal{B}_{i}^{\left( \ell \right) }}{B_{i}^{\left( \ell \right) }}%
		-v\right) \Gamma \left( 1+\frac{\mathcal{B}_{i}^{\left( \ell \right) }}{%
			B_{i}^{\left( \ell \right) }}-s-v\right) }{\Gamma \left( 2+\frac{\mathcal{B}%
			_{i}^{\left( \ell \right) }}{B_{i}^{\left( \ell \right) }}-s-v\right) }%
	\left( \frac{\omega _{\varkappa }}{\varrho _{\varkappa }}\right) ^{-s}\left(
	-1\right) ^{-v}dsdv.  \label{ZZ}
	\end{eqnarray}%
	which can be equivalently expressed using the bivariate FHF as
	\begin{IEEEeqnarray}{rCl}
		\mathcal{C}^{\left( \varkappa \right) } &\sim &\log _{2}\left( 1+\frac{%
			\omega _{\varkappa }}{\varrho _{\varkappa }}\right) -\frac{\omega
			_{\varkappa }}{\log (2)\varrho _{\varkappa }}\sum_{\ell =1}^{L}\frac{\psi
			_{\ell }}{\phi _{\ell }}\sum_{i=1}^{m}\frac{\mathcal{F}_{\ell ,i}}{\Gamma
			\left( \frac{\mathcal{B}_{i}}{Bi}\right) }\left( \frac{\phi _{\ell }\tau
			_{\varkappa }}{\varrho _{\varkappa }}\right) ^{\substack{ \frac{\mathcal{B}%
					_{i}^{\left( \ell \right) }}{B_{i}^{\left( \ell \right) }} \\ }}  \notag \\
		&&\times {\small H}_{1,1:1{\small ,1:1,1}}^{0,1:1{\small ,1:1,1}}\left(
		\frac{\omega _{\varkappa }}{\varrho _{\varkappa }},-1\left\vert
		\begin{array}{c}
			\left( {\small -}\frac{\mathcal{B}_{i}^{\left( \ell \right) }}{B_{i}^{\left(
					\ell \right) }};{\small 1},{\small 1}\right) ;- \\
			-;\left( -1{\small -}\frac{\mathcal{B}_{i}^{\left( \ell \right) }}{%
				B_{i}^{\left( \ell \right) }};{\small 1},{\small 1}\right)
		\end{array}%
		\right\vert
		\begin{array}{c}
			\left( 0,1\right) ;- \\
			\left( 0,1\right) ;-%
		\end{array}%
		\left\vert
		\begin{array}{c}
			\left( 1-\frac{\mathcal{B}_{i}^{\left( \ell \right) }}{B_{i}^{\left( \ell
					\right) }},1\right) ;- \\
			\left( 0,1\right) ;-%
		\end{array}%
		\right. \right) .  \label{asymptoORA}
	\end{IEEEeqnarray}
	
	\begin{remark}\label{remarkCapaceiling}
		\begin{itemize}\item[]
			\item As $\phi _{\ell }$ is inversely proportional to $\overline{\gamma }%
			_{id}$ and based on $\left( \ref{Condition}\right) $,
			\textcolor{black}{it
				follows} that $\frac{\psi _{\ell }}{\phi _{\ell }}\neq \infty .$ Thus, it
			can be clearly seen from $\left( \ref{asymptoORA}\right)$ that
			\begin{equation}
			\lim_{\overline{\gamma }_{id}\rightarrow \infty }\mathcal{C}^{\left(
				\varkappa \right) }\mathcal{=C}^{\left( \varkappa ,\infty \right) }=\log
			_{2}\left( 1+\frac{\omega _{\varkappa }}{\varrho _{\varkappa }}\right) ,
			\label{capaceiling}
			\end{equation}%
			which demonstrates that the capacity channel for non-ideal RF front end
			(i.e., $\varrho _{\varkappa }\neq 0$) has a ceiling, depending exclusively
			on the impairments parameters and is irrespective of the fading severity
			parameters, that can't be crossed by increasing the SNR.
			
			\item The asymptotic expression of the CC under ideal case can be obtained
			straightforwardly using \cite[Eq. (1.8.2)]{kilbas} \textcolor{black}{along}
			with (\ref{idealcapa}).
		\end{itemize}
	\end{remark}
	\subsection{Average Symbol Error Probability}
	
	In this section, a tight approximate expression for the ASEP of various
	modulation schemes is derived. We start first by obtaining a simple
	exponential-based approximate expression for the SEP using the Trapezoidal
	integration rule \cite{yassineaccess}.
	
	Let $\mathcal{H}\left( \gamma \right)$ denote the SEP for a given modulation
	scheme. Table \ref{SEPmodulations} outlines the SEP's integral-based
	expression for different modulation techniques.
	\begin{table}[tbh]
		\captionsetup{font=small}
		\caption{Integral form of the SEP for numerous modulation schemes.}
		\label{SEPmodulations}\centering%
		\begin{tabular}{c|c}
			\hline
			\textit{Modulation scheme} & $\mathcal{H}\left( \gamma \right) $ \\
			\hline\hline
			$M$-PSK & $\mathcal{D}\left( 1,\frac{M-1}{M},\sin ^{2}\left( \frac{\pi }{M}%
			\right) ,\gamma \right) $ \\ \hline
			$M$-QAM & $\mathcal{D}\left( \frac{4\left( \sqrt{M}-1\right) }{\sqrt{M}},%
			\frac{1}{2},\frac{3}{2\left( M-1\right) },\gamma \right) -\mathcal{D}\left(
			4\left( \frac{\sqrt{M}-1}{\sqrt{M}}\right) ^{2},\frac{1}{4},\frac{3}{2\left(
				M-1\right) },\gamma \right) $ \\ \hline
			$M$-DPSK & $\mathcal{E}\left( 1,\frac{M-1}{M},\sin ^{2}\left( \frac{\pi }{M}%
			\right) ,\cos \left( \frac{\pi }{M}\right) ,\gamma \right) $ \\ \hline
			GC-DQPSK & $\mathcal{E}\left( \frac{1}{2},1,1,-\frac{1}{\sqrt{2}},\gamma
			\right) $ \\ \hline\hline
		\end{tabular}%
	\end{table}
	\par
	with
	\begin{equation}
	\mathcal{D}\left( a,b,c,\gamma \right) =\frac{a}{\pi }\int_{0}^{b\pi }\exp
	\left( -\frac{c\gamma }{\sin ^{2}\left( t\right) }\right) dt,  \label{Dfunc}
	\end{equation}%
	and
	\begin{equation}
	\mathcal{E}\left( a,b,c,d,\gamma \right) =\frac{a}{\pi }\int_{0}^{b\pi }\exp
	\left( -\frac{c\gamma }{1+d\cos \left( t\right) }\right) dt.  \label{Efunc}
	\end{equation}%
	\newline
	It follows that the ASEP can be calculated by averaging $\mathcal{H}\left(
	\gamma \right) $ over the statistics of the involved fading channel as
	\begin{equation}
	P_{s}^{\left( \varkappa \right) }=\int_{0}^{\infty }f_{\gamma _{\varkappa
	}}\left( \gamma \right) \mathcal{H}\left( \gamma \right) d\gamma .
	\label{generalASEP}
	\end{equation}%
	In the sequel, \textcolor{black}{a} tight approximate expression for the SEP
	per each considered modulation scheme is obtained, relying on Table \ref%
	{SEPmodulations} alongside the Trapezoidal rule. Leveraging these results,
	accurate approximate and asymptotic expressions for ASEP are provided using $%
	\left( \ref{pdf}\right) $ jointly with $\left( \ref{generalASEP}\right) $.
	
	\begin{proposition}
		The SEP for the considered modulation schemes can be tightly approximated by
		\begin{equation}
		\mathcal{H}\left( \gamma \right) \simeq \sum_{n=0}^{N}\theta _{n}\exp \left(
		-\delta _{n}\gamma \right) ,  \label{SEPbin}
		\end{equation}%
		where $\theta _{n}$ and $\delta _{n}$ are summarized in Tables \ref%
		{diracandteta1} and \ref{diracandteta2} depending on the %
		\textcolor{black}{employed} modulation scheme. Note that an even positive
		number $N$ is required to evaluate such coefficients for $M$-QAM modulation
		technique.
		\begin{table}[tbh]
			\captionsetup{font=small,width=0.8\columnwidth}
			\caption{The coefficients $\protect\theta _{n}$ and $\protect%
				\delta _{n}$ for $M$-PSK, $M$-DPSK and GC-DQPSK modulations.}
			\label{diracandteta1}\renewcommand{\arraystretch}{1} \centering%
			\begin{tabular}[b]{c|c|c|c}
				\hline
				\textit{Modulation} & \multicolumn{2}{c|}{$\theta _{n}$} & %
				\multirow{2}{*}{\textcolor{black}{$\delta _{n}\ (0\leq n\leq N)$}} \\ \cline{2-3}
				\textit{scheme} & $n=0$, $N$ & $1\leq n\leq N-1$ &  \\ \hline\hline
				$M$-PSK & $\frac{M-1}{2NM}$ & $\frac{M-1}{NM}$ & $\frac{\sin ^{2}\left(
					\frac{\pi }{M}\right) }{\sin ^{2}\left( \frac{n\left( M-1\right) \pi }{NM}%
					\right) }$ \\ \hline
				$M$-DPSK & $\frac{M-1}{2NM}$ & $\frac{M-1}{NM}$ & $\frac{\sin ^{2}\left(
					\frac{\pi }{M}\right) }{1+\cos \left( \frac{\pi }{M}\right) \cos \left(
					\frac{n\left( M-1\right) \pi }{NM}\right) }$ \\ \hline
				GC-DQPSK & $\frac{1}{4N}$ & $\frac{1}{2N}$ & $\frac{1}{1-\frac{1}{\sqrt{2}}%
					\cos \left( \frac{n\pi }{N}\right) }$ \\ \hline\hline
			\end{tabular}%
		\end{table}
		\begin{table}[tbh]
			\captionsetup{font=small}
			\caption{The coefficients $\protect\theta _{n}$ and $\protect%
				\delta _{n}$ for $M$-QAM modulation.}
			\label{diracandteta2}\renewcommand{\arraystretch}{1} \centering%
			\begin{tabular}[b]{c|c|c|c|c|c|c}
				\hline
				\textit{Modulation} & \multicolumn{5}{c|}{$\theta _{n}$} & %
				\multirow{2}{*}{\textcolor{black}{$\delta _{n}\ (0\leq n\leq N)$}} \\ \cline{2-6}
				\textit{scheme} & $n=0$ & $1\leq n\leq \frac{N}{2}-1$ & $n=\frac{N}{2}$ & $%
				\frac{N}{2}+1\leq n\leq N-1$ & $n=N$ &  \\ \hline\hline
				$M$-QAM & $\frac{\sqrt{M}-1}{NM}$ & $2\frac{\sqrt{M}-1}{NM}$ & $\frac{M-1}{NM%
				}$ & $2\frac{\sqrt{M}-1}{N\sqrt{M}}$ & $2\frac{\sqrt{M}-1}{N\sqrt{M}}$ & $%
				\frac{3}{2\left( M-1\right) \sin ^{2}\left( \frac{n\pi }{2N}\right) }$ \\
				\hline\hline
			\end{tabular}%
		\end{table}
	\end{proposition}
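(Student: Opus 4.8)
The plan is to establish $\left( \ref{SEPbin}\right) $ by discretizing the two exact finite-integral (Craig-type) SEP representations $\left( \ref{Dfunc}\right) $ and $\left( \ref{Efunc}\right) $, on which every line of Table \ref{SEPmodulations} is built, with the composite trapezoidal quadrature rule, following the methodology of \cite{yassineaccess}, and then reading off the coefficients $\theta _{n},\delta _{n}$ per modulation. Recall that for a smooth function $g$ on $[0,b\pi ]$ the $N$-panel trapezoidal rule reads
\[
\int_{0}^{b\pi }g(t)\,dt\approx \frac{b\pi }{N}\left[ \frac{g(0)+g(b\pi )}{2}+\sum_{n=1}^{N-1}g\!\left( \tfrac{nb\pi }{N}\right) \right] .
\]

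First I would dispatch the $M$-PSK, $M$-DPSK and GC-DQPSK cases, which follow in one line. Substituting $g(t)=\exp\!\big(-c\gamma /\sin ^{2}t\big)$ into the rule above converts the node $t_{n}=nb\pi /N$ into the exponential $\exp (-\delta _{n}\gamma )$ with $\delta _{n}=c/\sin ^{2}(nb\pi /N)$ and weight $\theta _{n}=ab/N$ for $1\le n\le N-1$, $\theta _{0}=\theta _{N}=ab/(2N)$; likewise $g(t)=\exp\!\big(-c\gamma /(1+d\cos t)\big)$ gives $\delta _{n}=c/(1+d\cos (nb\pi /N))$ with the same weights. Inserting $(a,b,c)=\big(1,\tfrac{M-1}{M},\sin ^{2}\tfrac{\pi }{M}\big)$, $(a,b,c,d)=\big(1,\tfrac{M-1}{M},\sin ^{2}\tfrac{\pi }{M},\cos \tfrac{\pi }{M}\big)$ and $(a,b,c,d)=\big(\tfrac12,1,1,-\tfrac1{\sqrt2}\big)$ reproduces line by line the entries of Table \ref{diracandteta1}; note that $\delta _{0}=\infty $ annihilates the $n=0$ term, consistent with $g(0)=0$ in $\left( \ref{Dfunc}\right) $.

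Then I would handle $M$-QAM, which I expect to be the only delicate step. There $\mathcal{H}(\gamma )$ is a \emph{difference} of two $\mathcal{D}$-functions with the same argument $c=\tfrac{3}{2(M-1)}$, one integrated over $[0,\tfrac{\pi }{2}]$ and one over $[0,\tfrac{\pi }{4}]$. The trick is to discretize both on a \emph{common} grid of step $h=\pi /(2N)$: then $[0,\tfrac{\pi }{2}]$ is covered by $N$ panels with nodes $t_{n}=n\pi /(2N)$, $n=0,\dots ,N$, while $[0,\tfrac{\pi }{4}]$ is covered by exactly $N/2$ panels with the nested nodes $t_{n}$, $n=0,\dots ,N/2$ --- which is precisely why $N$ must be even. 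Both quadratures then generate the \emph{same} exponentials $\exp (-\delta _{n}\gamma )$ with $\delta _{n}=\tfrac{3}{2(M-1)\sin ^{2}(n\pi /(2N))}$, so the $\theta _{n}$ are obtained by superposing the two trapezoidal weights, which coexist only for $0\le n\le N/2$. Bookkeeping the half-weights at the four endpoints and using the identities $\sqrt{M}(\sqrt{M}-1)-(\sqrt{M}-1)^{2}=\sqrt{M}-1$ and $2\sqrt{M}(\sqrt{M}-1)-(\sqrt{M}-1)^{2}=M-1$ then yields the five-region split recorded in Table \ref{diracandteta2}: $n=0$; $1\le n\le \tfrac N2-1$; $n=\tfrac N2$; $\tfrac N2+1\le n\le N-1$; $n=N$.

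Finally I would justify the ``$\simeq$''. After noting that the removable singularity of $\exp\!\big(-c\gamma /\sin ^{2}t\big)$ at $t=0$ is harmless --- the function and all its $t$-derivatives vanish there --- the integrands of $\left( \ref{Dfunc}\right) $ and $\left( \ref{Efunc}\right) $ are $C^{\infty }$ on the closed integration interval, so the Euler--Maclaurin estimate bounds the induced error in $\mathcal{H}$ by $O(N^{-2})$, with super-algebraic decay in $N$ whenever the integrand extends to a smooth periodic function of $t$ (as it does in the $\mathcal{E}$-cases); a modest $N$ therefore already gives a tight fit, to be confirmed by the numerical results. The crux is thus not analytical depth but the careful alignment of the two $M$-QAM quadratures on a shared grid and the attendant weight bookkeeping that produces the piecewise-defined $\theta _{n}$.
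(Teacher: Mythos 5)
Your proposal is correct and follows essentially the same route as the paper: apply the composite trapezoidal rule with step $\varphi=b\pi/N$ to the integral forms $\left(\ref{Dfunc}\right)$ and $\left(\ref{Efunc}\right)$ from Table \ref{SEPmodulations} and read off $\theta_n,\delta_n$; the paper states this in two lines and leaves the $M$-QAM bookkeeping implicit, whereas you spell out the common-grid alignment of the two $\mathcal{D}$-integrals (which is indeed the reason $N$ must be even) and supply the error estimate. One small caveat: a careful application of your own endpoint bookkeeping gives $\theta_N=\frac{\sqrt{M}-1}{N\sqrt{M}}$ for $M$-QAM (half the interior weight, since $t_N=\pi/2$ is an endpoint of the first integral only), whereas Table \ref{diracandteta2} lists $2\frac{\sqrt{M}-1}{N\sqrt{M}}$ at $n=N$; this appears to be a dropped factor of $\tfrac12$ in the table rather than a flaw in your argument.
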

	
	\begin{proof}
		Importantly, for a given positive number $N$, a real-valued $\varphi $, and
		an arbitrary function $f$, the following definite integral is known to be
		accurately approximated using the Trapezoidal rule as
		\begin{equation}
		\int_{0}^{N\varphi }f\left( t\right) dt=\frac{\varphi }{2}\left[
		g_{0}+g_{N}+2\sum\limits_{n=1}^{N-1}g_{n}\right] .  \label{TrapzEq}
		\end{equation}%
		where $g_{0}=f\left( 0\right) $ and $g_{n}=f\left( n\varphi \right) .$ By
		setting $\varphi =\frac{b\pi }{N}$ and \textcolor{black}{the function $f$ in} $%
		\left( \ref{TrapzEq}\right) $ to either $\mathcal{D}$ and $\mathcal{E}$
		defined in $\left( \ref{Dfunc}\right) $ or $\left( \ref{Efunc}\right) $,
		respectively, along with Table \ref{SEPmodulations}'s coefficients, and
		performing some algebraic operations, both Tables $\ref{diracandteta1},$ and
		$\ref{diracandteta2},$ can be easily obtained, which concludes the proof.
	\end{proof}
	
	\begin{remark}
		Of note, the greater $N$ is, the accurate the approximation is. However, the
		computational complexity becomes higher with the increase of such a number.
		To this end, it is necessary to look for an accuracy-complexity tradeoff.
		Numerically, we checked that above $N\geq 5$, the relative error becomes
		negligible \cite{yassineaccess}. Owing to this fact, $N$ is set to $5$ for
		the modulation techniques summarized in Table $\ref{diracandteta1}$ and to $%
		6 $ (i.e. an even number above $5$) for $M$-QAM.
	\end{remark}
	
	\subsubsection{Exact Analysis}
	
	\begin{proposition}
		\label{propASEPexact}
		The ASEP for the four considered modulation techniques can be tightly
		approximated as
		\begin{IEEEeqnarray}{rCl}
			P_{s}^{\left( \varkappa \right) } &\simeq &\sum_{n=0}^{N}\theta _{n}\left[
			1-\sum_{\ell =1}^{L}\frac{\psi _{\ell }}{\phi _{\ell }}{\small H}_{1,0:p%
				{\small ,q+1:0,2}}^{0,1:m+1{\small ,n:1,0}}\left( \frac{\phi _{\ell }\tau
				_{\varkappa }}{\varrho _{\varkappa }},\frac{\delta _{n}\omega _{\varkappa }}{%
				\varrho _{\varkappa }}\left\vert
			\begin{array}{c}
				\left( {\small 1};{\small 1},{\small 1}\right)  \\
				-%
			\end{array}%
			\right\vert
			\begin{array}{c}
				\Psi ^{\left( \ell \right) } \\
				\left( 0,1\right) ,\Upsilon ^{\left( \ell \right) }%
			\end{array}%
			\left\vert
			\begin{array}{c}
				- \\
				\left( 1,1\right) ;\left( 0,1\right)
			\end{array}%
			\right. \right) \right] .  \notag \\
			&&  \label{ASEPMARY}
		\end{IEEEeqnarray}
	\end{proposition}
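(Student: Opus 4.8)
The plan is to follow the same lines as the proof of Proposition~\ref{ORAexact}, with the logarithmic kernel $\log(1+\gamma)$ of the channel capacity replaced by the exponential kernel supplied by~\eqref{SEPbin}. First I would substitute the tight exponential approximation~\eqref{SEPbin} of $\mathcal{H}$ into~\eqref{generalASEP}; recalling from Remark~\ref{remark1} that $\gamma_{\varkappa}$ is upper bounded by $\omega_{\varkappa}/\varrho_{\varkappa}$, so that $f_{\gamma_{\varkappa}}$ is supported on $\left[0,\omega_{\varkappa}/\varrho_{\varkappa}\right]$, the ASEP collapses to the finite combination $P_{s}^{(\varkappa)}\simeq\sum_{n=0}^{N}\theta_{n}\int_{0}^{\omega_{\varkappa}/\varrho_{\varkappa}}e^{-\delta_{n}\gamma}f_{\gamma_{\varkappa}}(\gamma)\,d\gamma$ of truncated Laplace-type integrals.

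Next I would integrate each of these integrals by parts, taking $F_{\gamma_{\varkappa}}(\gamma)-1$ as an antiderivative of $f_{\gamma_{\varkappa}}$. The boundary contribution at $\gamma=\omega_{\varkappa}/\varrho_{\varkappa}$ vanishes since $F_{\gamma_{\varkappa}}(\omega_{\varkappa}/\varrho_{\varkappa})=1$, whereas the one at $\gamma=0$ yields $+1$; hence each integral equals $1-\delta_{n}\int_{0}^{\omega_{\varkappa}/\varrho_{\varkappa}}e^{-\delta_{n}\gamma}\bigl(1-F_{\gamma_{\varkappa}}(\gamma)\bigr)\,d\gamma$, which already produces the leading term $1$ and the finite $\ell$-sum in~\eqref{ASEPMARY} once the complementary CDF from~\eqref{CDFIQI} is inserted and the (finite) $\ell$-summation is exchanged with the integral. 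This leaves, for each pair $(n,\ell)$, the single core integral $\mathcal{I}_{n,\ell}=\delta_{n}\int_{0}^{\omega_{\varkappa}/\varrho_{\varkappa}}e^{-\delta_{n}\gamma}\,H_{p+1,q+1}^{m+1,n}\!\bigl(\tfrac{\phi_{\ell}\tau_{\varkappa}\gamma}{\omega_{\varkappa}-\varrho_{\varkappa}\gamma}\bigr)\,d\gamma$ to be turned into a bivariate Fox H-function.

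To do so, I would represent $e^{-\delta_{n}\gamma}$ by the Mellin--Barnes integral $\frac{1}{2\pi j}\int\Gamma(s)(\delta_{n}\gamma)^{-s}\,ds$ and the univariate FHF through its defining contour integral~\eqref{Hffox} in a variable $v$, then factor $(\omega_{\varkappa}-\varrho_{\varkappa}\gamma)^{v}=\varrho_{\varkappa}^{v}\bigl(\tfrac{\omega_{\varkappa}}{\varrho_{\varkappa}}-\gamma\bigr)^{v}$ out of the $v$-contour. The remaining $\gamma$-integral is the elementary Euler/Beta integral $\int_{0}^{\omega_{\varkappa}/\varrho_{\varkappa}}\gamma^{-s-v}\bigl(\tfrac{\omega_{\varkappa}}{\varrho_{\varkappa}}-\gamma\bigr)^{v}\,d\gamma=\bigl(\tfrac{\omega_{\varkappa}}{\varrho_{\varkappa}}\bigr)^{1-s}\frac{\Gamma(1-s-v)\Gamma(1+v)}{\Gamma(2-s)}$, valid for contours placed so that $\Re(1-s-v)>0$ and $\Re(1+v)>0$. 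After the shift $s\mapsto s+1$, the $\delta_{n}$ in front merges with the surviving powers into $(\delta_{n}\omega_{\varkappa}/\varrho_{\varkappa})^{-s}$ and the FHF-argument powers into $(\phi_{\ell}\tau_{\varkappa}/\varrho_{\varkappa})^{-v}$, so that $\mathcal{I}_{n,\ell}$ is a double Mellin--Barnes integral whose two external arguments are exactly $\phi_{\ell}\tau_{\varkappa}/\varrho_{\varkappa}$ and $\delta_{n}\omega_{\varkappa}/\varrho_{\varkappa}$. Comparing the resulting products of gamma functions and the contours with the definition of the bivariate Fox H-function in~\cite[Eq. (2.57)]{mathai} identifies $\mathcal{I}_{n,\ell}$ with the Fox H in~\eqref{ASEPMARY}; restoring the prefactors $\theta_{n}\psi_{\ell}/\phi_{\ell}$, with $\theta_{n}$ and $\delta_{n}$ read from Tables~\ref{diracandteta1}--\ref{diracandteta2}, then yields~\eqref{ASEPMARY}.

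The integration by parts and the Beta integral are routine; the delicate part will be the final identification, i.e.\ correctly distributing the three parameter blocks of~\cite[Eq. (2.57)]{mathai} (the outer block, the inner $(m+1,n)$--$(p,q+1)$ block inherited from the CDF, and the inner block coming from the exponential kernel) and their associated contours, and justifying---by absolute convergence on the chosen contours---the interchanges of the $\ell$-summation, the two contour integrations and the $\gamma$-integration. The approximation sign $\simeq$ is carried along from~\eqref{SEPbin} and from the high-IRR approximation already made for $\gamma_{t/r}$.
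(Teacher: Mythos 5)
Your proposal is correct and follows essentially the same route as the paper's Appendix C: substitute the exponential approximation \eqref{SEPbin} into the ASEP integral over $[0,\omega_{\varkappa}/\varrho_{\varkappa}]$, integrate by parts against the CDF \eqref{CDFIQI} to isolate $1-\delta_{n}\sum_{\ell}\frac{\psi_{\ell}}{\phi_{\ell}}\int_{0}^{\omega_{\varkappa}/\varrho_{\varkappa}}e^{-\delta_{n}\gamma}H(\cdot)\,d\gamma$, convert to a double Mellin--Barnes integral, and match it to the bivariate Fox H-function. The only (immaterial) difference is in the inner $\gamma$-integral: the paper keeps $e^{-\delta_{n}\gamma}$ intact, evaluates the integral via \cite[Eq.\ (3.383.1)]{integraltable} as a ${}_{1}F_{1}$, and then re-expands that ${}_{1}F_{1}$ as a contour integral, whereas you expand the exponential first and reduce the inner integral to an elementary Beta integral---two orderings of the same manipulation yielding the same gamma-function kernel.
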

	
	\begin{proof}
		The proof is provided in Appendix C.
	\end{proof}
	
	\begin{corollary}
		\label{corolidealasep}
		For \textcolor{black}{ideal RF e2e}, the above approximate ASEP's expression
		can be simplified in terms of a univariate FHF instead as
		\begin{equation}
		P_{s}^{\left( \varkappa \right) }\simeq \sum_{n=1}^{N}\theta _{n}-\sum_{\ell
			=1}^{L}\frac{\psi _{\ell }}{\phi _{\ell }}\sum_{n=1}^{N}\theta _{n}{\small H}%
		_{p+2{\small ,q+1}}^{m+1{\small ,n+1}}\left( \frac{\phi _{\ell }}{\delta _{n}%
		}\left\vert
		\begin{array}{c}
		\left( 0,1\right) ,\Psi ^{\left( \ell \right) },\left( 1,1\right)  \\
		\left( 0,1\right) ,\Upsilon ^{\left( \ell \right) }%
		\end{array}%
		\right. \right) .  \label{idealASEP}
		\end{equation}
	\end{corollary}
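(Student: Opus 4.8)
The plan is to derive (\ref{idealASEP}) by specialising the averaging identity (\ref{generalASEP}) directly to the ideal front-end, rather than by degenerating the bivariate Fox H-function of Proposition~\ref{propASEPexact}: letting $\varrho_\varkappa\to0$ there is singular, since both of its arguments, $\phi_\ell\tau_\varkappa/\varrho_\varkappa$ and $\delta_n\omega_\varkappa/\varrho_\varkappa$, diverge. By Remark~\ref{remark1}, ideal RF e2e means $\omega_\varkappa=\tau_\varkappa=1$ and $\varrho_\varkappa=0$, so $\gamma_\varkappa\equiv\gamma_{id}$ and $f_{\gamma_\varkappa}=f_{\gamma_{id}}$ is the pdf in (\ref{pdf}). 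Substituting the exponential SEP approximation (\ref{SEPbin}) into (\ref{generalASEP}) gives $P_s^{(id)}\simeq\sum_{n=0}^{N}\theta_n\int_0^{\infty}f_{\gamma_{id}}(\gamma)e^{-\delta_n\gamma}\,d\gamma$, i.e. a $\theta_n$-weighted sum of Laplace transforms of the SNR pdf evaluated at the exponents $\delta_n$ tabulated in Tables~\ref{diracandteta1} and \ref{diracandteta2}.

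Next I would evaluate each transform through the CDF. Integration by parts gives $\mathcal{L}\{f_{\gamma_{id}}\}(\delta)=\delta\,\mathcal{L}\{F_{\gamma_{id}}\}(\delta)$, the boundary terms vanishing because $F_{\gamma_{id}}(0)=0$ and $e^{-\delta t}F_{\gamma_{id}}(t)\to0$ as $t\to\infty$. Substituting the closed form (\ref{CDFideal}) of $F_{\gamma_{id}}$, the constant part contributes $\delta\,\mathcal{L}\{1\}(\delta)=1$, while the Fox H part requires the Laplace transform of $t\mapsto H_{p+1,q+1}^{m+1,n}(\phi_\ell t)$. Replacing this H-function by its Mellin-Barnes representation and using $\int_0^{\infty}e^{-\delta t}t^{-v}\,dt=\delta^{v-1}\Gamma(1-v)$ on a suitably placed contour, the extra $\Gamma(1-v)$ enters as a new numerator factor, i.e. as an additional top parameter $(0,1)$; equivalently one invokes the Laplace-transform-of-a-Fox-H identities \cite[Eqs. (2.9.5), (2.9.15)]{kilbas} together with standard integral tables \cite{integraltable}. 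This turns $H_{p+1,q+1}^{m+1,n}$ into $\frac{1}{\delta}H_{p+2,q+1}^{m+1,n+1}$ with exactly the parameter list displayed in (\ref{idealASEP}), so $\mathcal{L}\{f_{\gamma_{id}}\}(\delta_n)=1-\sum_{\ell=1}^{L}\frac{\psi_\ell}{\phi_\ell}H_{p+2,q+1}^{m+1,n+1}(\phi_\ell/\delta_n)$. Regrouping the resulting double sum yields (\ref{idealASEP}); consistency with (\ref{ASEPMARY}) in the limit $\varrho_\varkappa\to0$ provides a sanity check via (\ref{Condition}).

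I expect the main obstacle to be the Laplace-transform-of-Fox-H step: the Mellin-Barnes contour must be placed so that the new $\Gamma(1-v)$ is captured correctly (and the $v$- and $t$-integrals may be interchanged), and the resulting $H_{p+2,q+1}^{m+1,n+1}$ must satisfy the usual existence and convergence conditions — this is precisely where admissibility of the appended parameters $(0,1)$ has to be verified. A secondary, minor point is the endpoint $n=0$ of the Trapezoidal sum: when $\delta_0$ is infinite (as for $M$-PSK and $M$-QAM, where $\sin^{2}(0)$ appears in the denominator of $\delta_0$) one has $\int_0^{\infty}f_{\gamma_{id}}(\gamma)e^{-\delta_0\gamma}\,d\gamma=0$, so that term drops and the index range collapses to $n=1,\dots,N$ as written; otherwise the $n=0$ term is retained. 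The remaining steps — integration by parts, substitution of (\ref{CDFideal}), and regrouping — are routine.
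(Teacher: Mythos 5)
Your proof is correct, but it takes a genuinely different route from the paper's. The paper's Appendix D does not return to (\ref{generalASEP}); it starts from the double Mellin--Barnes integral (\ref{Integral_n}) obtained in the proof of Proposition \ref{propASEPexact}, sets $\omega_\varkappa=\tau_\varkappa=1$, applies the change of variable $w=-s-v$ so that the $w$-integration becomes an inner contour integral $\mathcal{M}_\varkappa$ depending on $\varrho_\varkappa/\delta_n$, expands $\mathcal{M}_\varkappa$ as a residue series (\ref{M-residu}) in powers of $\varrho_\varkappa$, and keeps only the leading residue $\Gamma(1-v)/\Gamma(1+v)$ at $\varrho_\varkappa=0$; the surviving single $v$-integral is then read off as the univariate FHF in (\ref{idealASEP}). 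You instead specialise the model first ($\gamma_\varkappa\equiv\gamma_{id}$), write the ASEP as a $\theta_n$-weighted sum of Laplace transforms of the SNR PDF, and evaluate each transform through (\ref{CDFideal}) and the Laplace transform of a Fox H-function; the extra $\Gamma(1-v)$ numerator factor you pick up from $\int_0^\infty e^{-\delta t}t^{-v}dt$ is exactly the residue factor the paper extracts from $\mathcal{M}_\varkappa$, so the two computations meet at the same Mellin--Barnes integrand and hence the same $H_{p+2,q+1}^{m+1,n+1}$. Your route is more elementary and cleanly sidesteps the degenerate limit of the bivariate FHF that you rightly flag as singular, at the cost of a contour-placement check for the added $\Gamma(1-v)$ (which is standard and holds on the usual strip $0<\mathrm{Re}(v)<1$); the paper's route has the side benefit of exhibiting the ideal case as the leading term of an expansion in $\varrho_\varkappa$, with the higher-order corrections available from (\ref{M-residu}). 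Your observation on the summation endpoint is also well taken: Proposition \ref{propASEPexact} sums from $n=0$ while the corollary sums from $n=1$, and the $n=0$ term genuinely vanishes only when $\delta_0=\infty$ (as for $M$-PSK and $M$-QAM), so for GC-DQPSK and $M$-DPSK it should in principle be retained --- a point the paper leaves implicit.
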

	
	\begin{proof}
		The proof is provided in Appendix D.
	\end{proof}
	
	\subsubsection{Asymptotic analysis}
	
	\begin{proposition}
		\label{propasympAsep}
		the asymptotic expression of the ASEP can be expressed as
		\begin{IEEEeqnarray}{rCl}
			P_{s}^{\left( \varkappa \right) } &\sim &\sum_{n=1}^{N}\theta _{n}\exp
			\left( -\frac{\delta _{n}\omega _{\varkappa }}{\varrho _{\varkappa }}\right)
			+\frac{\omega _{\varkappa }}{\varrho _{\varkappa }}\sum_{\ell =1}^{L}\frac{%
				\psi _{\ell }}{\phi _{\ell }}\sum_{n=1}^{N}\theta _{n}\delta
			_{n}\sum_{i=1}^{m}\frac{\mathcal{F}_{\ell ,i}\left( \frac{\phi _{\ell }\tau
					_{\varkappa }}{\omega _{\varkappa }}\right) }{\Gamma \left( \frac{\mathcal{B}%
					_{i}}{B_{i}}\right) }^{\frac{\mathcal{B}_{i}^{\left( \ell \right) }}{%
					B_{i}^{\left( \ell \right) }}}  \notag \\
			&&\times {\small H}_{1,1:0{\small ,1:1,1}}^{0,1:1{\small ,0:1,1}}\left(
			\frac{\delta _{n}\omega _{\varkappa }}{\varrho _{\varkappa }},-1\left\vert
			\begin{array}{c}
				\left( {\small -}\frac{\mathcal{B}_{i}^{\left( \ell \right) }}{B_{i}^{\left(
						\ell \right) }};{\small 1},{\small 1}\right)  \\
				-;\left( -1{\small -}\frac{\mathcal{B}_{i}^{\left( \ell \right) }}{%
					B_{i}^{\left( \ell \right) }};{\small 1},{\small 1}\right)
			\end{array}%
			\right\vert
			\begin{array}{c}
				- \\
				\left( 0,1\right) ;-%
			\end{array}%
			\left\vert
			\begin{array}{c}
				\left( 1-\frac{\mathcal{B}_{i}^{\left( \ell \right) }}{B_{i}^{\left( \ell
						\right) }},1\right) ;- \\
				\left( 0,1\right) ;-%
			\end{array}%
			\right. \right) .  \label{AsymptoASEP}
		\end{IEEEeqnarray}
	\end{proposition}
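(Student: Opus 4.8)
The plan is to mirror the asymptotic analysis already carried out for the channel capacity in equations (\ref{ZZ})–(\ref{asymptoORA}), but now applied to the ASEP integral rather than the ORA integral. First I would start from the exact ASEP representation obtained by combining the exponential SEP approximation (\ref{SEPbin}) with the averaging formula (\ref{generalASEP}), i.e. $P_{s}^{(\varkappa)}\simeq\sum_{n=0}^{N}\theta_{n}\int_{0}^{\omega_{\varkappa}/\varrho_{\varkappa}}e^{-\delta_{n}\gamma}\,\partial_{\gamma}F_{\gamma_{\varkappa}}(\gamma)\,d\gamma$ (after an integration by parts, exactly as in Appendix C), so that the integrand carries the FHF appearing in (\ref{CDFIQI}). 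Then I would substitute the high-SNR expansion (\ref{ApproFHF}) of that FHF, splitting the result into a ``leading'' piece coming from the constant term $E_{\ell}$ and a ``correction'' piece coming from the power terms $\mathcal{F}_{\ell,i}\bigl(\tfrac{\phi_{\ell}\tau_{\varkappa}\gamma}{\omega_{\varkappa}-\varrho_{\varkappa}\gamma}\bigr)^{\mathcal{B}_{i}^{(\ell)}/B_{i}^{(\ell)}}$.

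The leading piece reduces, after using the normalization condition (\ref{Condition}) $\sum_{\ell}E_{\ell}\psi_{\ell}/\phi_{\ell}=1$, to the closed-form term $\sum_{n=1}^{N}\theta_{n}\exp\!\bigl(-\delta_{n}\omega_{\varkappa}/\varrho_{\varkappa}\bigr)$; this is the boundary contribution of the by-parts step evaluated at the finite upper limit $\omega_{\varkappa}/\varrho_{\varkappa}$, and it is precisely the analogue of the $\log_{2}(1+\omega_{\varkappa}/\varrho_{\varkappa})$ ceiling term in the capacity expansion. For the correction piece I would, for each $\ell$ and each $i$, expand the factor $e^{-\delta_{n}\gamma}$ as a Mellin–Barnes integral, write $\bigl(\tfrac{\phi_{\ell}\tau_{\varkappa}\gamma}{\omega_{\varkappa}-\varrho_{\varkappa}\gamma}\bigr)^{\mathcal{B}_{i}^{(\ell)}/B_{i}^{(\ell)}}$ with the change of variable $\gamma\mapsto(\omega_{\varkappa}/\varrho_{\varkappa})u$ to expose a term of the form $u^{\mathcal{B}_i/B_i}(1-u)^{-\mathcal{B}_i/B_i}$, and then evaluate the resulting one-dimensional $u$-integral by the Beta-type formula \cite[3.194 Eq.~(1)]{integraltable} just as was done to pass from (\ref{ZZ}) to (\ref{asymptoORA}). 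The two remaining Mellin–Barnes contour integrals (one from the exponential, one from the $(1-u)$ power) assemble, via \cite[Eqs.~(2.9.5),(2.9.15)]{kilbas} and the definition \cite[Eq.~(2.57)]{mathai}, into the bivariate FHF with the stated parameter lists; matching the Gamma-factor bookkeeping fixes the orders $H_{1,1:0,1:1,1}^{0,1:1,0:1,1}$ and the arguments $\bigl(\delta_{n}\omega_{\varkappa}/\varrho_{\varkappa},\,-1\bigr)$, together with the prefactor $\tfrac{\omega_{\varkappa}}{\varrho_{\varkappa}}\sum_{\ell}\tfrac{\psi_{\ell}}{\phi_{\ell}}\sum_{n}\theta_{n}\delta_{n}\sum_{i}\tfrac{\mathcal{F}_{\ell,i}}{\Gamma(\mathcal{B}_i/B_i)}\bigl(\tfrac{\phi_{\ell}\tau_{\varkappa}}{\omega_{\varkappa}}\bigr)^{\mathcal{B}_i^{(\ell)}/B_i^{(\ell)}}$ that appears in (\ref{AsymptoASEP}).

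The main obstacle I anticipate is the careful bookkeeping in the correction piece: keeping track of the exact Gamma-function numerator/denominator structure so that it collapses into the advertised bivariate-FHF orders, and in particular justifying the contour placements so that the $u$-integral converges and the interchange of summation, integration, and the two Mellin–Barnes integrals is legitimate in the high-SNR regime (where $\phi_{\ell}\to0$). Since the capacity computation in (\ref{ZZ})–(\ref{asymptoORA}) already performed an essentially identical manipulation, I would largely import its justification, noting only that replacing the $\tfrac{1}{1+\gamma}$ kernel of the ORA integral by the $e^{-\delta_n\gamma}$ kernel of the ASEP integral changes one of the two Mellin–Barnes factors from a ``$\Gamma(s)\Gamma(1-s)$'' pair into the single ``$\Gamma(s)$'' coming from the exponential, which is exactly what turns the $H_{1,1:1,1:1,1}^{0,1:1,1:1,1}$ of (\ref{asymptoORA}) into the $H_{1,1:0,1:1,1}^{0,1:1,0:1,1}$ of (\ref{AsymptoASEP}). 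Finally I would remark that letting $\overline{\gamma}_{id}\to\infty$ (so $\phi_{\ell}\to0$) kills the correction term and recovers the ASEP error floor $\sum_{n=1}^{N}\theta_{n}\exp(-\delta_{n}\omega_{\varkappa}/\varrho_{\varkappa})$, consistent with the capacity ceiling observed in Remark \ref{remarkCapaceiling}, which concludes the proof.
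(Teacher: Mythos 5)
Your proposal follows essentially the same route as the paper's Appendix E: starting from the integrated-by-parts form of $\mathcal{I}_{n}^{(\varkappa)}$ from Appendix C, substituting the high-SNR expansion (\ref{ApproFHF}), collapsing the $E_{\ell}$ piece via (\ref{Condition}) into the floor term $\sum_{n}\theta_{n}\exp(-\delta_{n}\omega_{\varkappa}/\varrho_{\varkappa})$, and evaluating the correction piece by a Mellin--Barnes representation of the exponential plus the Beta-type integral \cite[Eq.~(3.194.1)]{integraltable} to assemble the bivariate FHF. Your observation that replacing the $1/(1+\gamma)$ kernel by $e^{-\delta_{n}\gamma}$ trades a $\Gamma(s)\Gamma(1-s)$ factor for a single $\Gamma(s)$, turning $H_{1,1:1,1:1,1}^{0,1:1,1:1,1}$ into $H_{1,1:0,1:1,1}^{0,1:1,0:1,1}$, correctly captures the only structural difference from the capacity derivation.
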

	
	\begin{proof}
		the proof is provided in appendix E.
	\end{proof}
	
	\begin{remark}
		\begin{itemize}
			\item[]
			\item Similarly to (\ref{capaceiling}), for \textcolor{black}{non-ideal RF e2e}%
			, the ASEP admits the following ceiling that can't be crossed by the
			increase of SNR as%
			\begin{equation}
			\lim_{_{\overline{\gamma }_{id}\rightarrow \infty }}P_{s}^{\left( \varkappa
				\right) }=P_{s}^{\left( \varkappa ,\infty \right) }=\sum_{n=0}^{N}\theta
			_{n}\exp \left( -\frac{\delta _{n}\omega _{\varkappa }}{\varrho _{\varkappa }%
			}\right) .  \label{valofASEPatinfinity}
			\end{equation}%
			Moreover, it can be seen that this ceiling is irrespective of the fading
			severity parameters, whereas it depends on both $M$ (as $\delta _{n}$ and $%
			\theta _{n}$ are expressed in terms of the modulation %
			\textcolor{black}{parameter} $M$) and the impairments parameters.
			Particularly, when $\frac{\omega _{\varkappa }}{\varrho _{\varkappa }}$
			approaches infinity, $P_{s}^{\left( \varkappa \right) }$ tends to $0$. On
			the other hand, it can be easily shown that $\theta _{n}$ and $\delta _{n}$
			are increasing and decreasing functions with $M$, respectively. As a result,
			$\theta _{n}\exp \left( -\frac{\delta _{n}\omega _{\varkappa }}{\varrho
				_{\varkappa }}\right) $ is an increasing function \textcolor{black}{with $M$}.
			Thus the smaller the value of $M$ is, the smaller the ASEP is, leading to
			the system performance enhancement.
			\item For ideal case, the ASEP's asymptotic expression can be easily
			obtained directly from (\ref{idealASEP}) with the help of \cite[Eq. (1.8.2)]%
			{kilbas}.
		\end{itemize}
	\end{remark}
	\section{Applications}
	
	It is worthwhile to note that the aforementioned derived expressions for
	various metrics can be simplified significantly depending on the fading amplitude distribution simplicity. As mentioned before, \textcolor{black}{numerous} well-known PDF of recent fading distributions can be expressed in terms of FHF. In this
	section, two generalized fading models are considered:
	
	\subsection{$\protect\alpha-\protect\mu$ fading}
	
	Let's $\alpha $ and $\mu $ denote two real numbers that reflect the
	non-linearity of the propagation medium and the clustering of the multipath
	waves, \textcolor{black}{respectively}. The SNR's PDF of $\alpha -\mu $ fading
	model can be obtained by setting $L=1$, $\psi _{1}=$ $\frac{\mu _{{}}^{\frac{%
				2}{\alpha }}}{\overline{\gamma }\Gamma \left( \mu \right) }$, $\phi _{1}=%
	\frac{\mu ^{\frac{2}{\alpha }}}{\overline{\gamma }}$, $m=q=1,n=p=0$, $%
	b_{1}^{\left( 1\right) }=$ $\mu -\frac{2}{\alpha }$, and $B_{1}^{\left(
		1\right) }=\frac{2}{\alpha }$ in (\ref{pdf}). Therefore, the closed-form and
	asymptotic expressions for the three considered metrics are summarized in
	Tables \ref{closedalpha} and \ref{asympalpha}, respectively, where $\Psi
	\left( \cdot \right) $ denotes the Digamma function which is defined as the
	logarithmic derivative of the Gamma function \cite{digam} and
	\begin{equation}
	\Theta _{\varkappa }=\frac{\omega _{\varkappa }\mu ^{\mu }}{\varrho
		_{\varkappa }\Gamma \left( \frac{\alpha \mu }{2}\right) \Gamma \left( 1+\mu
		\right) }\left( \frac{\tau _{\varkappa }}{\varrho _{\varkappa }\overline{%
			\gamma }_{id}}\right) ^{\frac{\alpha \mu }{2}}.
	\end{equation}
	\begin{table}[tbh]
		\captionsetup{font=small}
		\caption{Closed-form expressions for $\protect\alpha -\protect\mu $ fading
			model.}\renewcommand{\arraystretch}{1.2} \centering%
		\begin{tabular}{c|c|c}
			\hline
			\parbox[t]{2mm}{\multirow{5}{*}{\rotatebox[origin=c]{90}{\small{%
							\textit{Ideal RF e2e}}}}} & OP & $1-\frac{1}{\Gamma \left( \mu \right) }%
			H_{1,2}^{2,0}\left( \frac{\mu ^{\frac{2}{\alpha }}}{\overline{\gamma }_{id}}%
			\gamma _{\text{th}}\left\vert
			\begin{array}{c}
			-;\left( 1,1\right)  \\
			\left( 0,1\right) ,\left( \mu ,\frac{2}{\alpha }\right) ;-%
			\end{array}%
			\right. \right) $ \\ \cline{2-3}
			& CC & $\frac{1}{\log (2)\Gamma \left( \mu \right) }H_{2,3}^{3,1}\left(
			\frac{\mu ^{\frac{2}{\alpha }}}{\overline{\gamma }_{id}}\left\vert
			\begin{array}{c}
			\left( 0,1\right) ;\left( 1,1\right)  \\
			\left( 0,1\right) ,\left( 0,1\right) ,\left( \mu ,\frac{2}{\alpha }\right) ;-%
			\end{array}%
			\right. \right) $ \\ \cline{2-3}
			& ASEP & $\sum_{n=0}^{N}\theta _{n}\left[ 1-\frac{1}{\Gamma \left( \mu
				\right) }H_{2,2}^{2,1}\left( \frac{\mu ^{\frac{2}{\alpha }}}{\delta _{n}%
				\overline{\gamma }_{id}}\left\vert
			\begin{array}{c}
			\left( 0,1\right) ;\left( 1,1\right)  \\
			\left( 0,1\right) ,\left( \mu ,\frac{2}{\alpha }\right) ;-%
			\end{array}%
			\right. \right) \right] $ \\ \hline\hline
			\parbox[t]{2mm}{\centering
				\multirow{5}{*}{\rotatebox[origin=c]{90}{\small{\textit{Non-ideal RF e2e}}}}}
			& OP & $1-\frac{1}{\Gamma \left( \mu \right) }H_{1,2}^{2,0}\left( \frac{\mu
				^{\frac{2}{\alpha }}}{\overline{\gamma }_{id}}\frac{\tau _{\varkappa }\gamma
				_{\text{th}}}{\omega _{\varkappa }-\varrho _{\varkappa }\gamma _{\text{th}}}%
			\left\vert
			\begin{array}{c}
			-;\left( 1,1\right)  \\
			\left( 0,1\right) ,\left( \mu ,\frac{2}{\alpha }\right) ;-%
			\end{array}%
			\right. \right) $ \\ \cline{2-3}
			& CC & $\frac{1}{\log (2)\Gamma \left( \mu \right) }%
			H_{1,0:0,2:1,2}^{0,1:2,0:1,1}\left( \frac{\mu ^{\frac{2}{\alpha }}\tau
				_{\varkappa }}{\overline{\gamma }_{id}\varrho _{\varkappa }},\frac{\omega
				_{\varkappa }}{\varrho _{\varkappa }}\left\vert
			\begin{array}{c}
			\left( 1;1,1\right)  \\
			-%
			\end{array}%
			\right\vert
			\begin{array}{c}
			- \\
			\left( 0,1\right) ,\left( \mu ,\frac{2}{\alpha }\right) ;-%
			\end{array}%
			\left\vert
			\begin{array}{c}
			\left( 1,1\right) ;- \\
			\left( 1,1\right) ;\left( 0,1\right)
			\end{array}%
			\right. \right) $ \\ \cline{2-3}
			& ASEP & $\sum_{n=0}^{N}\theta _{n}\left[ 1-\frac{1}{\Gamma \left( \mu
				\right) }H_{1,0:0,2:0,2}^{0,1:2,0:1,0}\left( \frac{\mu ^{\frac{2}{\alpha }%
				}\tau _{\varkappa }}{\overline{\gamma }_{id}\varrho _{\varkappa }},\frac{%
				\delta _{n}\omega _{\varkappa }}{\varrho _{\varkappa }}\left\vert
			\begin{array}{c}
			\left( 1;1,1\right)  \\
			-%
			\end{array}%
			\right\vert
			\begin{array}{c}
			- \\
			\left( 0,1\right) ,\left( \mu ,\frac{2}{\alpha }\right) ;-%
			\end{array}%
			\left\vert
			\begin{array}{c}
			- \\
			\left( 1,1\right) ;\left( 0,1\right)
			\end{array}%
			\right. \right) \right] $ \\ \hline
		\end{tabular}%
		\label{closedalpha}
	\end{table}
	\begin{table}[tbh]
		\captionsetup{font=small}
		\caption{Asymptotic expressions for $\protect\alpha -\protect\mu $ fading
			model.}
		\label{asympalpha}\centering\renewcommand{\arraystretch}{2}%
		\begin{tabular}{c|c|c}
			\hline
			\parbox[t]{2mm}{\multirow{3}{*}{\rotatebox[origin=c]{90}{\small{%
							\textit{Ideal RF e2e}}}}} & OP & $\frac{\mu ^{\mu -1}}{\Gamma \left( \mu
				\right) }\left( \frac{\gamma _{\text{th}}}{\overline{\gamma }_{id}}\right) ^{%
				\frac{\alpha \mu }{2}} $ \\ \cline{2-3}
			& CC & $\frac{1}{\log (2)}\left[ \frac{2}{\alpha }\Psi \left( \mu \right)
			-\log \left( \frac{\mu ^{\frac{2}{\alpha }}}{\overline{\gamma }_{id}}\right) %
			\right] $ \\ \cline{2-3}
			& ASEP & $\frac{\alpha \Gamma \left( \frac{\alpha \mu }{2}\right) \mu ^{\mu }%
			}{2\Gamma \left( \mu \right) }\sum_{n=0}^{N}\theta _{n}\left( \frac{1}{%
				\delta _{n}\overline{\gamma }_{id}}\right) ^{\frac{\alpha \mu }{2}}$ \\
			\hline\hline
			\parbox[t]{2mm}{\multirow{5}{*}{\rotatebox[origin=c]{90}{\small{%
							\textit{Non-ideal RF e2e}}}}} & OP & $\frac{\mu ^{\mu -1}}{\Gamma \left( \mu
				\right) }\left( \frac{\tau _{\varkappa }\gamma _{\text{th}}}{\left( \omega
				_{\varkappa }-\varrho _{\varkappa }\gamma _{\text{th}}\right) \overline{%
					\gamma }_{id}}\right) ^{\frac{\alpha \mu }{2}}$ \\ \cline{2-3}
			& CC & $C^{\left( \varkappa ,\infty \right) }-\frac{\Theta _{\varkappa }}{%
				\log (2)}H_{1,1:1,1:1,1}^{0,1:1,1:1,1}\left( \frac{\omega _{\varkappa }}{%
				\varrho _{\varkappa }},-1\left\vert
			\begin{array}{c}
			\left( -\frac{\alpha \mu }{2};1,1\right) ;- \\
			-;\left( -1-\frac{\alpha \mu }{2};1,1\right)%
			\end{array}%
			\right\vert
			\begin{array}{c}
			\left( 0,1\right) ;- \\
			\left( 0,1\right) ;-%
			\end{array}%
			\left\vert
			\begin{array}{c}
			\left( 1-\frac{\alpha \mu }{2},1\right) ;- \\
			\left( 0,1\right) ;-%
			\end{array}%
			\right. \right) $ \\ \cline{2-3}
			& ASEP & $P_{s}^{\left( \varkappa ,\infty \right) }+\Theta _{\varkappa
			}\sum_{n=1}^{N}\theta _{n}\delta _{n}H_{1,1:0,1:1,1}^{0,1:1,0:1,1}\left(
			\frac{\delta _{n}\omega _{\varkappa }}{\varrho _{\varkappa }},-1\left\vert
			\begin{array}{c}
			\left( -\frac{\alpha \mu }{2};1,1\right) \\
			-;\left( -1-\frac{\alpha \mu }{2};1,1\right)%
			\end{array}%
			\right\vert
			\begin{array}{c}
			- \\
			\left( 0,1\right) ;-%
			\end{array}%
			\left\vert
			\begin{array}{c}
			\left( 1-\frac{\alpha \mu }{2},1\right) ;- \\
			\left( 0,1\right) ;-%
			\end{array}%
			\right. \right) $ \\ \hline
		\end{tabular}%
	\end{table}
	\subsection{M\'alaga $\mathcal{M}$ \textcolor{black}{TCPE}}
	According to \cite{faissal}, for M\'alaga $\mathcal{M}$ TCPE, the SNR's \ PDF
	can be expressed in terms of FHF by setting $L=\beta ,$ $\phi _{\ell }=\frac{%
		B^{\prime r}}{\mu _{r}}$, $\psi _{\ell }=\xi ^{2}A^{\prime }\lambda _{\ell
	}\phi _{\ell },$ $m=q=3$, $n=0$, $p=1,$ $b_{1}^{\left( \ell \right) }=\xi
	^{2}-r,$ $b_{2}^{\left( \ell \right) }=\alpha -r,$ $b_{3}^{\left( \ell
		\right) }=\ell -r$ and $B_{1}^{\left( \ell \right) }=B_{2}^{\left( \ell
		\right) }=B_{3}^{\left( \ell \right) }=r$ while $a_{1}^{\left( \ell \right)
	}=\xi ^{2}-r+1$ and $\ A_{1}^{\left( \ell \right) }=r$ in $\left( \ref{pdf}%
	\right) $.
	With $\beta $ is a natural number denoting the amount of fading, $r$ is a
	parameter defining the detection technique (i.e., $r=1,2$ refers to
	heterodyne detection and IM/DD technique, respectively), while $\mu _{r}$
	denotes the average SNR when $r=1$ and the average electrical SNR when $r=2.$
	Further, $A^{\prime },B^{\prime }$, and $\lambda _{\ell }$ are positive real
	parameters given by%
	\begin{equation*}
	A^{\prime }\triangleq \frac{\left( 1+\frac{\Omega ^{\prime }}{\kappa \beta }%
		\right) ^{1-\beta }}{\Gamma \left( \alpha \right) },\text{ and }B^{\prime
	}\triangleq \frac{\xi ^{2}\alpha \beta \left( \kappa +\Omega ^{\prime
		}\right) }{\left( \xi ^{2}+1\right) \left( \kappa \beta +\Omega ^{\prime
		}\right) },\text{ }\lambda _{\ell }\triangleq \left( _{\ell -1}^{\beta
		-1}\right) \frac{1}{\left( \ell -1\right) !}\left( \frac{\Omega ^{\prime }}{%
		\kappa }\right) ^{\ell -1},
	\end{equation*}
	with $\alpha $ is a positive number related to the effective number of
	large-scale cells of the scattering process, and $\xi $ accounts for the ratio
	between the equivalent beam radius at the receiver to the pointing error
	dis-placement standard deviation. Further, $\kappa =2b_{0}\left( 1-\rho \right) $
	denotes the scattering component's average received power, $2b_{0}$ is the
	average power of the total scatter components, $0\leq \rho \leq 1$
	represents the amount of scattering power coupled to the LOS component, $%
	\Omega ^{\prime }=\Omega +2b_{0}\rho $ denotes the average power from the
	coherent contributions and $\Omega $ is the LOS component's average power.
	Therefore, the closed-form and asymptotic expressions for the three
	considered metrics are summarized in Tables \ref{closedmalaga} and \ref%
	{asympmalaga}, respectively, where $\Xi _{\ell }\triangleq \left\{ \left( \xi ^{2},r\right) ,\left(
	\alpha ,r\right) ,\left( \ell ,r\right) \right\}$, while $\mathcal{L}\left( y\right)$, and $\mathcal{G}\left( y\right)$ are given by (\ref{LLL}) and (\ref{GGG}), respectively.
	\begin{table}[tbh]
		\captionsetup{font=small,format=hang}
		\caption{Closed-form expressions under M\'alaga $\mathcal{M}$ turbulence with
			the presence of pointing errors.}
		\centering\renewcommand{\arraystretch}{1.5}
		\begin{tabular}{c|c|c}
			\hline
			\parbox[t]{2mm}{\multirow{5}{*}{\rotatebox[origin=c]{90}{\centering
						\small{\textit{Ideal RF e2e}}}}} & OP & $1-\sum_{\ell =1}^{\beta }\frac{\psi
				_{\ell }}{\phi _{\ell }}H_{2,4}^{4,0}\left( \phi _{\ell }\gamma _{\text{th}%
			}\left\vert
			\begin{array}{c}
			-;\left( \xi ^{2}+1,r\right) ,\left( 1,1\right)  \\
			\left( 0,1\right) ,\Xi _{\ell };-%
			\end{array}%
			\right. \right) $ \\ \cline{2-3}
			& CC & $\sum_{\ell =1}^{\beta }\frac{\psi _{\ell }}{\log \left( 2\right)
				\phi _{\ell }}H_{3,5}^{5,1}\left( \phi _{\ell }\left\vert
			\begin{array}{c}
			\left( 0,1\right) ;\left( \xi ^{2}+1,r\right) ,\left( 1,1\right)  \\
			\left( 0,1\right) ,\left( 0,1\right) ,\Xi _{\ell };-%
			\end{array}%
			\right. \right) $ \\ \cline{2-3}
			& ASEP & $\sum_{n=0}^{N}\theta _{n}\left[ 1-\sum_{\ell =1}^{\beta }\frac{%
				\psi _{\ell }}{\phi _{\ell }}H_{3,4}^{4,1}\left( \frac{\phi _{\ell }}{\delta
				_{n}}\left\vert
			\begin{array}{c}
			\left( 0,1\right) ;\left( \xi ^{2}+1,r\right) ,\left( 1,1\right)  \\
			\left( 0,1\right) ,\Xi _{\ell };-%
			\end{array}%
			\right. \right) \right] $ \\ \hline\hline
			\parbox[t]{2mm}{\multirow{5}{*}{\rotatebox[origin=c]{90}{\small{%
							\textit{Non-ideal RF e2e}}}}} & OP & $1-\sum_{\ell =1}^{\beta }\frac{\psi
				_{\ell }}{\phi _{\ell }}H_{2,4}^{4,0}\left( \phi _{\ell }\frac{\tau
				_{\varkappa }\gamma _{\text{th}}}{\omega _{\varkappa }-\varrho _{\varkappa
				}\gamma _{\text{th}}}\left\vert
			\begin{array}{c}
			-;\left( \xi ^{2}+1,r\right) ,\left( 1,1\right)  \\
			\left( 0,1\right) ,\Xi _{\ell };-%
			\end{array}%
			\right. \right) $ \\ \cline{2-3}
			& CC & $\frac{1}{\log (2)}\sum_{\ell =1}^{\beta }\frac{\psi _{\ell }}{\phi
				_{\ell }}H_{1,0:1,4:1,2}^{0,1:4,0:1,1}\left( \frac{\phi _{\ell }\tau
				_{\varkappa }}{\varrho _{\varkappa }},\frac{\omega _{\varkappa }}{\varrho
				_{\varkappa }}\left\vert
			\begin{array}{c}
			\left( 1;1,1\right)  \\
			-%
			\end{array}%
			\right\vert
			\begin{array}{c}
			-;\left( \xi ^{2}+1,r\right)  \\
			\left( 0,1\right) ,\Xi _{\ell };-%
			\end{array}%
			\left\vert
			\begin{array}{c}
			\left( 1,1\right) ;- \\
			\left( 1,1\right) ;\left( 0,1\right)
			\end{array}%
			\right. \right) $ \\ \cline{2-3}
			& ASEP & $\sum_{n=0}^{N}\theta _{n}\left[ 1-\sum_{\ell =1}^{\beta }\frac{%
				\psi _{\ell }}{\phi _{\ell }}H_{1,0:1,4:0,2}^{0,1:4,0:1,0}\left( \frac{\phi
				_{\ell }\tau _{\varkappa }}{\varrho _{\varkappa }},\frac{\delta _{n}\omega
				_{\varkappa }}{\varrho _{\varkappa }}\left\vert
			\begin{array}{c}
			\left( 1;1,1\right)  \\
			-%
			\end{array}%
			\right\vert
			\begin{array}{c}
			-;\left( \xi ^{2}+1,r\right)  \\
			\left( 0,1\right) ,\Xi _{\ell };-%
			\end{array}%
			\left\vert
			\begin{array}{c}
			- \\
			\left( 1,1\right) ;\left( 0,1\right)
			\end{array}%
			\right. \right) \right] $ \\ \hline
		\end{tabular}%
		\label{closedmalaga}
	\end{table}
	
	\begin{equation}
	\mathcal{L}\left( y\right) =\frac{\omega _{\varkappa }}{\varrho _{\varkappa
		}\Gamma \left( y\right) }H_{1,1:1,1:1,1}^{0,1:1,1:1,1}\left( \frac{\omega
		_{\varkappa }}{\varrho _{\varkappa }},-1\left\vert
	\begin{array}{c}
	\left( -y;1,1\right) ;- \\
	-;\left( -1-y;1,1\right)
	\end{array}%
	\right\vert
	\begin{array}{c}
	\left( 0,1\right) ;- \\
	\left( 0,1\right) ;-%
	\end{array}%
	\left\vert
	\begin{array}{c}
	\left( 1-y,1\right) ;- \\
	\left( 0,1\right) ;-%
	\end{array}%
	\right. \right),\label{LLL}
	\end{equation}
	\begin{equation}
	\mathcal{G}\left( y\right) =\frac{\omega _{\varkappa }}{\varrho _{\varkappa
		}\Gamma \left( y\right) }H_{1,1:0,1:1,1}^{0,1:1,0:1,1}\left( \frac{\delta
		_{n}\omega _{\varkappa }}{\varrho _{\varkappa }},-1\left\vert
	\begin{array}{c}
	\left( -y;1,1\right) \\
	-;\left( -1-y;1,1\right)%
	\end{array}%
	\right\vert
	\begin{array}{c}
	- \\
	\left( 0,1\right) ;-%
	\end{array}%
	\left\vert
	\begin{array}{c}
	\left( 1-y,1\right) ;- \\
	\left( 0,1\right) ;-%
	\end{array}%
	\right. \right).\label{GGG}
	\end{equation}
	
	\begin{table}[tbh]
		\captionsetup{font=small}
		\caption{Asymptotic expressions under M\'alaga $\mathcal{M}$ turbulence with
			the presence of pointing errors.}
		\label{asympmalaga}\centering\renewcommand{\arraystretch}{1.5}%
		\begin{tabular}{c|c|c}
			\hline
			\parbox[t]{2mm}{\multirow{4}{*}{\rotatebox[origin=c]{90}{\small{\textit{Ideal RF e2e}}}}} & OP & $\sum_{\ell =1}^{\beta }\frac{\psi _{\ell }}{\phi _{\ell }}r%
			\left[
			\begin{array}{c}
			\frac{\Gamma \left( \alpha -\xi ^{2}\right) \Gamma \left( k-\xi ^{2}\right)
			}{\xi ^{2}}\left( \phi _{\ell }\gamma _{\text{th}}\right) ^{\frac{\xi ^{2}}{r%
			}} \\
			+\frac{\Gamma \left( \ell -\alpha \right) }{\alpha \left( \xi ^{2}-\alpha
				\right) }\left( \phi _{\ell }\gamma _{\text{th}}\right) ^{\frac{\alpha }{r}}+%
			\frac{\Gamma \left( \alpha -\ell \right) }{\ell \left( \xi ^{2}-\ell \right)
			}\left( \phi _{\ell }\gamma _{\text{th}}\right) ^{\frac{\ell }{r}}%
			\end{array}%
			\right] $ \\ \cline{2-3}
			& CC & $\sum_{\ell =1}^{\beta }\frac{\psi _{\ell }}{\log \left( 2\right)
				\phi _{\ell }}\frac{\Gamma \left( \alpha \right) \Gamma \left( \ell \right)
			}{\xi ^{2}}\left[ r\Psi \left( \alpha \right) +r\Psi \left( \ell \right) -%
			\frac{r}{\xi ^{2}}-\log \left( \phi _{\ell }\right) \right] $ \\ \cline{2-3}
			& ASEP & $r\sum_{\ell =1}^{\beta }\frac{\psi _{\ell }}{\phi _{\ell }}%
			\sum_{n=0}^{N}\theta _{n}\left[
			\begin{array}{c}
			\frac{\Gamma \left( 1+\frac{\xi ^{2}}{r}\right) \Gamma \left( \alpha -\xi
				^{2}\right) \Gamma \left( \ell -\xi ^{2}\right) }{\xi ^{2}}\left( \frac{\phi
				_{\ell }}{\delta _{n}}\right) ^{\frac{\xi ^{2}}{r}} \\
			+\frac{\Gamma \left( 1+\frac{\alpha }{r}\right) \Gamma \left( \ell -\alpha
				\right) }{\alpha \left( \xi ^{2}-\alpha \right) }\left( \frac{\phi _{\ell }}{%
				\delta _{n}}\right) ^{\frac{\alpha }{r}}+\frac{\Gamma \left( 1+\frac{\ell }{r%
				}\right) \Gamma \left( \alpha -\ell \right) }{\ell \left( \xi ^{2}-\ell
				\right) }\left( \frac{\phi _{\ell }}{\delta _{n}}\right) ^{\frac{\ell }{r}}%
			\end{array}%
			\right] $ \\ \hline\hline
			\parbox[t]{2mm}{\multirow{5}{*}{\rotatebox[origin=c]{90}{\small{\textit{Non-ideal RF e2e}}}}} & OP & $\sum_{\ell =1}^{\beta }r\frac{\psi _{\ell }}{\phi _{\ell }}%
			\left[
			\begin{array}{c}
			\frac{\Gamma \left( \alpha -\xi ^{2}\right) \Gamma \left( \ell -\xi
				^{2}\right) }{\xi ^{2}}\left( \frac{\phi _{\ell }\tau _{\varkappa }}{\varrho
				_{\varkappa }}\right) ^{\frac{\xi ^{2}}{r}}\left( \frac{\gamma _{\text{th}}}{%
				\frac{\omega _{\varkappa }}{\varrho _{\varkappa }}-\gamma _{\text{th}}}%
			\right) ^{\frac{\xi ^{2}}{r}}\text{ } \\
			+\frac{\Gamma \left( \ell -\alpha \right) }{\alpha \left( \xi ^{2}-\alpha
				\right) }\left( \frac{\phi _{\ell }\tau _{\varkappa }}{\varrho _{\varkappa }}%
			\right) ^{\frac{\alpha }{r}}\left( \frac{\gamma _{\text{th}}}{\frac{\omega
					_{\varkappa }}{\varrho _{\varkappa }}-\gamma _{\text{th}}}\right) ^{\frac{%
					\alpha }{r}}+\frac{\Gamma \left( \alpha -\ell \right) }{\ell \left( \xi
				^{2}-\ell \right) }\left( \frac{\phi _{\ell }\tau _{\varkappa }}{\varrho
				_{\varkappa }}\right) ^{\frac{\ell }{r}}\left( \frac{\gamma _{\text{th}}}{%
				\frac{\omega _{\varkappa }}{\varrho _{\varkappa }}-\gamma _{\text{th}}}%
			\right) ^{\frac{\ell }{r}}%
			\end{array}%
			\right] $ \\ \cline{2-3}
			& CC & $\mathcal{C}^{\left( \varkappa ,\infty \right) }-\sum_{\ell
				=1}^{\beta }\frac{\psi _{\ell }}{\log \left( 2\right) \phi _{\ell }}\left[
			\begin{array}{c}
			\frac{\Gamma \left( \alpha -\xi ^{2}\right) \Gamma \left( \ell -\xi
				^{2}\right) }{\xi ^{2}}\left( \frac{\phi _{\ell }\tau _{\varkappa }}{\varrho
				_{\varkappa }}\right) ^{\xi ^{2}}\mathcal{L}\left( \xi ^{2}\right) +\frac{%
				\Gamma \left( \ell -\alpha \right) }{\alpha \left( \xi ^{2}-\alpha \right) }%
			\left( \frac{\phi _{\ell }\tau _{\varkappa }}{\varrho _{\varkappa }}\right)
			^{\alpha }\mathcal{L}\left( \alpha \right) \\
			+\frac{\Gamma \left( \alpha -\ell \right) }{\ell \left( \xi ^{2}-\ell
				\right) }\left( \frac{\phi _{\ell }\tau _{\varkappa }}{\varrho _{\varkappa }}%
			\right) ^{\ell }\mathcal{L}\left( \ell \right)%
			\end{array}%
			\right] $ \\ \cline{2-3}
			& ASEP & $P_{s}^{\left( \varkappa ,\infty \right) }+\sum_{\ell =1}^{\beta }%
			\frac{\psi _{\ell }}{\phi _{\ell }}\sum_{n=0}^{N}\theta _{n}\delta _{n}\left[
			\begin{array}{c}
			\frac{\Gamma \left( \alpha -\xi ^{2}\right) \Gamma \left( \ell -\xi
				^{2}\right) }{\xi ^{2}}\left( \frac{\phi _{\ell }\tau _{\varkappa }}{\varrho
				_{\varkappa }}\right) ^{\xi ^{2}}\mathcal{G}\left( \xi ^{2}\right) \\
			+\frac{\Gamma \left( \ell -\alpha \right) }{\alpha \left( \xi ^{2}-\alpha
				\right) }\left( \frac{\phi _{\ell }\tau _{\varkappa }}{\varrho _{\varkappa }}%
			\right) ^{\alpha }\mathcal{G}\left( \alpha \right) +\frac{\Gamma \left(
				\alpha -\ell \right) }{\ell \left( \xi ^{2}-\ell \right) }\left( \frac{\phi
				_{\ell }\tau _{\varkappa }}{\varrho _{\varkappa }}\right) ^{\ell }\mathcal{G}%
			\left( \ell \right)%
			\end{array}%
			\right] $ \\ \hline
		\end{tabular}%
	\end{table}
	
	\section{Numerical Results}
	
	In this section, we evaluate and illustrate the effects of the RF impairments
	on the performance of a WCS subject to either $\alpha-\mu$ fading
	or M\'alaga $\mathcal{M}$ turbulence channel with the presence of pointing errors under heterodyne technique detection. To this end, the OP, CC, and ASEP are illustrated, assuming $IRR_{\varkappa}=20$ dB, $g_{\varkappa}<1$, $\varphi=3
	{^{\circ}}$, and $\kappa_{t}=\kappa_{r}=0.2$. All the considered scenarios are treated, \textcolor{black}{namely (i) Tx impaired by both IQI and RHI, (ii) Rx impaired by both IQI and RHI, and (iii) both Tx and Rx are impaired.}
	It is noted that in all figures, the \textcolor{black}{analytical} results are
	shown \textcolor{black}{in} continuous lines, the markers are referring to the simulation ones, whereas the dashed lines illustrate the asymptotic curves.
	
	Figures 1 and 2 present
	the OP versus the normalized SNR for \textcolor{black}{all considered scenarios}. \textcolor{black}{It can be seen that} the OP degradation is more severe when the IQI is considered at Tx rather than Rx. \textcolor{black}{This can be explained as the noise is scaled by $\frac{1+g_{\varkappa}^{2}}{2}$
		when Rx impaired only by IQI as can be ascertained in (\ref{yRXonly})} which doesn't exceed $1$ as $g_{\varkappa }<1$. \textcolor{black}{Furthermore, one} can
	notice a minor performance loss caused by RHI impairments. \newline Figures
	3 and 4 illustrate the effects of RF impairments on the \textcolor{black}{CC}. \textcolor{black}{As expected,} the IQI and RF impairments have a detrimental effect, \textcolor{black}{which} becomes more defective when the RHI impairments \textcolor{black}{are}
	added, especially for high SNR \textcolor{black}{values}. Further, it can be
	seen that the presence of the RF impairments \textcolor{black}{leads to a steady of the CC above a certain threshold, as discussed in Remark \ref{remarkCapaceiling}}.\newline  Finally, to quantify the impact of the RF impairments
	on the ASEP, we depict in Figures 5-18 the ASEP for GC-DQPSK, $M$-PSK, $M$-DPSK, and $M$-QAM \textcolor{black}{modulation techniques}. It is first observed that the RF impairments
	exhibit different effects on \textcolor{black}{all the considered modulation schemes}.
	For instance, it can be notice\textcolor{black}{d that the effect of the RF impairments on the ASEP is minor in the case of GC-DQPSK compared to \textcolor{black}{other }$M$-ary modulation techniques}. On the other hand, the effects such impairments \textcolor{black}{on the ASEP} become more detrimental
	with the increasing of the modulation parameter $M$, while for a fixed value
	of $M$, the impact these impairments on $M$-DPSK modulation is \textcolor{black}{more significant} compared \textcolor{black}{to the remaining considered} modulation schemes.
	
	\section{Conclusion}
	
	In this paper, a general framework for
	\textcolor{black}{the analysis of OP,
		CC, and ASEP} of a WCS fading channel in the
	presence of both IQI and RHI at the RF e2e was developed. \textcolor{black}{Precisely}, three realistic %
	\textcolor{black}{possible} cases were \textcolor{black}{then} considered,
	\textcolor{black}{namely (i) Tx impaired by both IQI and RHI, (ii) Rx impaired
		by both IQI and RHI, and (iii) both Tx and Rx are impaired.} The
	corresponding metrics' closed-form, approximate, and asymptotic expressions were evaluated and useful insights into the overall system
	performance we provided. Particularly, new tight approximate expression of the SEP for
	numerous $M$-ary coherent and non-coherent modulation schemes were derived,
	based on which, tight approximate and asymptotic expressions for
	the ASEP were deduced. Numerical simulations confirmed analytical results for all considered impairments and fading models. As a result, we demonstrated that all metrics have ceilings that can't be crossed for any arbitrary fading model and entire SNR range. Moreover, it was shown that the RF impairments have a detrimental impact on the WCS performance and that such an influence becomes severe with the increase of modulation
	\textcolor{black}{parameter $M$. Noticeably, the $M$-DPSK is the most sensitive modulation scheme to the RF impairments. Essentially, the RF impairments must be taken
		into consideration in the efficient design of a WCS as its performance in high
		SNR regime is irrespective of the fading model, while it is significantly
		impacted by the RF impairments parameters.}
	\clearpage
	\begin{figure}[tbp]
		\vspace{-1.2cm}
		\begin{minipage}[t]{0.55\columnwidth}
			\hspace{-1cm}\includegraphics[width=10cm, height=7cm]{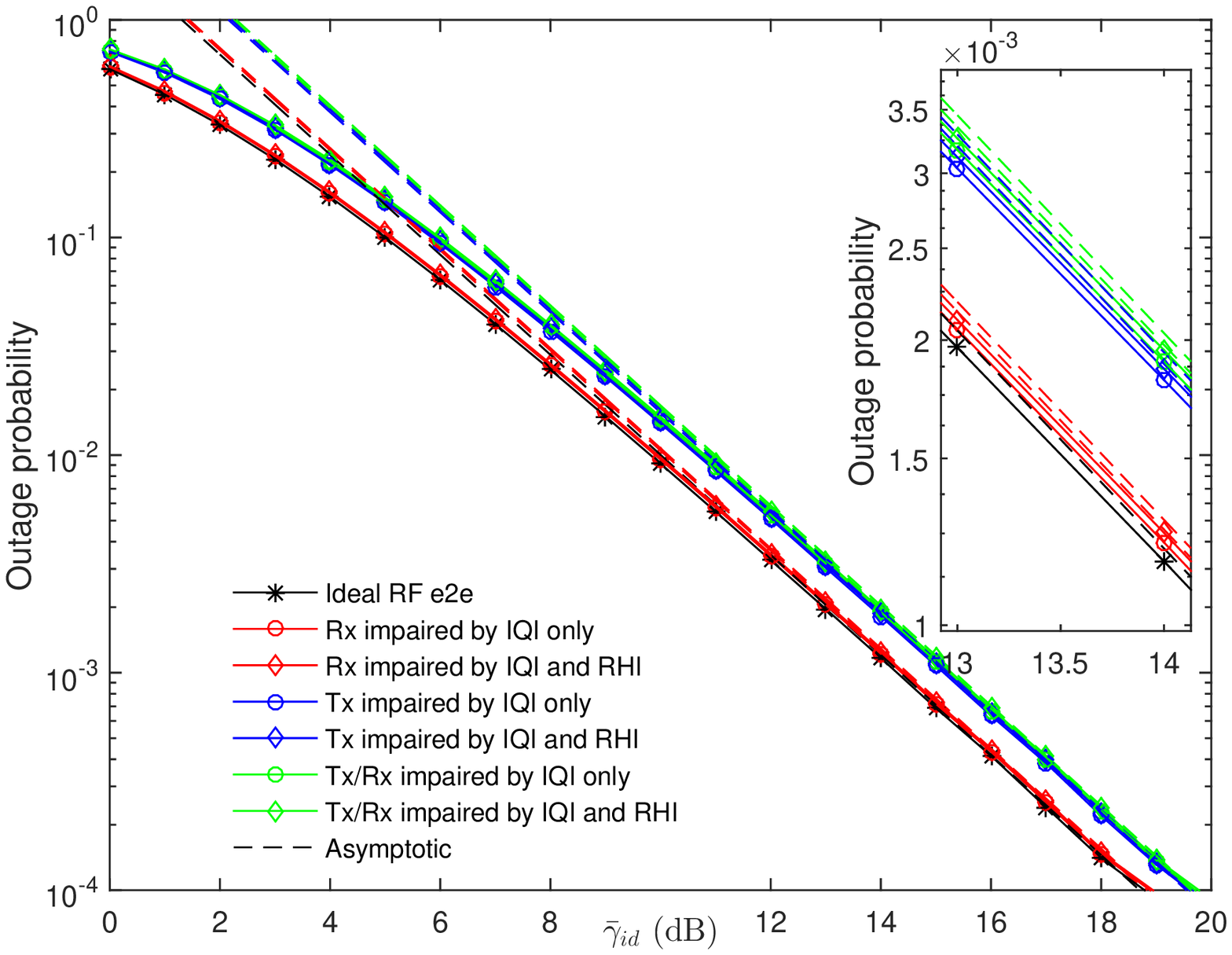}
			\captionsetup{format=plain,font=scriptsize,skip=0pt}
			\caption{$P_{\text{out}}$ versus the normalized SNR over $\alpha -\mu $ fading channel \\with $\alpha=2.3$ and $\mu=2.$ }
			\label{figOpalpha}
		\end{minipage}
		\hfill %
		\hspace{0.2cm}
		\begin{minipage}[t]{0.55\columnwidth}
			\hspace{-1cm}\includegraphics[width=10cm, height=7cm]{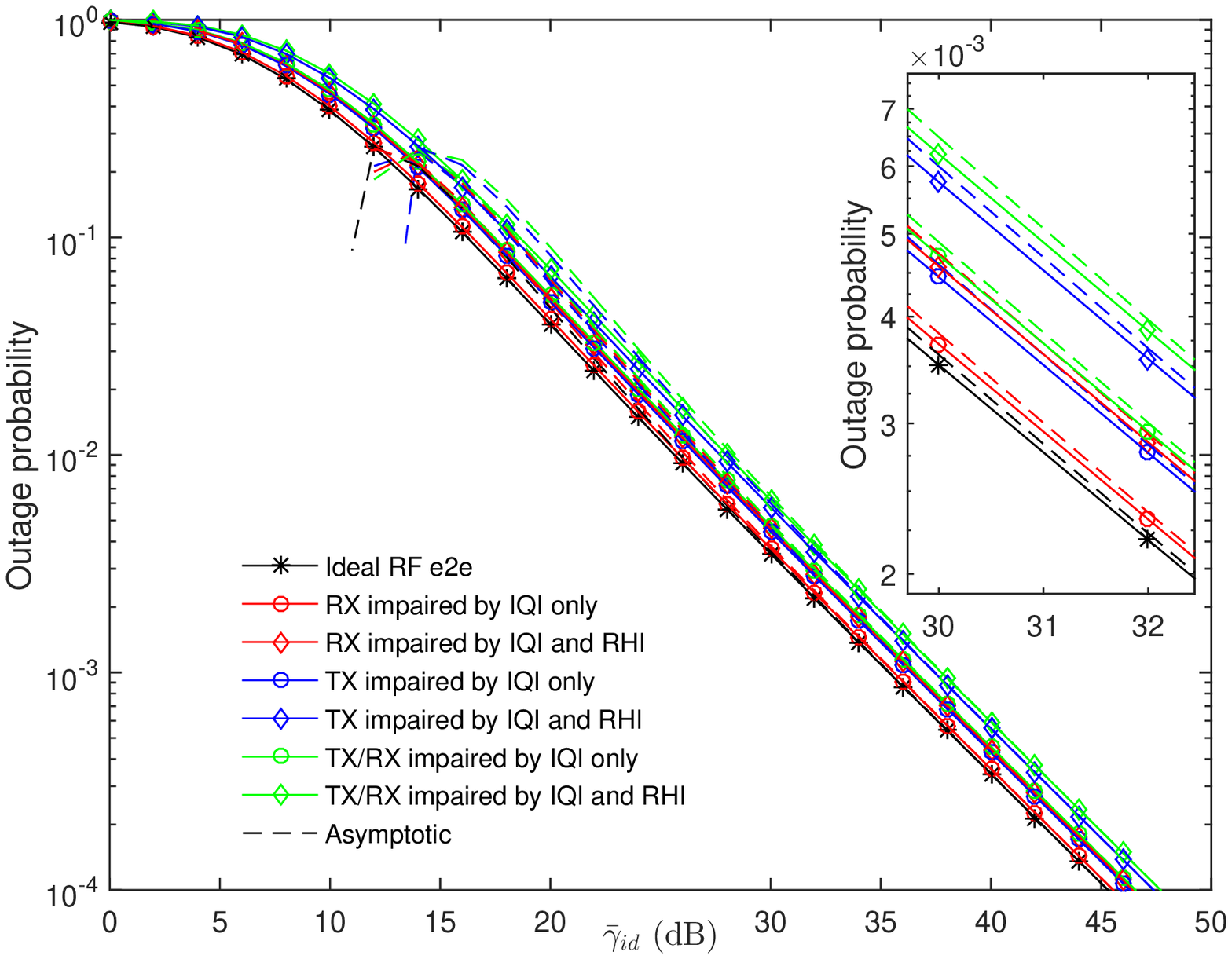}
			\captionsetup{format=plain,font=scriptsize,skip=0pt,margin=-0.2cm}
			\caption {$P_{\text{out}}$ versus the normalized SNR over M\'alaga $\mathcal{M}$ turbulence channel\\ with $\alpha =2.296,\beta =2,\Omega =1,\rho =0.596,\Omega ^{\prime }=1.3265$ and $\xi=3.85$.}
			\label{figOPmalaga}
		\end{minipage}
		\hfill
		\begin{minipage}[t]{0.55\columnwidth}
			
			\hspace{-1cm} \includegraphics[width=10cm, height=7cm]{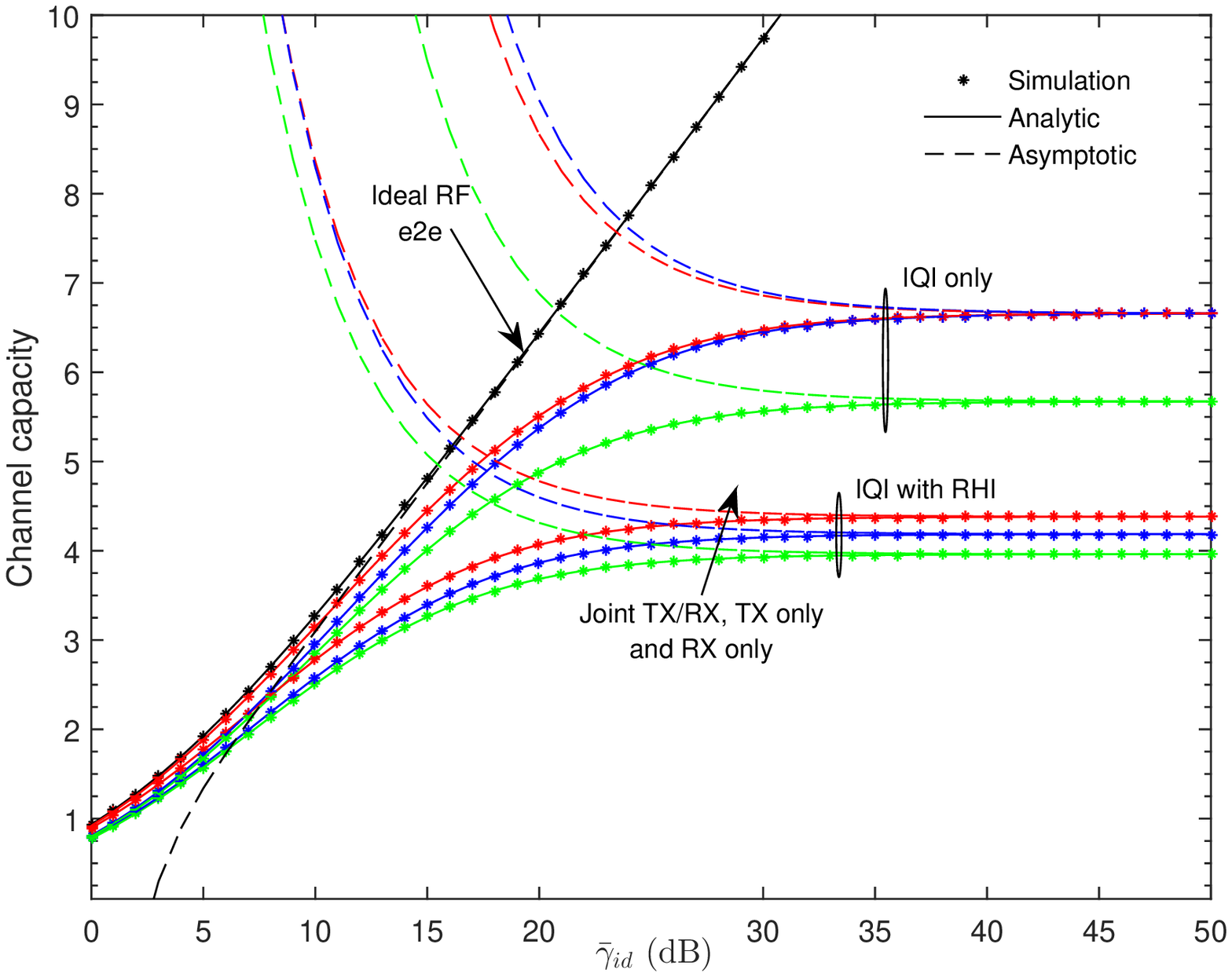}
			\captionsetup{format=plain,font=scriptsize,skip=0pt}
			\caption{Channel capacity versus SNR over $\alpha -\mu $ fading channel \\with $\alpha=3$ and $\mu=2.3 $.}
			\label{figcapalpha}
		\end{minipage}
		\hfill \hspace{0.1cm}
		\begin{minipage}[t]{0.55\columnwidth}
			
			\hspace{-1cm}\includegraphics[width=10cm, height=7cm]{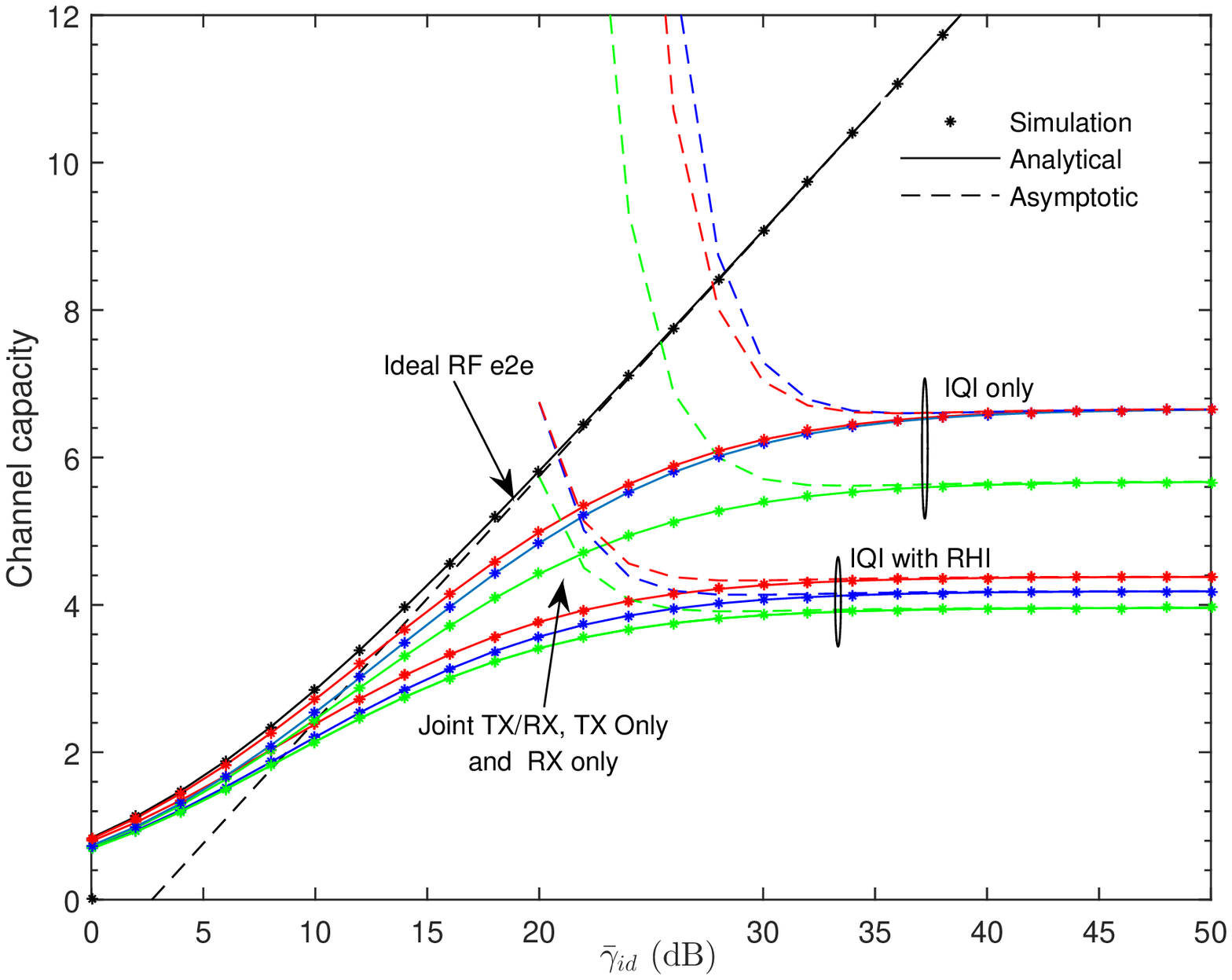}
			\vspace*{-8mm}
			\captionsetup{format=plain,font=scriptsize,skip=0pt,margin=-0.2cm}
			\caption{Channel capacity versus SNR over M\'alaga $\mathcal{M}$ turbulence channel\\ with $\alpha =2.296,\beta =2,\Omega =1,\rho =0.596,\Omega ^{\prime }=1.3265$ and $\xi=3.85$.}
			\label{figcapamalaga}
		\end{minipage}
		\hfill \hspace{0.1cm}
		\begin{minipage}[t]{0.55\columnwidth}
			
			\hspace{-1cm}\includegraphics[width=10cm, height=7cm]{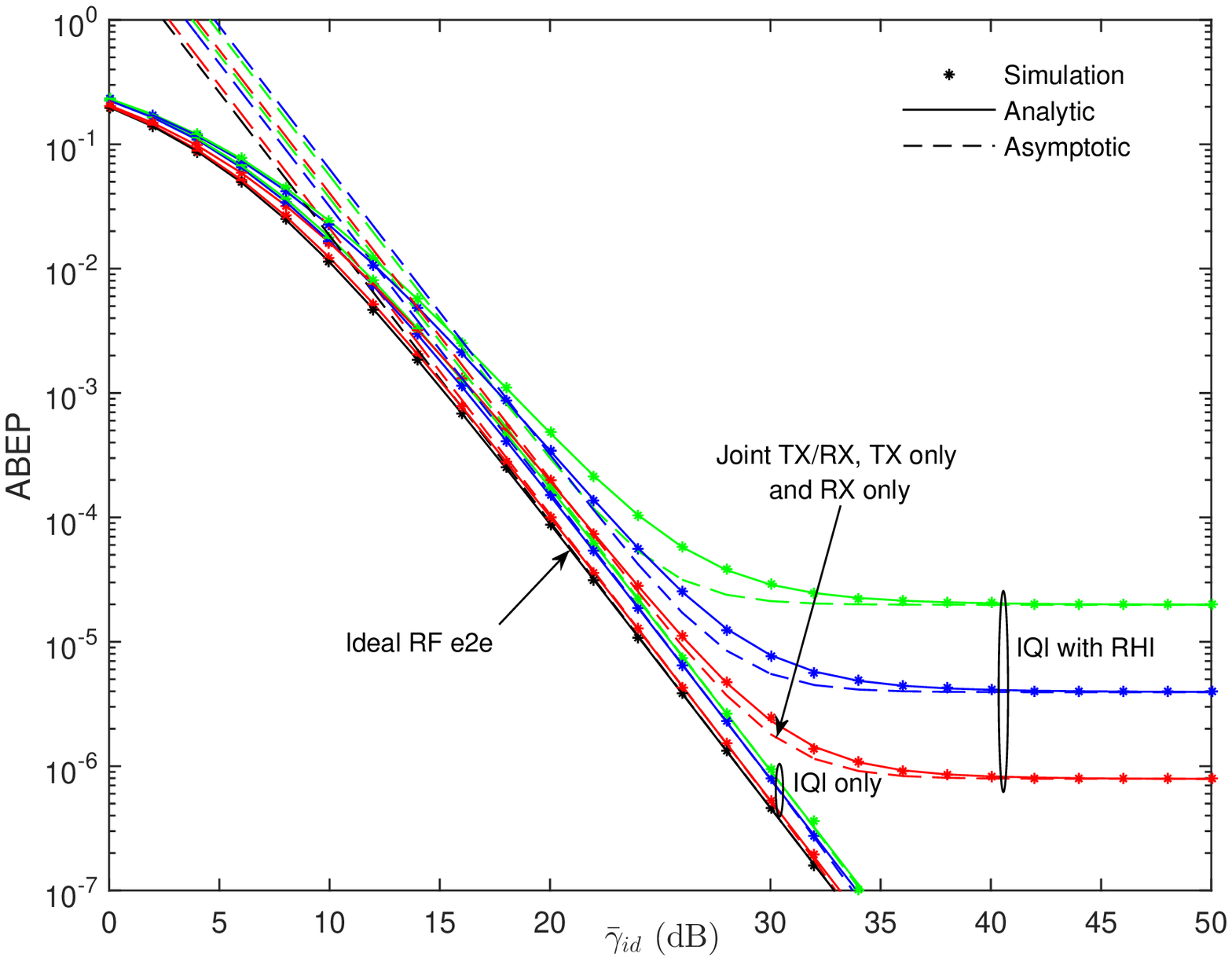}
			\captionsetup{format=plain,font=scriptsize,skip=0pt}
			\caption{ASEP for GC-DQPSK over $\alpha -\mu $ fading channel with\\ $\alpha =2.3$ and $\mu =2$.}
			\label{figdqpskalpha}
		\end{minipage}
		\hfill \hspace{0.1cm}
		\begin{minipage}[t]{0.55\columnwidth}
			
			\hspace{-1cm}\includegraphics[width=10cm, height=7cm]{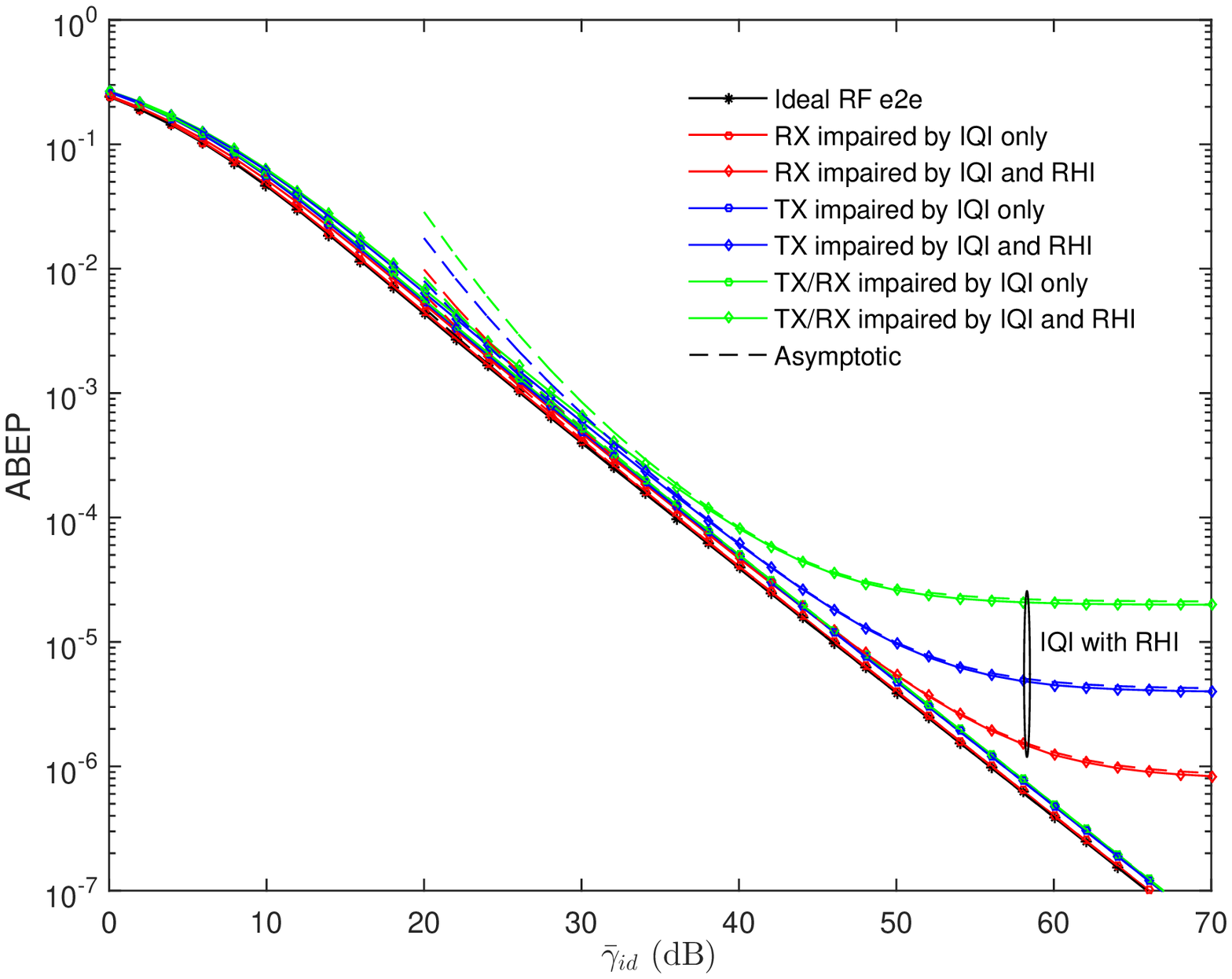}
			\captionsetup{format=plain,font=scriptsize,skip=0pt,margin=-0.2cm}
			\caption{ASEP for GC-DQPSK over M\'alaga $\mathcal{M}$ turbulence channel with\\ $\alpha =2.296,\beta =2,\Omega =1,\rho =0.596,\Omega ^{\prime }=1.3265$ and $\xi=3.85$.}
			\label{figdqpskmalaga}
		\end{minipage}
	\end{figure}
	\newpage
	\begin{figure}[tbp]
		\vspace{-1.2cm}
		\par
		\begin{minipage}[t]{0.55\columnwidth}
			
			\hspace{-1cm}\includegraphics[width=10cm, height=7cm]{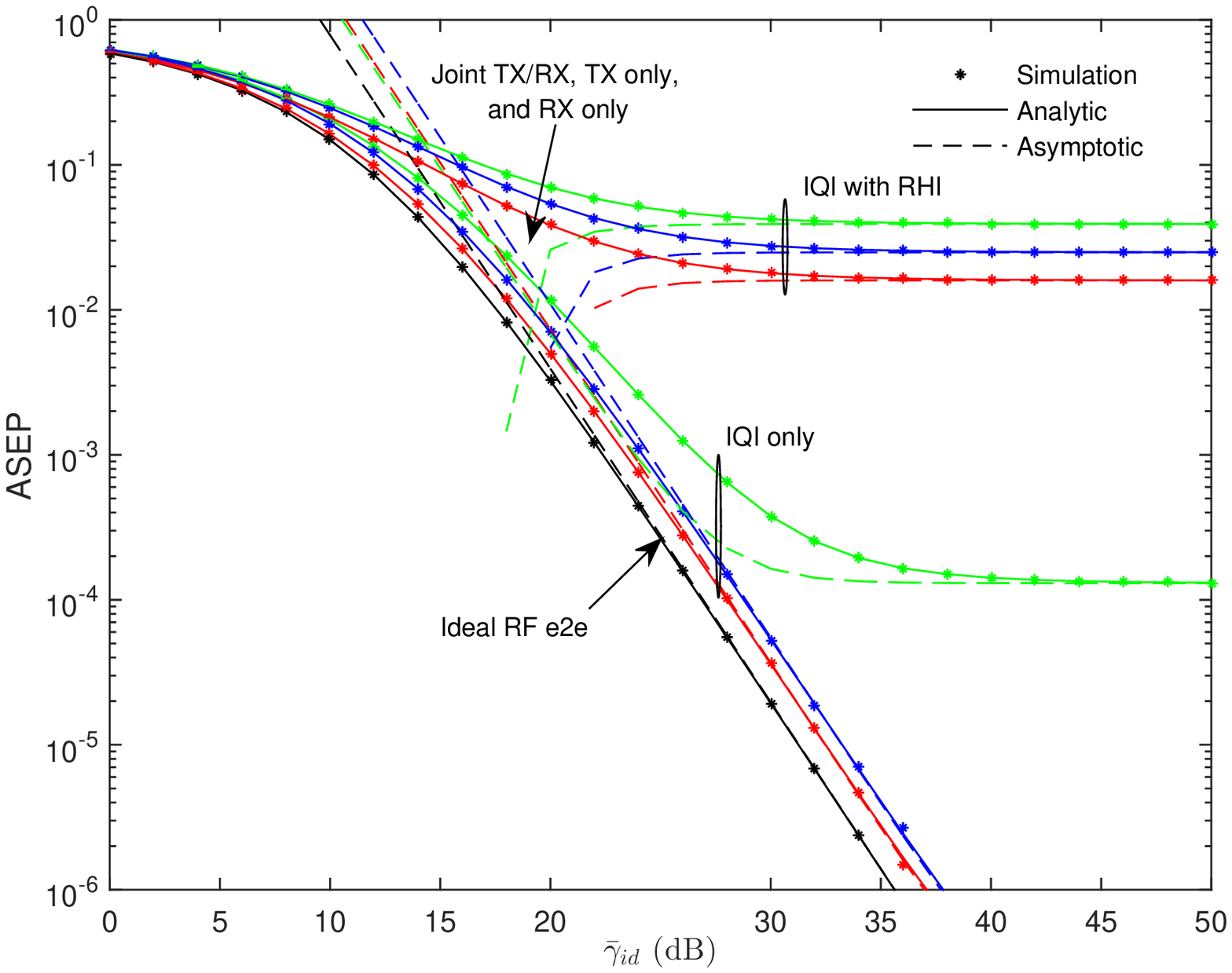}
			\captionsetup{format=plain,font=scriptsize,skip=0pt}
			\caption{ASEP for $8-$PSK over $\alpha -\mu $ fading channel with\\$\alpha =2.3$ and $\mu =2.$}
			\label{fig8mpskalpha}
		\end{minipage}
		\hfill \hspace{0.1cm}
		\begin{minipage}[t]{0.55\columnwidth}
			
			\hspace{-1cm}\includegraphics[width=10cm, height=7cm]{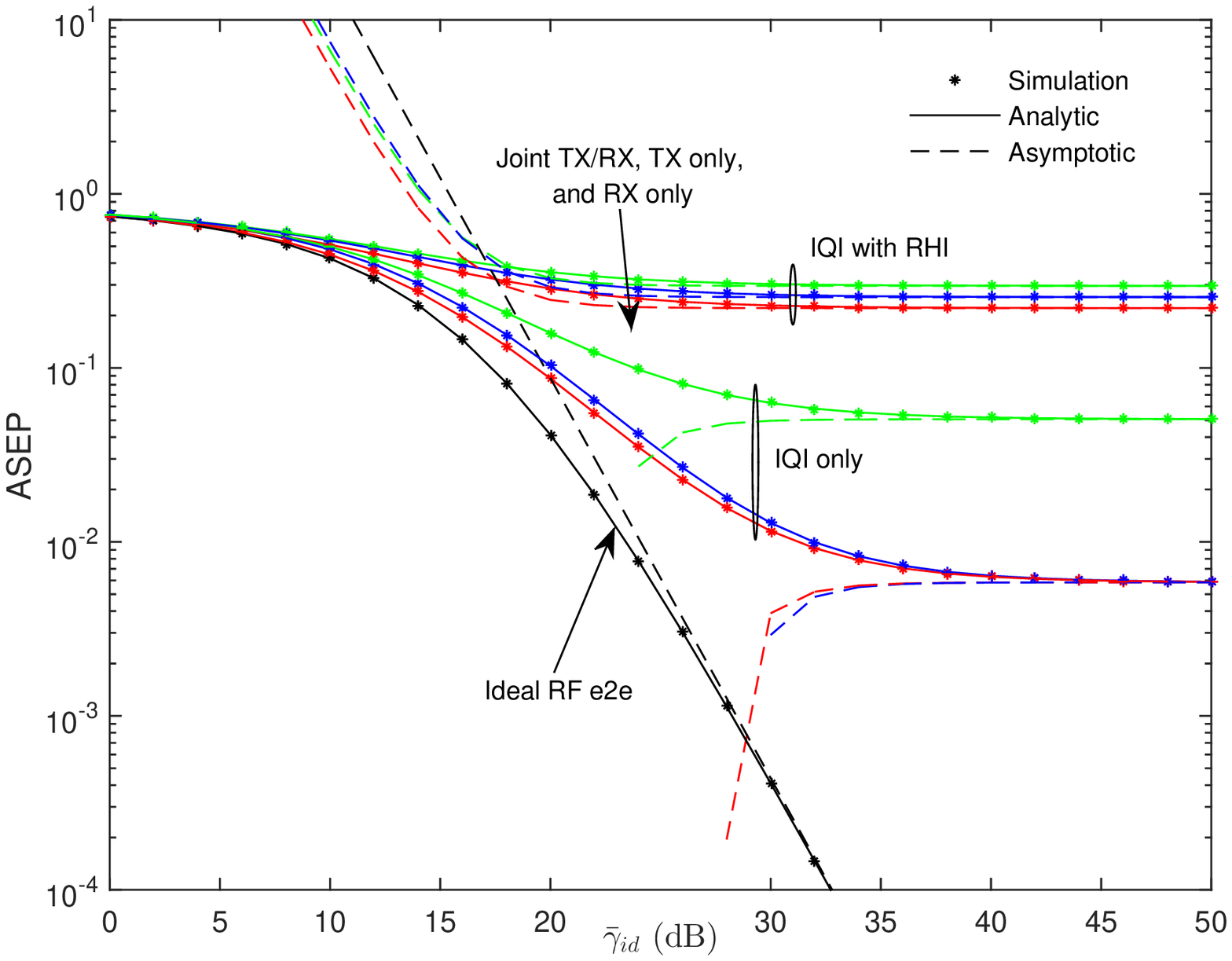}
			\vspace*{-8mm}
			\captionsetup{format=plain,font=scriptsize,skip=0pt}
			\caption{ASEP for $16-$PSK over $\alpha -\mu $ fading channel with\\$\alpha =2.3$ and $\mu =2.$}
			\label{fig16mpskalpha}
		\end{minipage}
		\hfill
		\par
		\begin{minipage}[t]{0.55\columnwidth}
			
			\hspace{-1.5cm} \includegraphics[width=10cm, height=7cm]{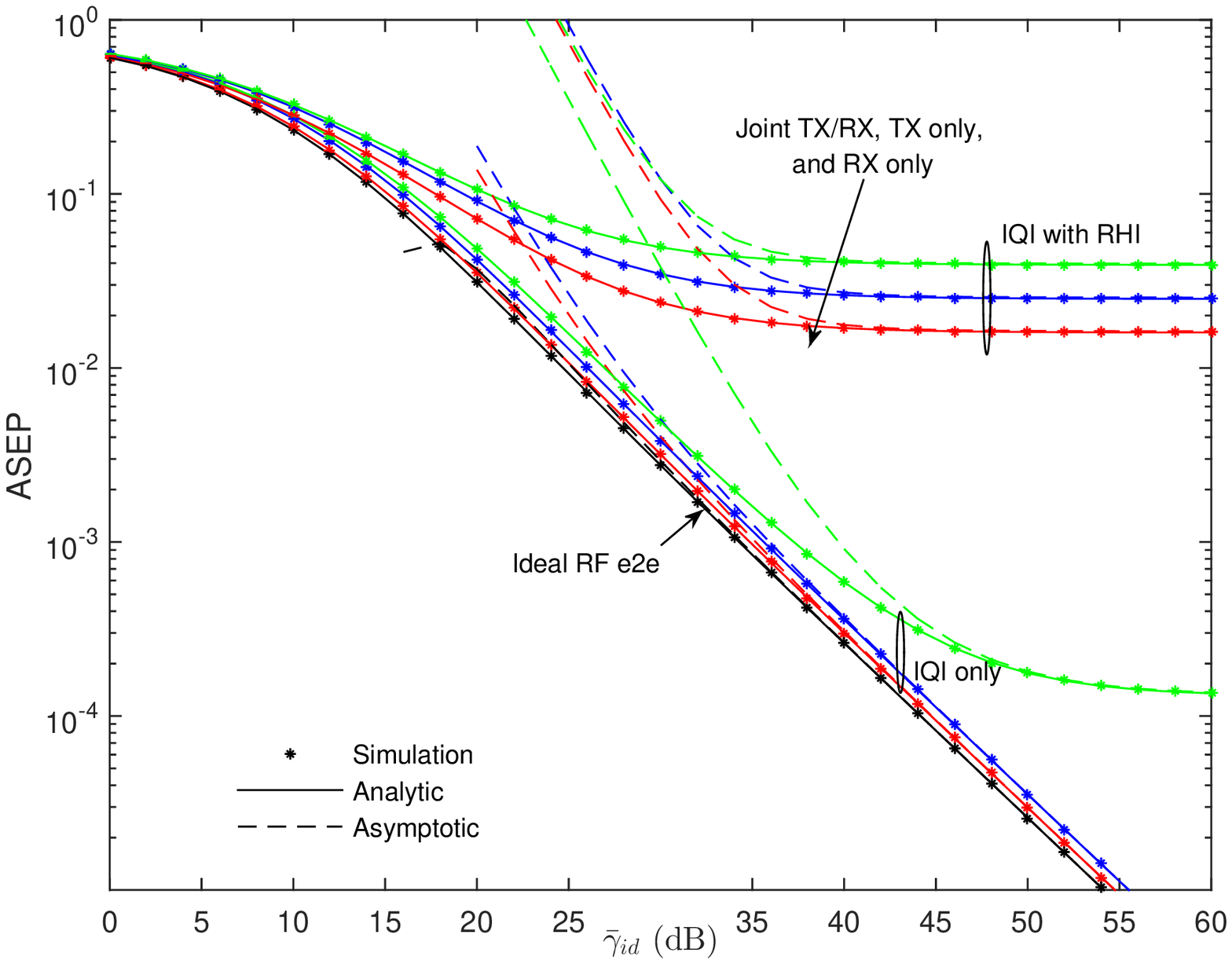}
			\captionsetup{format=plain,font=scriptsize,skip=0pt,margin=-0.2cm}
			\caption{ASEP for $8-$PSK over M\'alaga $\mathcal{M}$ turbulence channel with\\ $\alpha =2.296,\beta =2,\Omega =1,\rho =0.596,\Omega ^{\prime }=1.3265$ and $\xi=3.85$.}
			\label{fig8mpskmalga}
		\end{minipage}
		\hfill \hspace{0.1cm}
		\begin{minipage}[t]{0.55\columnwidth}
			
			\hspace{-1cm}\includegraphics[width=10cm, height=7cm]{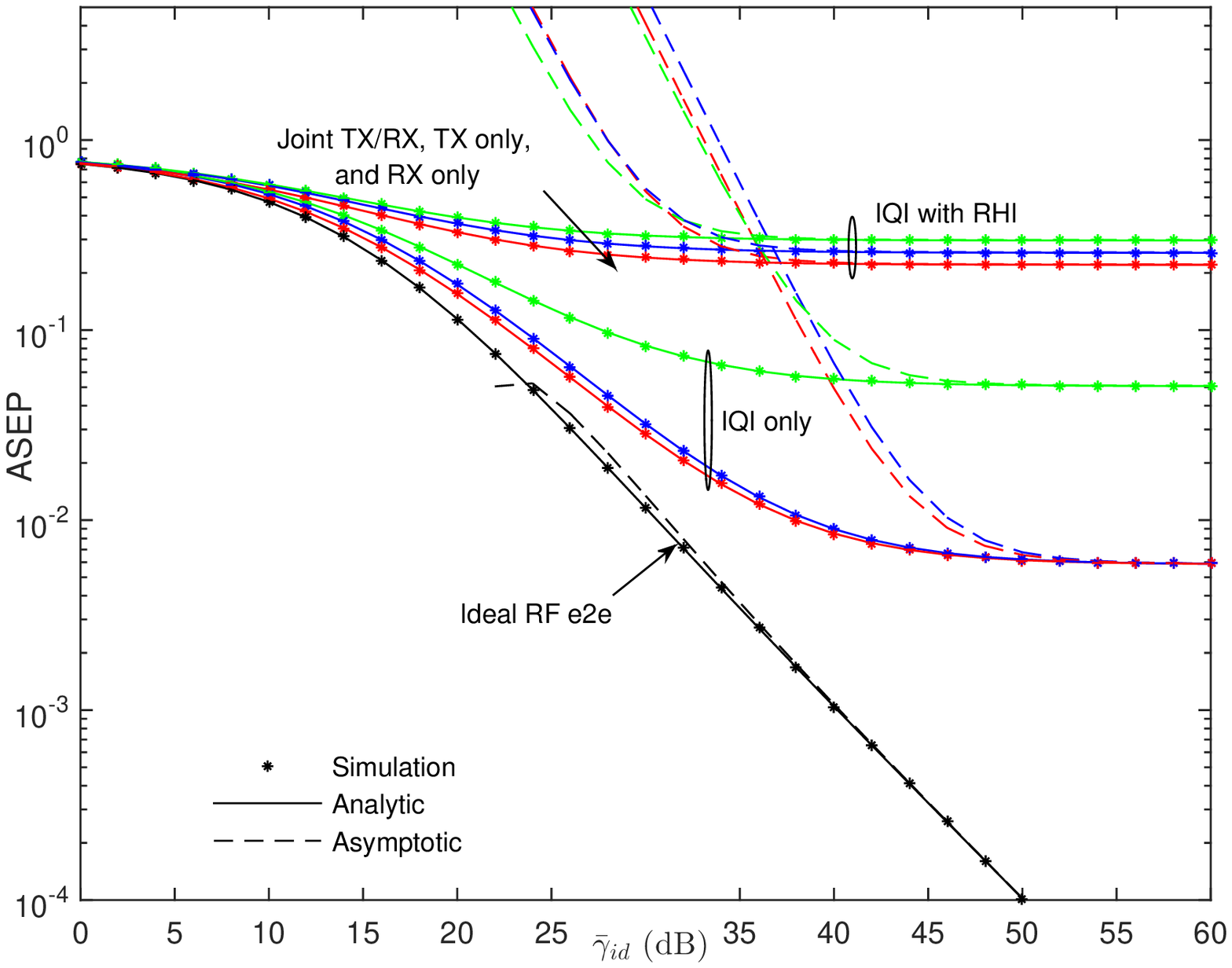}
			\vspace*{-8mm}
			\captionsetup{format=plain,font=scriptsize,skip=0pt,margin=-0.2cm}
			\caption{ASEP for $16-$PSK over M\'alaga $\mathcal{M}$ turbulence channel with\\ $\alpha =2.296,\beta =2,\Omega =1,\rho =0.596,\Omega ^{\prime }=1.3265$ and $\xi=3.85$.}
			\label{fig16mpskmalga}
		\end{minipage}
		\hfill \hspace{0.1cm}
		\begin{minipage}[t]{0.55\columnwidth}
			
			\hspace{-1cm}\includegraphics[width=10cm, height=7cm]{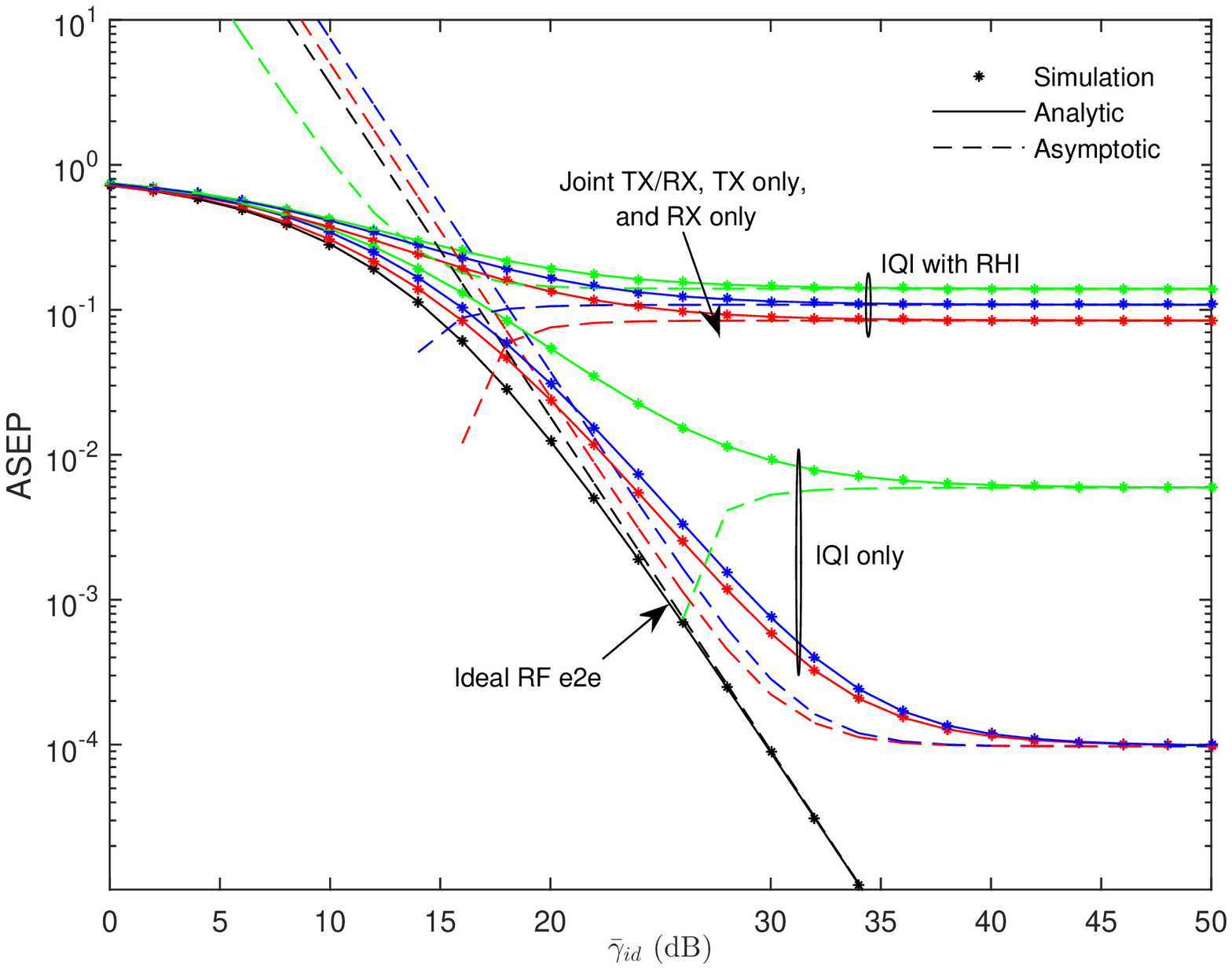}
			\vspace*{-8mm}
			\captionsetup{format=plain,font=scriptsize,skip=0pt}
			\caption{ASEP for $8-$DPSK over $\alpha -\mu $ fading channel with\\$\alpha =2.3$ and $\mu =2.$}
			\label{8mdpskalpha}
		\end{minipage}
		\hfill \hspace{0.1cm}
		\begin{minipage}[t]{0.55\columnwidth}
			
			\hspace{-1cm}\includegraphics[width=10cm, height=7cm]{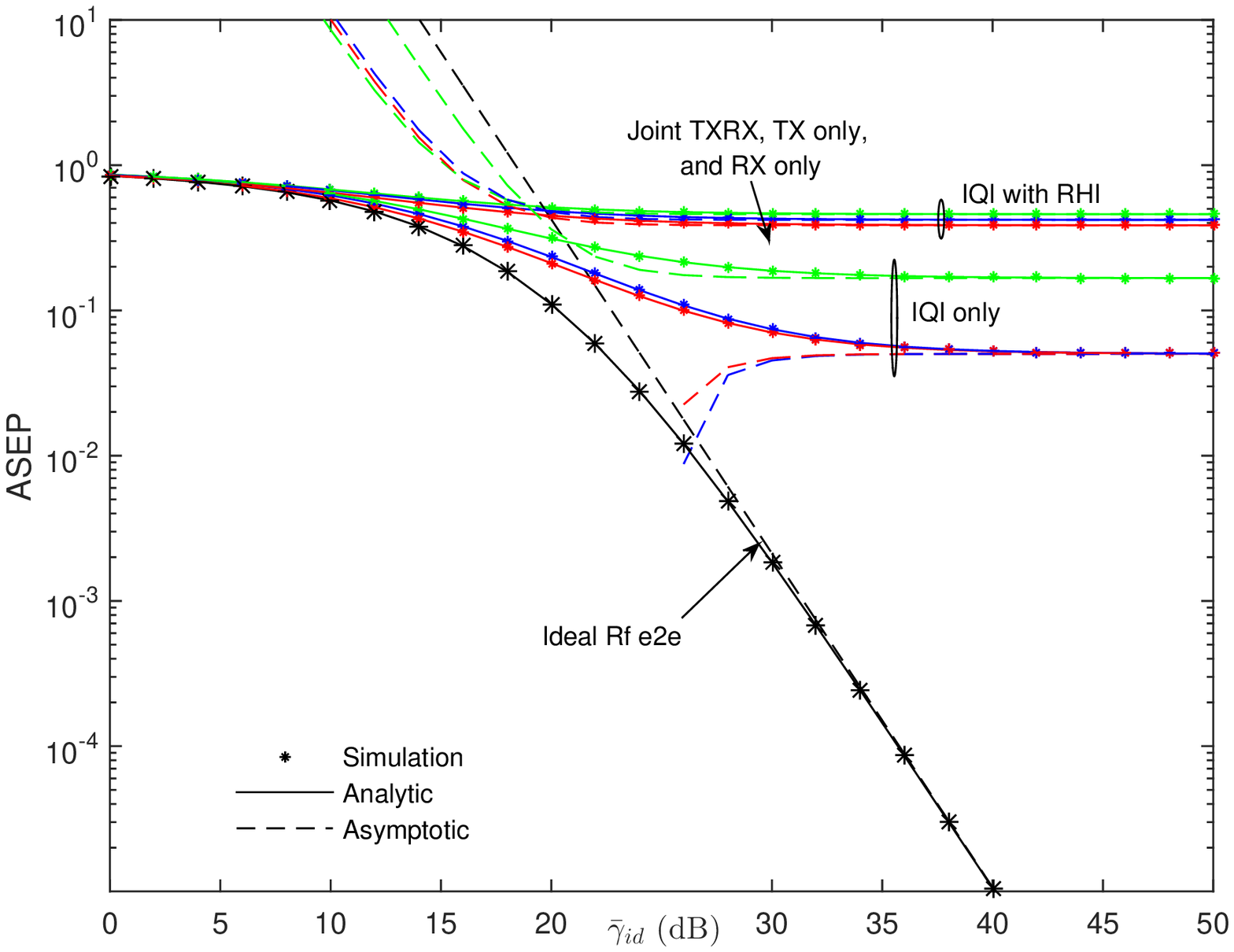}
			\vspace*{-8mm}
			\captionsetup{format=plain,font=scriptsize,skip=0pt}
			\caption{ASEP for $16-$DPSK over $\alpha -\mu $ fading channel with\\$\alpha =2.3$ and $\mu =2.$}
			\label{fig16mdpskalpha}
		\end{minipage}
	\end{figure}
	\newpage
	\begin{figure}[tbp]
		\vspace{-1.2cm}
		\par
		\begin{minipage}[t]{0.55\columnwidth}
			
			\hspace{-1cm}\includegraphics[width=10cm, height=7cm]{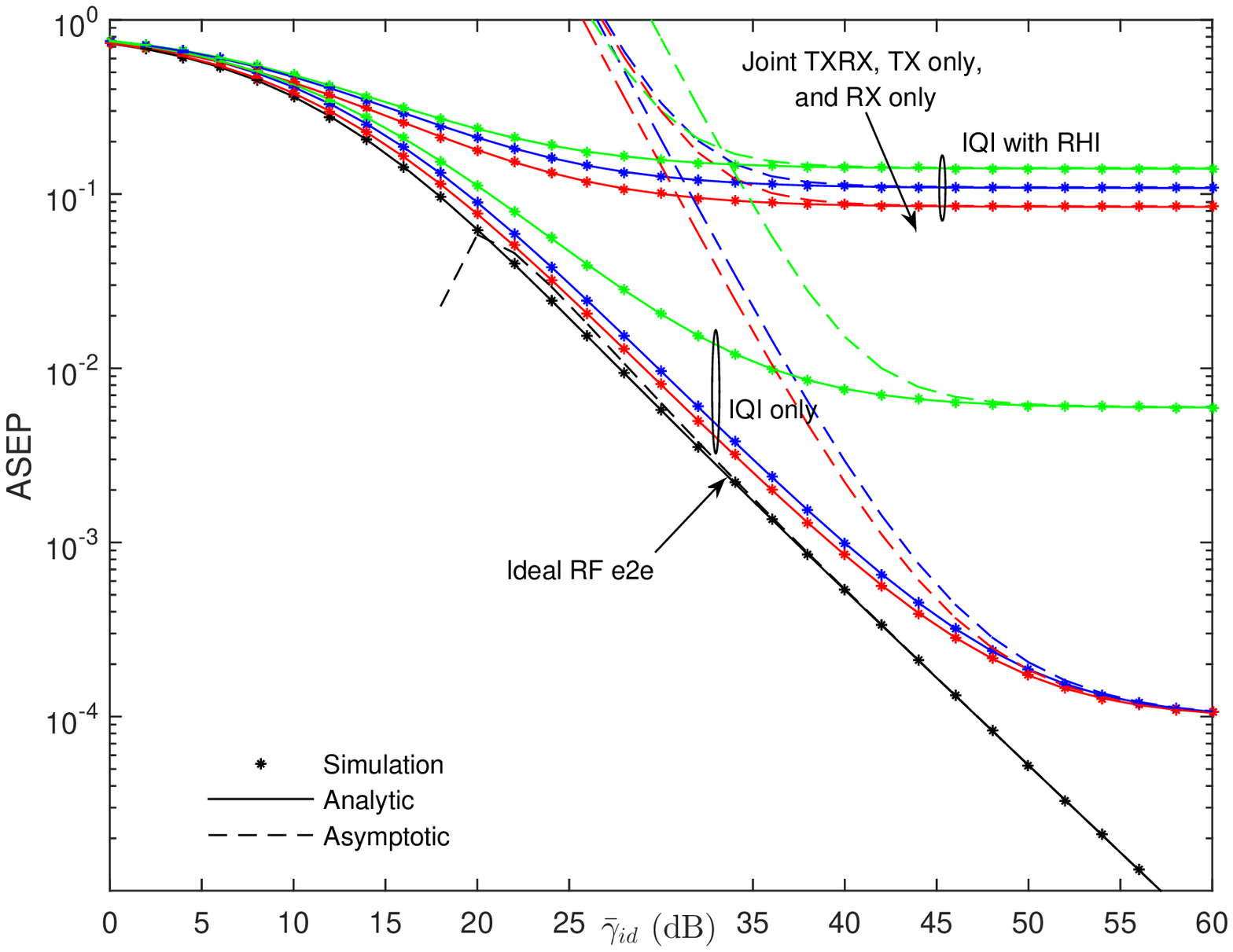}
			\captionsetup{format=plain,font=scriptsize,skip=0pt,margin=-0.2cm}
			\caption{ASEP for $8-$DPSK over M\'alaga $\mathcal{M}$ turbulence channel with\\ $\alpha =2.296,\beta =2,\Omega =1,\rho =0.596,\Omega ^{\prime }=1.3265$ and $\xi=3.85$.}
			\label{fig8mdpskmalaga}
		\end{minipage}
		\hfill \hspace{0.1cm}
		\begin{minipage}[t]{0.55\columnwidth}
			
			\hspace{-1cm}\includegraphics[width=10cm, height=7cm]{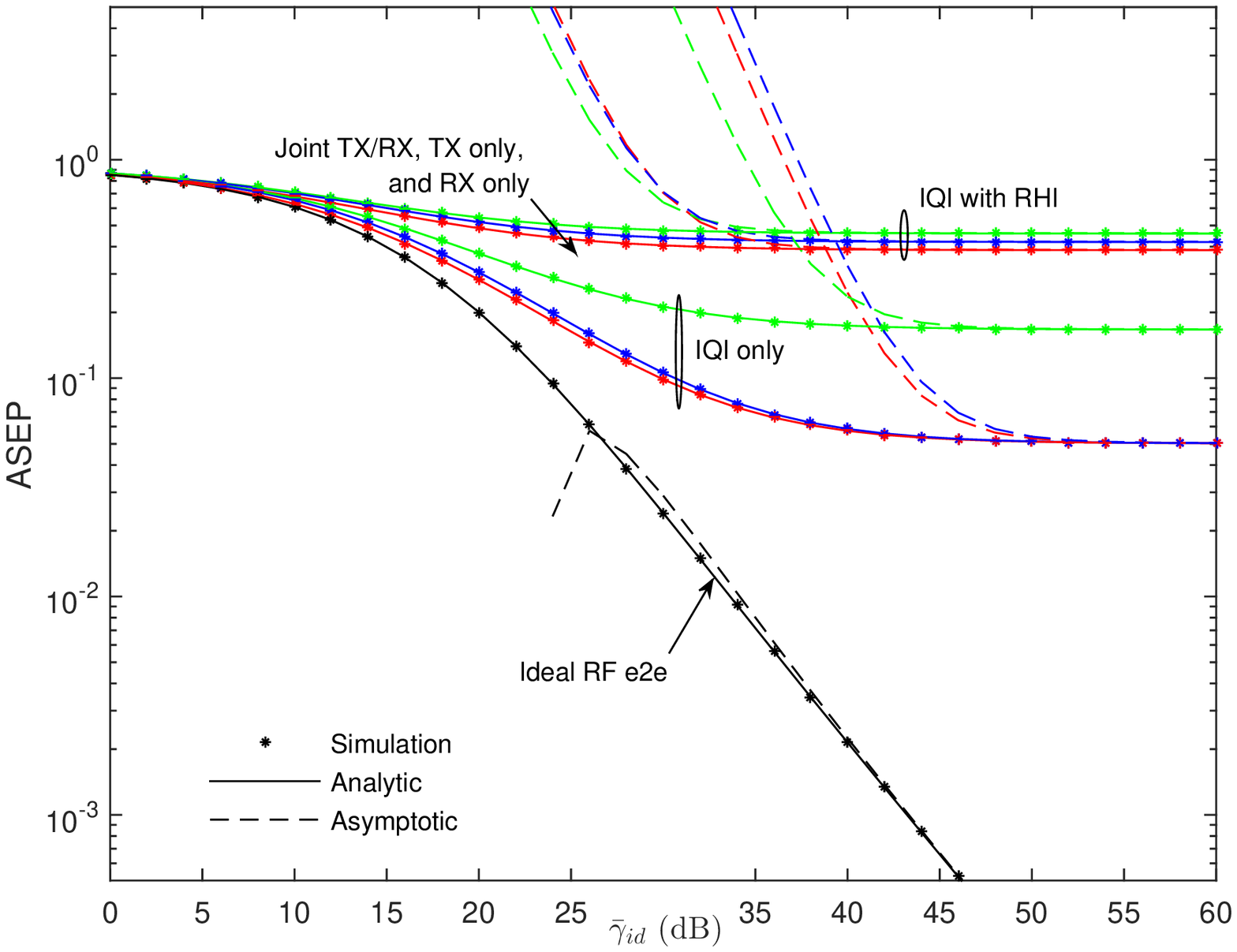}
			\vspace*{-8mm}
			\captionsetup{format=plain,font=scriptsize,skip=0pt,margin=-0.2cm}
			\caption{ASEP for $16-$DPSK over M\'alaga $\mathcal{M}$ turbulence channel with\\ $\alpha =2.296,\beta =2,\Omega =1,\rho =0.596,\Omega ^{\prime }=1.3265$ and $\xi=3.85$.}
			\label{fig16mdpskmalaga}
		\end{minipage}
		\hfill
		\par
		\begin{minipage}[t]{0.55\columnwidth}
			\hspace{-1.2cm} \includegraphics[width=10cm, height=7cm]{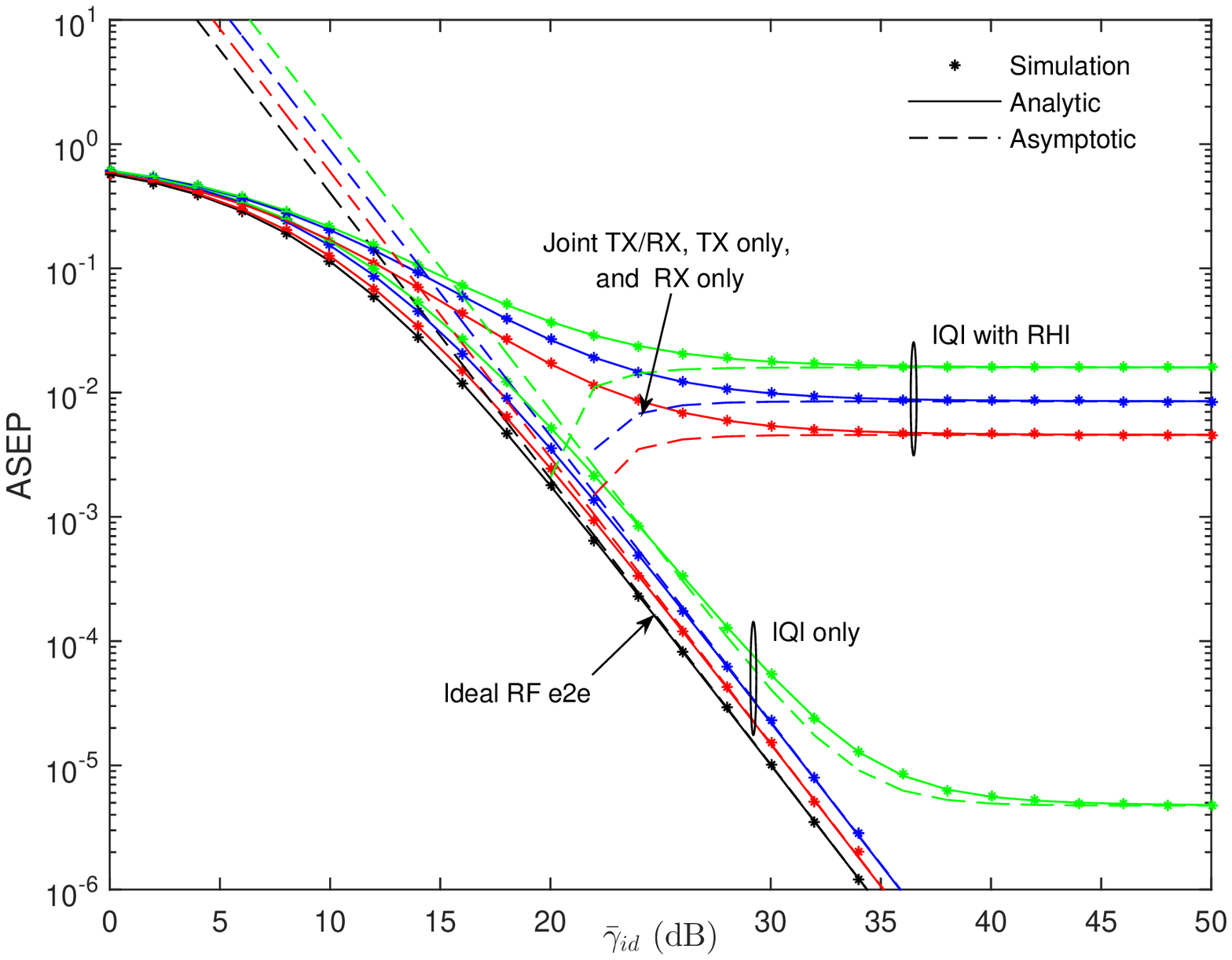}
			\captionsetup{format=plain,font=scriptsize,skip=0pt}
			\caption{ASEP for $8-$QAM over $\alpha -\mu $ fading channel with\\$\alpha =2.3$ and $\mu =2.$}
			\label{fig8qamalpha}
		\end{minipage}
		\hfill \hspace{0.1cm}
		\begin{minipage}[t]{0.55\columnwidth}
			\hspace{-1cm}\includegraphics[width=10cm, height=7cm]{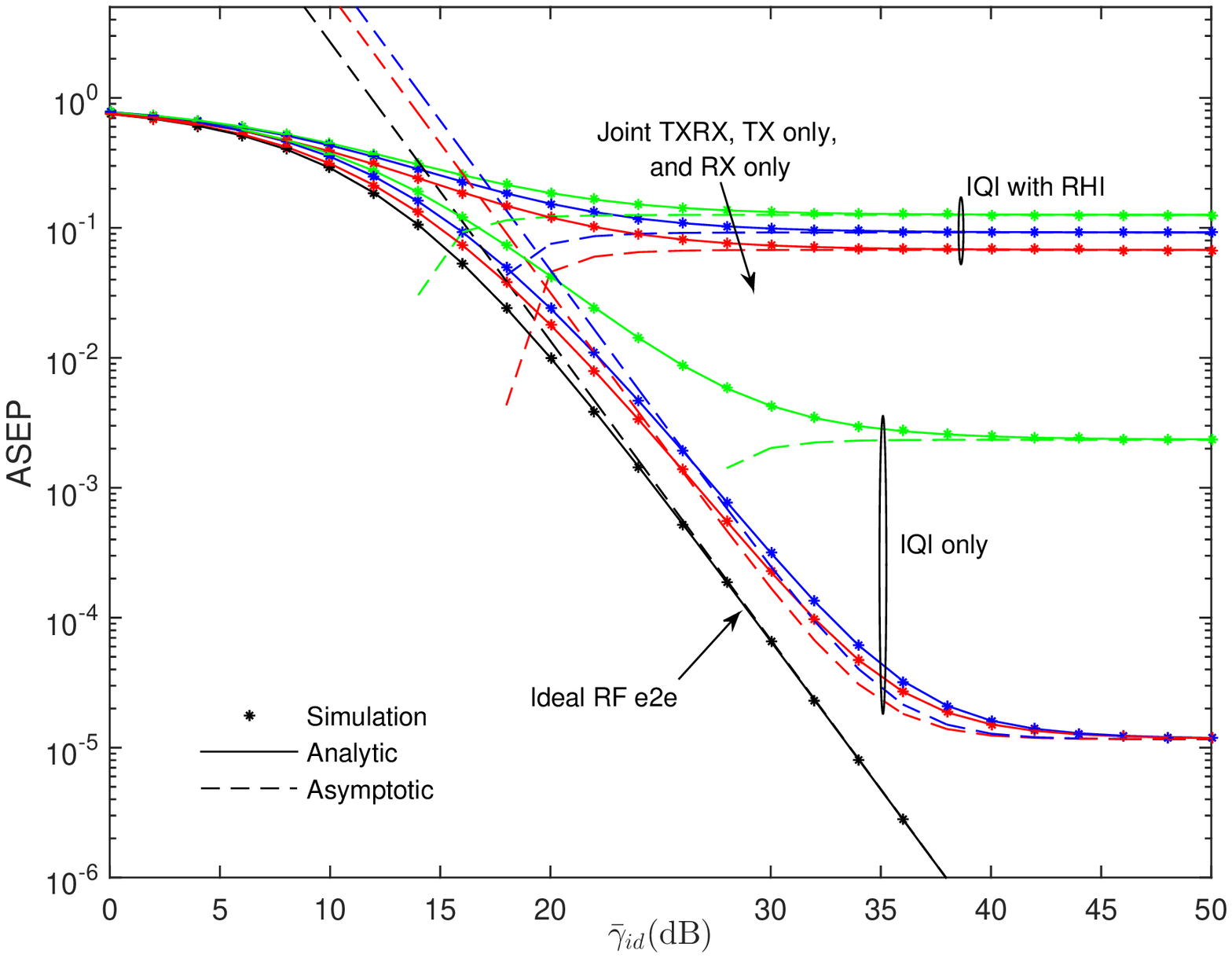}
			\captionsetup{format=plain,font=scriptsize,skip=0pt}
			\caption{ASEP for $16-$QAM over $\alpha -\mu $ fading channel with\\$\alpha =2.3$ and $\mu =2.$}
			\label{fig16qamalpha}
		\end{minipage}
		\hfill \hspace{0.1cm}
		\begin{minipage}[t]{0.55\columnwidth}
			\hspace{-1cm}\includegraphics[width=10cm, height=7cm]{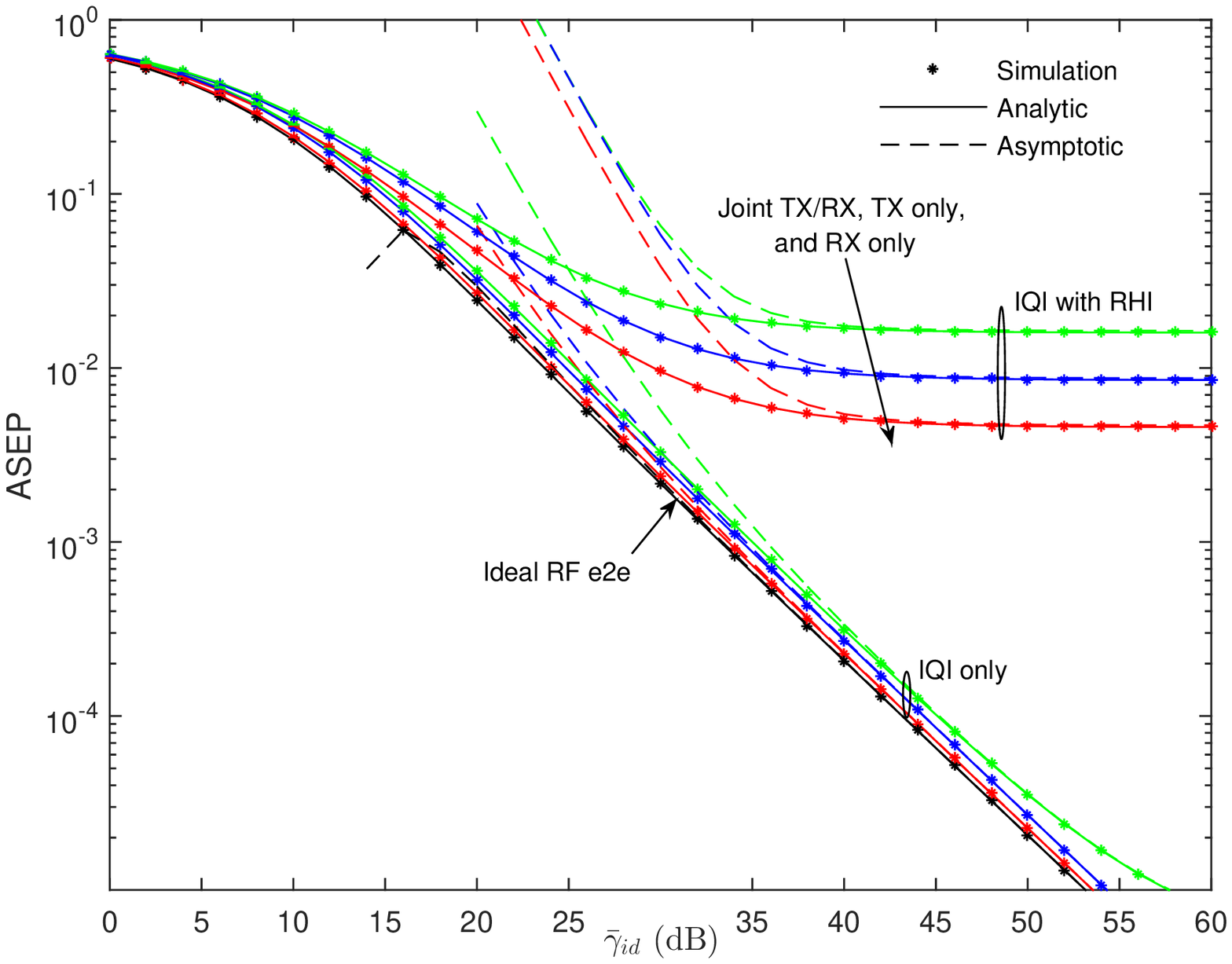}
			\captionsetup{format=plain,font=scriptsize,skip=0pt,margin=-0.2cm}
			\caption{ASEP for $8-$QAM over M\'alaga $\mathcal{M}$ turbulence channel with\\ $\alpha =2.296,\beta =2,\Omega =1,\rho =0.596,\Omega ^{\prime }=1.3265$ and $\xi=3.85$.}
			\label{fig8qammalaga}
		\end{minipage}
		\hfill \hspace{0.1cm}
		\begin{minipage}[t]{0.55\columnwidth}
			\hspace{-1cm}\includegraphics[width=10cm, height=7cm]{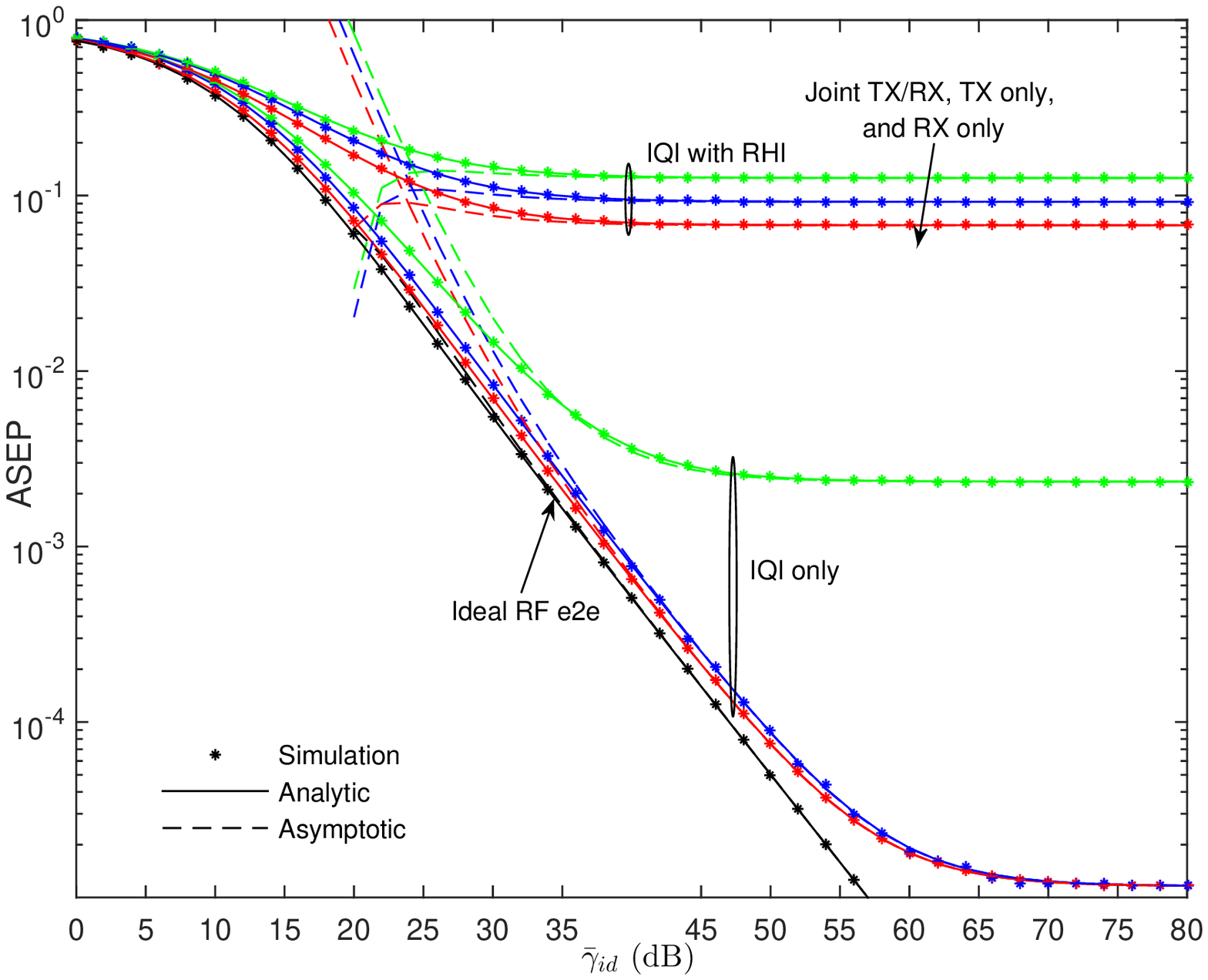}
			\vspace*{-8mm}
			\captionsetup{format=plain,font=scriptsize,skip=0pt,margin=-0.2cm}
			\caption{ASEP for $16-$QAM over M\'alaga $\mathcal{M}$ turbulence channel with\\$\alpha =2.296,\beta =2,\Omega =1,\rho =0.596,\Omega ^{\prime }=1.3265$ and $\xi=3.85$.}
			\label{fig16qammalaga}
		\end{minipage}
	\end{figure}
	\clearpage
	\newpage
	\mbox{}
	\section*{Appendix A: Proof of proposition \protect\ref{ORAexact}}
	\begin{flushleft}
		By setting $U^{\prime }=\frac{\partial F_{\gamma _{\varkappa }}\left( \gamma
			\right) }{\partial \gamma },$ $V=\log (1+\gamma )$, and using the
		integration by parts in $\left( \ref{ORAexpression}\right) $ with the help
		of $\left( \ref{CDFIQI}\right) $, one obtains
		\begin{eqnarray}
		\mathcal{C}^{\left( \varkappa \right) } &=&\frac{1}{\log \left( 2\right) }%
		\sum_{\ell =1}^{L}\frac{\psi _{\ell }}{\phi _{\ell }}\int_{0}^{\frac{\omega
				_{\varkappa }}{\varrho _{\varkappa }}}\frac{1}{1+\gamma }{\small H}_{p+1{\small ,q+1}}^{m+1{\small ,n}}\left( \frac{\phi
			_{\ell }\tau _{\varkappa }\gamma }{\omega _{\varkappa }-\varrho _{\varkappa
			}\gamma }\left\vert
		\begin{array}{c}
		\Psi ^{\left( \ell \right) } \\
		\left( 0,1\right) ,\Upsilon ^{\left( \ell \right) }%
		\end{array}%
		\right. \right) d\gamma.  \label{EQORAappedix}
		\end{eqnarray}
		by plugging (\ref{Hffox}) into (\ref{EQORAappedix}), we get
		\begin{eqnarray}
		\mathcal{C}^{\left( \varkappa \right) } &=&\frac{1}{\log \left( 2\right) }%
		\frac{1}{2\pi j}\sum_{\ell =1}^{L}\frac{\psi _{\ell }}{\phi _{\ell }}\int_{%
			\mathcal{L}_{v}^{\left( \ell \right) }}\frac{\prod\limits_{i=1}^{m}\Gamma
			\left( \mathcal{B}_{i}^{\left( \ell \right) }+B_{i}^{\left( \ell \right)
			}v\right) \prod\limits_{i=1}^{n}\Gamma \left( 1-\mathcal{A}_{i}^{\left( \ell
				\right) }-A_{i}^{\left( \ell \right) }v\right) }{v\prod\limits_{i=n+1}^{p}%
			\Gamma \left( \mathcal{A}_{i}^{\left( \ell \right) }+A_{i}^{\left( \ell
				\right) }v\right) \prod\limits_{i=m+1}^{q}\Gamma \left( 1-\mathcal{B}%
			_{i}^{\left( \ell \right) }-B_{i}^{\left( \ell \right) }v\right) }  \notag \\
		&&\times \left( \frac{\phi _{\ell }\tau _{\varkappa }}{\varrho _{\varkappa }}%
		\right) ^{-v}\underbrace{\int_{0}^{\frac{\omega _{\varkappa }}{\varrho
					_{\varkappa }}}\gamma ^{-v}\left( 1+\gamma \right) ^{-1}\left( \frac{\omega
				_{\varkappa }}{\varrho _{\varkappa }}-\gamma \right) ^{v}d\gamma }_{\mathcal{%
				I}_{\varkappa }}dv.  \label{ORAdetail}
		\end{eqnarray}%
		Further, the inner integral $\mathcal{I}_{\varkappa }$ can be evaluated
		using \cite[Eq.(3.197.8)]{integraltable} as
		\begin{equation}
		\mathcal{I}_{\varkappa }\mathcal{=}\frac{\omega _{\varkappa }}{\varrho
			_{\varkappa }}\Gamma \left( 1+v\right) \Gamma \left( 1-v\right) \text{ }%
		_{2}F_{1}\left( 1,1-v;2;-\frac{\omega _{\varkappa }}{\varrho _{\varkappa }}%
		\right) ,  \label{innerintegral}
		\end{equation}%
		where $_{2}F_{1}\left( .,.;.;.\right) $ denotes the Gauss hypergeometric
		function (GHF) \cite[Eq. (9.100)]{integraltable}.\newline
		Now plugging $\left( \ref{innerintegral}\right) $ into $\left( \ref%
		{ORAdetail}\right) $ and using the integral Mellin-Barnes representation of
		the GHF \cite[Eq. (9.113)]{integraltable}, yields
		\begin{eqnarray}
		\mathcal{C}^{\left( \varkappa \right) } &=&\frac{1}{\log (2)}\left( \frac{1}{%
			2\pi j}\right) ^{2}\sum_{\ell =1}^{L}\frac{\psi _{\ell }}{\phi _{\ell }}%
		\int_{\mathcal{L}_{v}^{\left( \ell \right) }}\int_{\mathcal{L}_{w}^{\left(
				\ell \right) }}\frac{\Gamma \left( {-v-w}\right) \Gamma \left( v\right)
			\prod\limits_{i=1}^{m}\Gamma \left( \mathcal{B}_{i}^{\left( \ell \right)
			}+B_{i}^{\left( \ell \right) }v\right) \prod\limits_{i=1}^{n}\Gamma \left( 1-%
			\mathcal{A}_{i}^{\left( \ell \right) }-A_{i}^{\left( \ell \right) }v\right)
		}{\prod\limits_{i=n+1}^{p}\Gamma \left( \mathcal{A}_{i}^{\left( \ell \right)
			}+A_{i}^{\left( \ell \right) }v\right) \prod\limits_{i=m+1}^{q}\Gamma \left(
			1-\mathcal{B}_{i}^{\left( \ell \right) }-A_{i}^{\left( \ell \right)
			}v\right) }  \notag \\
		&&\times \frac{\Gamma \left( 1+w\right) \Gamma \left( {-w}\right) }{\Gamma
			\left( 1-w\right) }\left( \frac{\phi _{\ell }\tau _{\varkappa }}{\varrho
			_{\varkappa }}\right) ^{-v}\left( \frac{\omega _{\varkappa }}{\varrho
			_{\varkappa }}\right) ^{-w}dvdw,  \label{ORRA}
		\end{eqnarray}%
		which concludes the proof of Proposition 1.
	\end{flushleft}
	
	\section*{Appendix B: Proof of corollary \protect\ref{crolaryidealcapa}}
	
	\begin{flushleft}
		By replacing $\omega _{\varkappa }=\tau _{\varkappa }=1$ in $\left( \ref%
		{ORRA}\right) $, the CC under ORA policy can be simplified to%
		\begin{eqnarray}
		\mathcal{C}^{\left( \varkappa \right) } &=&\frac{1}{\log (2)}\left( \frac{1}{%
			2\pi j}\right) ^{2}\sum_{\ell =1}^{L}\frac{\psi _{\ell }}{\phi _{\ell }}%
		\int_{\mathcal{L}_{v}^{\left( \ell \right) }}\int_{\mathcal{L}_{w}^{\left(
				\ell \right) }}\frac{\Gamma \left( -v-w\right) \Gamma \left( v\right)
			\prod\limits_{i=1}^{m}\Gamma \left( \mathcal{B}_{i}^{\left( \ell \right)
			}+B_{i}^{\left( \ell \right) }v\right) }{\prod\limits_{i=n+1}^{p}\Gamma
			\left( \mathcal{A}_{i}^{\left( \ell \right) }+A_{i}^{\left( \ell \right)
			}v\right) }  \notag \\
		&&\times \frac{\prod\limits_{i=1}^{n}\Gamma \left( 1-\mathcal{A}_{i}^{\left(
				\ell \right) }-A_{i}^{\left( \ell \right) }v\right) }{\prod%
			\limits_{i=m+1}^{q}\Gamma \left( 1-\mathcal{B}_{i}^{\left( \ell \right)
			}-B_{i}^{\left( \ell \right) }v\right) }\frac{\Gamma \left( -w\right) \Gamma
			\left( 1+w\right) }{\Gamma \left( 1-w\right) }\phi _{\ell }^{-v}\left( \frac{%
			1}{\varrho _{\varkappa }}\right) ^{-w-v}dvdw.  \notag
		\end{eqnarray}%
		Using the change of variable $s=$ $-w-v,$ we get
		\begin{eqnarray}
		\mathcal{C}^{\left( \varkappa \right) } &=&\frac{1}{\log (2)}\frac{1}{2\pi j}%
		\sum_{\ell =1}^{L}\frac{\psi _{\ell }}{\phi _{\ell }}\int_{\mathcal{L}%
			_{v}^{\left( \ell \right) }}\frac{\Gamma \left( v\right)
			\prod\limits_{i=1}^{m}\Gamma \left( \mathcal{B}_{i}^{\left( \ell \right)
			}+B_{i}^{\left( \ell \right) }v\right) \prod\limits_{i=1}^{n}\Gamma \left( 1-%
			\mathcal{A}_{i}^{\left( \ell \right) }-A_{i}^{\left( \ell \right) }v\right)
		}{\prod\limits_{i=n+1}^{p}\Gamma \left( \mathcal{A}_{i}^{\left( \ell \right)
			}+A_{i}^{\left( \ell \right) }v\right) \prod\limits_{i=m+1}^{q}\Gamma \left(
			1-\mathcal{B}_{i}^{\left( \ell \right) }-B_{i}^{\left( \ell \right)
			}v\right) }\phi _{\ell }^{-v}  \notag \\
		&&\times \underbrace{\frac{1}{2\pi j}\int_{\mathcal{L}_{s}^{\left( \ell
					\right) }}\frac{\Gamma \left( s+v\right) \Gamma \left( 1-s-v\right) }{\Gamma
				\left( 1+s+v\right) }\Gamma \left( s\right) \varrho _{\varkappa }^{-s}ds}_{%
			\mathcal{K}_{\varkappa }}dv
		\end{eqnarray}%
		For significantly smaller values of $\varrho _{\varkappa }$ (i.e., $\varrho
		_{\varkappa }<<1$), the condition \cite[Eq. (1.2.15)]{kilbas} is satisfied.
		It follows that the inner integral $\mathcal{K}_{\varkappa }$ can be
		expressed as an infinite summation of the residues evaluated at the left poles.
		Subsequently, relying on \cite[Theorem 1.3]{kilbas}, $\mathcal{K}_{\varkappa
		}$ can be expressed as
		\begin{equation}
		\mathcal{K}_{\varkappa }\mathcal{=}\sum_{l=0}^{\infty }\frac{\left(
			-1\right) ^{l}\Gamma \left( -l+v\right) \Gamma \left( 1+l-v\right) }{%
			l!\Gamma \left( 1-l+v\right) }\varrho _{\varkappa }^{l}.  \label{K-residu}
		\end{equation}%
		Particularly, for $\varrho _{\varkappa }=0$ (i.e., case of ideal RF e$2$e), $%
		\left( \ref{K-residu}\right) $ is reduced to
		\begin{equation}
		\mathcal{K}_{\varkappa }\mathcal{=}\frac{\Gamma \left( v\right) \Gamma
			\left( 1-v\right) }{\Gamma \left( 1+v\right) }.  \label{K-res22}
		\end{equation}%
		Thus, the CC under ORA policy and ideal RF e$2$e can be expressed as%
		\begin{eqnarray}
		\mathcal{C}^{\left( \varkappa \right) } &=&\frac{1}{\log (2)}\sum_{\ell
			=1}^{L}\frac{\psi _{\ell }}{\phi _{\ell }}\left( \frac{1}{2\pi j}\right)
		\int_{\mathcal{L}_{v}^{\left( \ell \right) }}\frac{\Gamma \left( v\right)
			\Gamma \left( v\right) \Gamma \left( 1-v\right) }{\Gamma \left( 1+v\right) }
		\notag \\
		&&\times \frac{\prod\limits_{i=1}^{m}\Gamma \left( \mathcal{B}_{i}^{\left(
				\ell \right) }+B_{i}^{\left( \ell \right) }v\right)
			\prod\limits_{i=1}^{n}\Gamma \left( 1-\mathcal{A}_{i}^{\left( \ell \right)
			}-A_{i}^{\left( \ell \right) }v\right) }{\prod\limits_{i=n+1}^{p}\Gamma
			\left( \mathcal{A}_{i}^{\left( \ell \right) }+A_{i}^{\left( \ell \right)
			}v\right) \prod\limits_{i=m+1}^{q}\Gamma \left( 1-\mathcal{B}_{i}^{\left(
				\ell \right) }-B_{i}^{\left( \ell \right) }v\right) }\phi _{\ell }^{-v}dv.
		\end{eqnarray}%
		This concludes the proof.
	\end{flushleft}
	
	\section*{Appendix C: Proof of proposition \protect\ref{propASEPexact}}
	
	\begin{flushleft}
		\textcolor{black}{By substituting $\left( \ref{SEPbin}\right)$ into the
			following equation}
		\begin{equation}
		P_{s}^{\left( \varkappa \right) }=\int_{0}^{\frac{\omega _{\varkappa }}{%
				\varrho _{\varkappa }}}\mathcal{H}\left( \gamma \right) \frac{\partial
			F_{\gamma _{\varkappa }}\left( \gamma \right) }{\partial \gamma }d\gamma ,
		\label{PSexpression}
		\end{equation}%
		one \textcolor{black}{gets a tight approximate formula}
		\begin{equation}
		P_{s}^{\left( \varkappa \right) }\simeq \sum_{n=0}^{N}\theta _{n}\mathcal{I}%
		_{n}^{\left( \varkappa \right) },  \label{Psaprox}
		\end{equation}%
		with
		\begin{equation}
		\mathcal{I}_{n}^{\left( \varkappa \right) }=\int_{0}^{\frac{\omega
				_{\varkappa }}{\varrho _{\varkappa }}}\exp \left( -\delta _{n}\gamma \right)
		\frac{\partial F_{\gamma _{\varkappa }}\left( \gamma \right) }{\partial
			\gamma }d\gamma .  \label{integral_n}
		\end{equation}%
		By setting $U^{\prime }=\frac{\partial F_{\gamma _{\varkappa }}\left( \gamma
			\right) }{\partial \gamma }$, $V=\exp \left( -\delta _{n}\gamma \right) $,
		and using $\left( \ref{CDFIQI}\right) $, \textcolor{black}{one can obtain}
		\begin{eqnarray}
		\mathcal{I}_{n}^{\left( \varkappa \right) } &=&1-\delta _{n}\sum_{\ell
			=1}^{L}\frac{\psi _{\ell }}{\phi _{\ell }}\int_{0}^{\frac{\omega _{\varkappa
			}}{\varrho _{\varkappa }}}\exp \left( -\delta _{n}\gamma \right)   \notag \\
		&&\times {\small H}_{p+1{\small ,q+1}}^{m+1{\small ,n}}\left( \frac{\phi
			_{\ell }\tau _{\varkappa }\gamma }{\omega _{\varkappa }-\varrho _{\varkappa
			}\gamma }\left\vert
		\begin{array}{c}
		\left( \mathcal{A}_{i}^{\left( \ell \right) },A_{i}^{\left( \ell \right)
		}\right) _{i=1:p},\left( 1,1\right)  \\
		\left( 0,1\right) ,\left( \mathcal{B}_{i}^{\left( \ell \right)
		},B_{i}^{\left( \ell \right) }\right) _{i=1:q_{{}}}%
		\end{array}%
		\right. \right) d\gamma .  \notag \\
		&&  \label{EqIntegral-n}
		\end{eqnarray}%
		\textcolor{black}{Subsequently,} $\left( \ref{EqIntegral-n}\right) $ can be
		rewritten with the help of $\left( \ref{Hffox}\right) $ as
		\begin{eqnarray}
		\mathcal{I}_{n}^{\left( \varkappa \right) } &=&1-\frac{\delta _{n}}{2\pi j}%
		\sum_{\ell =1}^{L}\frac{\psi _{\ell }}{\phi _{\ell }}\int_{\mathcal{L}%
			_{v}^{\left( \ell \right) }}\frac{\prod\limits_{i=1}^{m}\Gamma \left(
			\mathcal{B}_{i}^{\left( \ell \right) }+B_{i}^{\left( \ell \right) }v\right)
			\prod\limits_{i=1}^{n}\Gamma \left( 1-\mathcal{A}_{i}^{\left( \ell \right)
			}-A_{i}^{\left( \ell \right) }v\right) }{v\prod\limits_{i=n+1}^{p}\Gamma
			\left( \mathcal{A}_{i}^{\left( \ell \right) }+A_{i}^{\left( \ell \right)
			}v\right) \prod\limits_{i=m+1}^{q}\Gamma \left( 1-\mathcal{B}_{i}^{\left(
				\ell \right) }-B_{i}^{\left( \ell \right) }v\right) }  \notag \\
		&&\times \left( \frac{\phi _{\ell }\tau _{\varkappa }}{\varrho _{\varkappa }}%
		\right) ^{-v}\underbrace{\int_{0}^{\frac{\omega _{\varkappa }}{\varrho
					_{\varkappa }}}\gamma ^{-v}\left( \frac{\omega _{\varkappa }}{\varrho
				_{\varkappa }}-\gamma \right) ^{v}\exp \left( -\delta _{n}\gamma \right)
			d\gamma }_{\mathcal{J}_{\varkappa }}dv.  \label{integ2}
		\end{eqnarray}%
		\textcolor{black}{whereas} the inner integral $\mathcal{J}_{\varkappa }$ can
		be evaluated using \cite[Eq. (3.383.1)]{integraltable} as
		\begin{equation}
		\mathcal{J}_{\varkappa }\mathcal{=}\Gamma \left( 1+v\right) \frac{\omega
			_{\varkappa }}{\varrho _{\varkappa }}\text{ }_{1}F_{1}\left( 1-v;2;-\frac{%
			\delta _{n}\omega _{\varkappa }}{\varrho _{\varkappa }}\right) ,
		\label{J_integral}
		\end{equation}%
		where $_{1}F_{1}\left( .;.;.\right) $ denotes the confluent hypergeometric function \cite[ Eq. (9.21)]{integraltable}. \textcolor{black}{Lastly,}
		plugging $\left( \ref{J_integral}\right) $ into $\left( \ref{integ2}\right) $
		and \textcolor{black}{using} \cite[Eqs. (2.9.14)]{kilbas} %
		\textcolor{black}{along} with (\ref{Hffox}), yields
		\begin{eqnarray}
		\mathcal{I}_{n}^{\left( \varkappa \right) } &=&1-\left( \frac{1}{2\pi j}%
		\right) ^{2}\sum_{\ell =1}^{L}\frac{\psi _{\ell }}{\phi _{\ell }}\int_{%
			\mathcal{L}_{v}^{\left( \ell \right) }}\int_{\mathcal{L}_{s}^{\left( \ell
				\right) }}\Gamma \left( -v-s\right) \frac{\Gamma \left( v\right)
			\prod\limits_{i=1}^{m}\Gamma \left( \mathcal{B}_{i}^{\left( \ell \right)
			}+B_{i}^{\left( \ell \right) }v\right) }{\prod\limits_{i=n+1}^{p}\Gamma
			\left( \mathcal{A}_{i}^{\left( \ell \right) }+A_{i}^{\left( \ell \right)
			}v\right) }  \notag \\
		&&\times \frac{\prod\limits_{i=1}^{n}\Gamma \left( 1-\mathcal{A}_{i}^{\left(
				\ell \right) }-A_{i}^{\left( \ell \right) }v\right) }{\prod%
			\limits_{i=m+1}^{q}\Gamma \left( 1-\mathcal{B}_{i}^{\left( \ell \right)
			}-B_{i}^{\left( \ell \right) }v\right) }\frac{\Gamma \left( 1+s\right) }{%
			\Gamma \left( 1-s\right) }\left( \frac{\phi _{\ell }\tau _{\varkappa }}{%
			\varrho _{\varkappa }}\right) ^{-v}\left( \frac{\delta _{n}\omega
			_{\varkappa }}{\varrho _{\varkappa }}\right) ^{-s}dvds,  \label{Integral_n}
		\end{eqnarray}
		\textcolor{black}{which concludes the proof.}
	\end{flushleft}
	
	\section*{Appendix D: Proof of corollary \protect\ref{corolidealasep}}
	
	\begin{flushleft}
		By setting $\omega _{\varkappa }=\tau _{\varkappa }=1,\left( \ref{Integral_n}%
		\right) $ can be rewritten using the change of variable $w=-s-v$ as
		\begin{eqnarray}
		\mathcal{I}_{n}^{\left( \varkappa \right) } &=&1-\frac{1}{2\pi j}\sum_{\ell
			=1}^{L}\frac{\psi _{\ell }}{\phi _{\ell }}\int_{\mathcal{L}_{v}^{\left( \ell
				\right) }}\frac{\Gamma \left( v\right) \prod\limits_{i=1}^{m}\Gamma \left(
			\mathcal{B}_{i}^{\left( \ell \right) }+B_{i}^{\left( \ell \right) }v\right)
			\prod\limits_{i=1}^{n}\Gamma \left( 1-\mathcal{A}_{i}^{\left( \ell \right)
			}-A_{i}^{\left( \ell \right) }v\right) }{\prod\limits_{i=n+1}^{p}\Gamma
			\left( \mathcal{A}_{i}^{\left( \ell \right) }+A_{i}^{\left( \ell \right)
			}v\right) \prod\limits_{i=m+1}^{q}\Gamma \left( 1-\mathcal{B}_{i}^{\left(
				\ell \right) }-B_{i}^{\left( \ell \right) }v\right) }  \notag \\
		&&\times \left( \frac{\phi _{\ell }}{\delta _{n}}\right) ^{-v}\underbrace{%
			\frac{1}{2\pi j}\int_{\mathcal{L}_{w}^{\left( \ell \right) }}\Gamma \left(
			w\right) \frac{\Gamma \left( 1-w-v\right) }{\Gamma \left( 1+w+v\right) }%
			\left( \frac{\varrho _{\varkappa }}{\delta _{n}}\right) ^{w}dw}_{\mathcal{M}%
			_{\varkappa }}dv. \label{Integrall-n}
		\end{eqnarray}%
		\textcolor{black}{Again, for significantly smaller values of $\varrho _{\varkappa }$ (i.e., $\varrho _{\varkappa }<<1$) and similarly to $\mathcal{K}_{\varkappa }$, the inner integral $\mathcal{M}_{\varkappa }$ can
			be expressed as}
		\begin{equation}
		\mathcal{M}_{\varkappa }\mathcal{=}\sum_{l=0}^{\infty }\frac{\left(
			-1\right) ^{l}\Gamma \left( 1+l-v\right) }{l!\Gamma \left( 1-l+v\right) }%
		\left( \frac{\varrho _{\varkappa }}{\delta _{n}}\right) ^{l}.
		\label{M-residu}
		\end{equation}%
		Specifically, for $\varrho _{\varkappa }=0$, $\left( \ref{M-residu}\right) $
		is reduced to
		\begin{equation}
		\mathcal{M}_{\varkappa }\mathcal{=}\frac{\Gamma \left( 1-v\right) }{\Gamma
			\left( 1+v\right) }.  \label{M-res2}
		\end{equation}%
		\textcolor{black}{As a result, $\left( \ref{Integrall-n}\right)$ can be
			reexpressed by incorporating $\left( \ref{M-res2}\right)$ as}
		\begin{eqnarray}
		\mathcal{I}_{n}^{\left( \varkappa \right) } &=&1-\frac{1}{2\pi j}\sum_{\ell
			=1}^{L}\frac{\psi _{\ell }}{\phi _{\ell }}\int_{\mathcal{L}_{v}^{\left( \ell
				\right) }}\frac{\Gamma \left( v\right) \prod\limits_{i=1}^{m}\Gamma \left(
			\mathcal{B}_{i}^{\left( \ell \right) }+B_{i}^{\left( \ell \right) }v\right)
		}{\Gamma \left( 1+v\right) \prod\limits_{i=n+1}^{p}\Gamma \left( \mathcal{A}%
			_{i}^{\left( \ell \right) }+A_{i}^{\left( \ell \right) }v\right) }  \notag \\
		&&\times \frac{\Gamma \left( 1-v\right) \prod\limits_{i=1}^{n}\Gamma \left(
			1-\mathcal{A}_{i}^{\left( \ell \right) }-A_{i}^{\left( \ell \right)
			}v\right) }{\prod\limits_{i=m+1}^{q}\Gamma \left( 1-\mathcal{B}_{i}^{\left(
				\ell \right) }-B_{i}^{\left( \ell \right) }v\right) }\left( \frac{\phi
			_{\ell }}{\delta _{n}}\right) ^{-v}dv.
		\end{eqnarray}%
		\textcolor{black}{Thus, the proof is concluded.}
	\end{flushleft}
	
	\section*{Appendix E: Proof of proposition \protect\ref{propasympAsep}}
	
	\begin{flushleft}
		By plugging $\left( \ref{ApproFHF}\right) $ in $\left( \ref{EqIntegral-n}%
		\right) $ and using \cite[Eq. (2.9.4)]{kilbas}, yields
		\textcolor{black}{the
			following approximate expression in the high SNR regime}
		\begin{eqnarray}
		\mathcal{I}_{n}^{\left( \varkappa \right) } &\sim &1-\sum_{\ell =1}^{L}\frac{%
			\psi _{\ell }}{\phi _{\ell }}E_{\ell }\left[ 1-\exp \left( -\frac{\delta
			_{n}\omega _{\varkappa }}{\varrho _{\varkappa }}\right) \right] +\delta
		_{n}\sum_{\ell =1}^{L}\frac{\psi _{\ell }}{\phi _{\ell }}\sum_{i=1}^{m}%
		\mathcal{F}_{\ell ,i}\left( \frac{\phi _{\ell }\tau _{\varkappa }}{\omega
			_{\varkappa }}\right) ^{\frac{\mathcal{B}_{i}^{\left( \ell \right) }}{%
				B_{i}^{\left( \ell \right) }}}  \notag \\
		&&\times \frac{1}{2\pi j}\int_{\mathcal{L}_{s}^{\left( \ell \right) }}\Gamma
		\left( s\right) \left( \delta _{n}\right) ^{-s}\int_{0}^{\frac{\omega
				_{\varkappa }}{\varrho _{\varkappa }}}\gamma ^{\frac{\mathcal{B}_{i}^{\left(
					\ell \right) }}{B_{i}^{\left( \ell \right) }}-s}\left( 1-\frac{\varrho
			_{\varkappa }}{\omega _{\varkappa }}\gamma \right) ^{\frac{\mathcal{B}%
				_{i}^{\left( \ell \right) }}{B_{i}^{\left( \ell \right) }}}d\gamma ds.\label{HH}
		\end{eqnarray}
		Now, using $\left( \ref{Hffox}\right) $, \cite[Eq. (2.9.15)]{kilbas}, and
		\cite[Eq. (1.3.194)]{integraltable}, one obtains
		\begin{eqnarray}
		\mathcal{I}_{n}^{\left( \varkappa \right) } &\sim &\exp \left( -\frac{\delta
			_{n}\omega _{\varkappa }}{\varrho _{\varkappa }}\right) +\delta
		_{n}\sum_{\ell =1}^{L}\frac{\psi _{\ell }\omega _{\varkappa }}{\phi _{\ell
			}\varrho _{\varkappa }}\sum_{i=1}^{m}\frac{\mathcal{F}_{\ell ,i}\left( \frac{%
				\phi _{\ell }\tau _{\varkappa }}{\omega _{\varkappa }}\right) ^{\frac{%
					\mathcal{B}_{i}^{\left( \ell \right) }}{B_{i}^{\left( \ell \right) }}}}{%
			\Gamma \left( \frac{\mathcal{B}_{i}}{B_{i}}\right) }\left( \frac{1}{2\pi j}%
		\right) ^{2}\int_{\mathcal{L}_{s}^{\left( \ell \right) }}\int_{\mathcal{L}%
			_{v}^{\left( \ell \right) }}\Gamma \left( s\right) \Gamma \left( v\right)
		\notag \\
		&&\times \frac{\Gamma \left( 1+\frac{\mathcal{B}_{i}^{\left( \ell \right) }}{%
				B_{i}^{\left( \ell \right) }}-s-v\right) }{\Gamma \left( 2+\frac{\mathcal{B}%
				_{i}^{\left( \ell \right) }}{B_{i}^{\left( \ell \right) }}-s-v\right) }%
		\Gamma \left( \frac{\mathcal{B}_{i}^{\left( \ell \right) }}{B_{i}^{\left(
				\ell \right) }}-v\right) \left( \frac{\delta _{n}\omega _{\varkappa }}{%
			\varrho _{\varkappa }}\right) ^{-s}\left( -1\right) ^{-v}dsdv.\label{RRR}
		\end{eqnarray}%
		Finally, substituting (\ref{RRR}) into (\ref{Psaprox}), yields (\ref%
		{AsymptoASEP}), which concludes the proof.
	\end{flushleft}
	
\end{document}